\documentclass{article}

\usepackage{arxiv}
\usepackage[utf8]{inputenc}
\usepackage[T1]{fontenc}
\usepackage{amsmath,amssymb,amsthm,mathtools}
\usepackage{bbm}
\usepackage{booktabs}
\usepackage{enumitem}
\usepackage{microtype}
\usepackage[colorlinks=true,linkcolor=blue,citecolor=blue,urlcolor=blue]{hyperref}

\title{Nonlinear Two-Time-Scale Stochastic Approximation: A Sharp Phase Transition and How to Beat It}

\author{Dhruv Sarkar$^{1,2}$  \quad Vaneet Aggarwal$^3$\\
$^1$Indian Institute of Technology Kharagpur\\
$^2$Mohamed bin Zayed University of Artificial Intelligence\\
$^3$Purdue University\\
\texttt{dhruv.sarkar@gmail.com}\quad  \texttt{vaneet@purdue.edu}}

\newtheorem{theorem}{Theorem}[section]
\newtheorem{lemma}[theorem]{Lemma}
\newtheorem{corollary}[theorem]{Corollary}
\newtheorem{proposition}[theorem]{Proposition}
\newtheorem{assumption}[theorem]{Assumption}
\newtheorem{remark}[theorem]{Remark}

\newcommand{\R}{\mathbb{R}}
\newcommand{\E}{\mathbb{E}}

\newcommand{\norm}[1]{\left\lVert #1\right\rVert}
\newcommand{\ip}[2]{\left\langle #1,#2\right\rangle}

\begin{document}
\maketitle

\begin{abstract}
Recent finite-time analyses of nonlinear two-time-scale stochastic approximation show that under contractive assumptions the slow iterate $Y_k$ with stepsizes $\beta_k=\Theta(k^{-1})$ and $\alpha_k=\Theta(k^{-a})$, $a\in(1/2,1)$, generally satisfies a mean-square rate of order $k^{-a}$; decoupled $k^{-1}$ rates require strong local linearity.  
We identify a sharp regularity-dependent boundary. In a rate-determining normal form where the slow drift contains a locally linear leakage and a nonlinear remainder of order $1+\rho$ ($\rho\in[0,1]$), the uncorrected recursion satisfies
\[
\mathbb{E}\|Y_k\|^2 \le C\bigl(k^{-1}+k^{-a(1+\rho)}\bigr),
\]
and a matching scalar Gaussian lower bound shows that the slower term is unavoidable without modifying the update. Thus the decoupled $k^{-1}$ rate is guaranteed for the uncorrected recursion exactly when $a(1+\rho)\ge 1$.  
This lower bound concerns only the naive update; it is not an information-theoretic obstruction. We demonstrate this by equipping the normal-form recursion with an auxiliary online bias estimator
\[
M_{k+1}=M_k+\gamma_k(R(X_k)-M_k),\qquad \beta_k\ll\gamma_k\ll\alpha_k,
\]
and subtracting $M_k$ from the slow update. Under the same stability, moment, and remainder assumptions, the corrected recursion achieves $\mathbb{E}\|\widetilde Y_k\|^2=O(k^{-1})$ for every $\rho\in[0,1]$, including regimes where the uncorrected update provably suffers the slower rate.  
Finally, we prove localized transfer theorems that extend the phase-transition mechanism to general nonlinear TTSA in fast-manifold coordinates. The proofs are non-asymptotic and rely on two Abel-transform cancellations: one for the locally linear fast-error leakage, and one for the tracked nonlinear bias.
\end{abstract}

\keywords{stochastic approximation \and two-time-scale stochastic approximation \and finite-time analysis \and local linearity \and lower bounds \and bias tracking}

\section{Introduction}

Two-time-scale stochastic approximation (TTSA) studies coupled recursions in which a fast iterate is updated with stepsize $\alpha_k$ and a slow iterate is updated with a smaller stepsize $\beta_k$. In its fixed-point form, the recursion is often written as
\begin{align*}
  x_{k+1} &= x_k + \alpha_k\big(f(x_k,y_k)-x_k+M_{k+1}\big),\\
  y_{k+1} &= y_k + \beta_k\big(g(x_k,y_k)-y_k+M'_{k+1}\big),
\end{align*}
where $\beta_k/\alpha_k\to0$ in the truly two-time-scale regime. TTSA is a central tool in reinforcement learning, stochastic optimization, stochastic control, saddle-point learning, and game control; see, for example, the classical monographs and early analyses \cite{robbins1951,kushner2003,borkar2008,konda2004}.

For linear TTSA, the slow iterate may enjoy a decoupled $O(k^{-1})$ mean-square rate under $\beta_k=\Theta(k^{-1})$ even when $\alpha_k=\Theta(k^{-a})$ with $a<1$ \cite{konda2004,kaledin2020,haque2024}. For nonlinear contractive TTSA the situation is more delicate. Doan \cite{doan2023} established finite-time guarantees for nonlinear TTSA, and a recent averaged-noise analysis of Chandak \cite{chandak20261} improved the truly separated nonlinear rate to $O(k^{-a})$ under standard contractive assumptions.  Meanwhile, Han, Li, and Zhang \cite{han2024} proved that the decoupled $O(k^{-1})$ rate is recovered under \emph{nested local linearity} assumptions, and provided empirical evidence that local linearity may be necessary. Chandak \cite{chandak20261} explicitly posed it as an open problem whether $O(k^{-1})$ is possible for nonlinear TTSA \emph{without additional assumptions}, conjecturing that $O(k^{-a})$ is the best achievable rate in general.

%Chandak \cite{chandak20261} argued that their results were possibly optimal because Han, Li, and Zhang \cite{han2024} showed that decoupled rates can be recovered under nested local linearity assumptions and supplied empirical evidence that local linearity may be necessary for decoupling. However, a formal theoretical guarantee for the same is missing. 

This leaves a structural question:
\begin{quote}
\emph{What feature of the slow drift determines whether the slow iterate converges at the decoupled $k^{-1}$ rate or at the fast-tracking rate $k^{-a}$?}
\end{quote}
It also raises an algorithmic open problem: if the slow rate is limited by nonlinear bias, can a change in the update rule remove that bias and recover the decoupled rate without adding simulator access or oracle debiasing?

We answer both questions using a \textbf{local normal-form} model that isolates the rate-determining terms from the otherwise involved mechanics of the general TTSA recursion. Let $X_k$ denote the fast tracking error and $Y_k$ the slow error. After linearizing the fast dynamics and expanding the slow drift in the fast error, the leading model is
\begin{align}
  X_{k+1} &= (I-\alpha_k A)X_k+\alpha_k\xi_{k+1}, \label{eq:intro-fast}\\
  Y_{k+1} &= (I-\beta_k B)Y_k+
    \beta_k\big(HX_k+R(X_k)+\zeta_{k+1}\big). \label{eq:intro-slow}
\end{align}
Here $HX_k$ is the locally linear leakage of fast error into the slow recursion, and $R(X_k)$ is the genuinely nonlinear remainder. We assume
\[
  \norm{R(x)} \le L_R\norm{x}^{1+\rho},\qquad \rho\in[0,1],
\]
where $\rho$ measures local regularity: $\rho=0$ is a Lipschitz remainder, $\rho=1$ a second‑order remainder.

%The parameter $\rho$ measures the local regularity of the nonlinear remainder: $\rho=0$ corresponds to a merely Lipschitz remainder, while $\rho=1$ corresponds to a second-order remainder.

Prior work \cite{chandak20261,han2024} suggested that decoupled $O(k^{-1})$ slow convergence requires local linearity, and that without local linearity the best possible rate is $O(k^{-a})$. Our results show that this binary picture is incomplete. The determining quantity is the local regularity and predictable bias of the nonlinear remainder. For an order-$(1+\rho)$ remainder, the sharp uncorrected rate is $O(k^{-1} + k^{-a(1+\rho)})$, with a matching lower bound. Thus local linearity is sufficient but not necessary: sufficiently regular nonlinear remainders still recover the decoupled rate.

\paragraph{Sharp rate for the uncorrected recursion.}
For $\alpha_k=\alpha_0(k+k_0)^{-a}$ and $\beta_k=\beta_0(k+k_0)^{-1}$, $a\in(1/2,1)$, our first main theorem proves
\begin{equation}
  \E\norm{Y_k}^2 \le C\left((k+k_0)^{-1}+(k+k_0)^{-a(1+\rho)}\right). \label{eq:intro-rate}
\end{equation}
The term $(k+k_0)^{-1}$ is the usual stochastic approximation noise floor for the slow recursion. The term $(k+k_0)^{-a(1+\rho)}$ is the squared nonlinear bias induced by the fast tracking fluctuations. Thus the uncorrected slow recursion exhibits the phase transition
\[
  \E\norm{Y_k}^2 =
  \begin{cases}
    O(k^{-a(1+\rho)}), & a(1+\rho)<1,\\
    O(k^{-1}), & a(1+\rho)\ge 1.
  \end{cases}
\]

The following table summarizes the regimes.
\begin{table}[h]
\centering
\begin{tabular}{lll}
\toprule
Slow-drift structure in fast error & Regularity & Uniform rate for uncorrected recursion \\
\midrule
Exactly locally linear, $R\equiv0$ & no nonlinear bias & $O(k^{-1})$ \\
Merely Lipschitz remainder & $\rho=0$ & $O(k^{-a})$ \\
$C^{1,\rho}$-type remainder & $\rho\in(0,1)$ & $O(k^{-\min\{1,a(1+\rho)\}})$ \\
Second-order Taylor remainder & $\rho=1$ & $O(k^{-1})$ for all $a>1/2$ \\
\bottomrule
\end{tabular}
\caption{Regularity regimes for the uncorrected normal-form recursion. The decoupled slow rate is guaranteed uniformly exactly when $a(1+\rho)\ge1$.}
\label{tab:regularity-regimes}
\end{table}

\paragraph{Why the linear leakage does not hurt.}
The fast variable has $\E\norm{X_k}^2=\Theta(\alpha_k)=\Theta(k^{-a})$. A direct input-to-state bound on the term $HX_k$ would therefore lose the desired $k^{-1}$ rate. The key observation, also underlying averaged-noise analyses of nonlinear TTSA, is that $X_k$ itself satisfies a stable stochastic approximation recursion. Consequently,
\[
  AX_k = \alpha_k^{-1}(X_k-X_{k+1})+\xi_{k+1}.
\]
After substituting this identity into the slow convolution, the contribution of $HX_k$ splits into a martingale term with variance $O(k^{-1})$ and a telescoping Abel-transform term that is also $O(k^{-1})$. This is the precise mechanism by which local linearity restores decoupling.

\paragraph{Why the nonlinear remainder is rate-limiting.}
In contrast, if $R(x)$ is of order $\norm{x}^{1+\rho}$, then
\[
  \E\norm{R(X_k)} \asymp \alpha_k^{(1+\rho)/2}\asymp k^{-a(1+\rho)/2}
\]
in Gaussian normal-form examples. This contribution is not a martingale fluctuation and therefore does not average down to $k^{-1}$; it appears as a slowly varying bias in the slow recursion. We prove that this obstruction is not an artifact of the proof by constructing a one-dimensional Gaussian instance for which
\begin{equation}
  \E[Y_k^2] \ge c_1 k^{-a(1+\rho)}+c_2 k^{-1}. \label{eq:intro-lower}
\end{equation}
The lower bound uses $H=0$, so it is caused solely by nonlinear curvature, not by linear leakage. It is a lower bound for the uncorrected update rule \eqref{eq:intro-slow}; it does not preclude algorithms that estimate and subtract the predictable nonlinear bias.

\paragraph{Online bias tracking resolves the algorithmic question in the normal form.}
The lower bound identifies what must be removed. We introduce an auxiliary bias tracker
\begin{equation}
  M_{k+1}=M_k+\gamma_k\big(R(X_k)-M_k\big),
  \qquad
  \gamma_k=\gamma_0(k+k_0)^{-c}, \label{eq:intro-bias-tracker}
\end{equation}
with an intermediate time scale $\beta_k\ll\gamma_k\ll\alpha_k$. The corrected slow update is
\begin{equation}
  \widetilde Y_{k+1}
  =(I-\beta_kB)\widetilde Y_k
  +\beta_k\big(HX_k+R(X_k)-M_k+\zeta_{k+1}\big). \label{eq:intro-debiased}
\end{equation}
This is a fully online correction for the normal-form recursion: it uses only the observed nonlinear term $R(X_k)$, adds no inner fast loops, and assumes no simulator or oracle access. The theorem proves
\[
  \E\norm{\widetilde Y_k}^2\le Ck^{-1}
\]
for every $\rho\in[0,1]$, including the regimes $a(1+\rho)<1$ where the uncorrected recursion has the lower bound \eqref{eq:intro-lower}. The proof uses a second Abel transform. Since
\[
  R(X_k)-M_k=\gamma_k^{-1}(M_{k+1}-M_k),
\]
the slow convolution of the corrected nonlinear input telescopes with coefficient $\beta_k/\gamma_k$; the intermediate time scale is chosen so that the resulting boundary and variation terms are below the $k^{-1}$ noise floor.

\paragraph{Technical novelty.}
The paper has two related technical messages. First, the same Abel-transform idea that averages locally linear fast-error leakage also gives a sharp diagnostic for nonlinear curvature: only the predictable part of the nonlinear input survives slow averaging. Second, once this predictable part is explicitly tracked on an intermediate time scale, it can be removed by another Abel-transform cancellation. This proves that the lower bound is an obstruction for the raw recursion, not an information-theoretic limitation of the normal-form problem class.

\paragraph{Contributions.}
The paper makes five technical contributions.
\begin{enumerate}[leftmargin=*]
\item We formulate a normal-form TTSA model that separates the locally linear fast-error leakage from the genuinely nonlinear slow-drift remainder.
\item We prove a finite-time upper bound with the regularity-dependent rate $O(k^{-1}+k^{-a(1+\rho)})$. The proof is non-asymptotic and relies on an Abel-transform form of averaged-noise cancellation.
\item We prove a matching lower bound in a scalar Gaussian subclass. Hence, for the uncorrected normal-form recursion, the exponent $\min\{1,a(1+\rho)\}$ cannot be improved uniformly.
\item We introduce an online three-time-scale bias-tracking recursion that subtracts a running estimate of the nonlinear predictable bias. Under the same normal-form stability, moment, and regularity assumptions, the corrected slow iterate satisfies $\E\norm{\widetilde Y_k}^2=O(k^{-1})$ for all $\rho\in[0,1]$.
\item We transfer the normal-form phase transition to nonlinear TTSA and clarify its mechanism. A centering/debiasing result shows that nonlinear fluctuations of the same instantaneous size recover the decoupled $k^{-1}$ rate when they are martingale-centered. Localized transfer and verification theorems in fast-manifold coordinates then show that the same phase transition persists under moving fast equilibria, induced slow-noise perturbations, variable local linearizations, and both separable and mixed higher-order Taylor terms, whenever the required finite-time localized moment estimates hold.
\end{enumerate}

\section{Local normal form}

We work on a filtered probability space $(\Omega,\mathcal F,\{\mathcal F_k\}_{k\ge0},\mathbb P)$. All random variables below are assumed to be adapted to this filtration. For a positive integer $d$, $\norm{\cdot}$ denotes the Euclidean norm on $\R^d$ and the induced operator norm on matrices.

Fix $a\in(1/2,1)$ and define
\begin{equation}
  \alpha_k = \frac{\alpha_0}{(k+k_0)^a},\qquad
  \beta_k = \frac{\beta_0}{k+k_0},\qquad k\ge0, \label{eq:stepsizes}
\end{equation}
where $\alpha_0,\beta_0>0$. The offset $k_0\ge2$ is chosen once and for all large enough so that all contraction inequalities below hold and so that every small-step comparison used in the proofs is valid. In particular, we require $\lambda\alpha_k\le1/4$, $b\beta_k\le1/2$, $2a/\bar k\le(\lambda/2)\alpha_k$, and the fourth-moment comparison \eqref{eq:alpha2-compare}. Such a finite choice exists because $a<1$ and $\bar k^{-1}=o(\alpha_k)$. If a smaller offset is used, the same rates hold after a finite burn-in. We write
\[
  \bar k \coloneqq k+k_0.
\]

The normal form is the coupled recursion
\begin{align}
  X_{k+1} &= (I-\alpha_k A)X_k+\alpha_k\xi_{k+1}, \label{eq:fast}\\
  Y_{k+1} &= (I-\beta_k B)Y_k+
             \beta_k\big(HX_k+R(X_k)+\zeta_{k+1}\big), \label{eq:slow}
\end{align}
where $X_k\in\R^{d_x}$ is the fast tracking error and $Y_k\in\R^{d_y}$ is the slow error. The matrices $A\in\R^{d_x\times d_x}$ and $B\in\R^{d_y\times d_y}$ describe the local stable linearizations of the fast and slow dynamics. The matrix $H\in\R^{d_y\times d_x}$ is the first-order dependence of the slow drift on the fast tracking error. The map $R:\R^{d_x}\to\R^{d_y}$ is the nonlinear remainder.

\begin{assumption}[Stability and noise]\label{ass:stability}
There exist constants $\lambda,b>0$ such that, for all $k\ge0$,
\begin{equation}
  \norm{I-\alpha_k A}\le 1-\lambda\alpha_k,
  \qquad
  \norm{I-\beta_k B}\le 1-b\beta_k. \label{eq:matrix-contraction}
\end{equation}
The matrix $A$ is invertible. The noise sequences are martingale differences:
\begin{equation}
  \E[\xi_{k+1}\mid\mathcal F_k]=0,
  \qquad
  \E[\zeta_{k+1}\mid\mathcal F_k]=0. \label{eq:mds}
\end{equation}
Moreover, for a finite constant $\sigma_4$,
\begin{equation}
  \E[\norm{\xi_{k+1}}^4\mid\mathcal F_k]\le \sigma_4^4,
  \qquad
  \E[\norm{\zeta_{k+1}}^2\mid\mathcal F_k]\le \sigma_4^2. \label{eq:moment-noise}
\end{equation}
The initial conditions $X_0,Y_0$ are deterministic.
\end{assumption}

\begin{assumption}[Remainder regularity]\label{ass:remainder}
For some $\rho\in[0,1]$ and $L_R<\infty$,
\begin{equation}
  \norm{R(x)}\le L_R\norm{x}^{1+\rho},\qquad x\in\R^{d_x}. \label{eq:remainder}
\end{equation}
\end{assumption}

The quantity
\begin{equation}
  r\coloneqq b\beta_0,
  \qquad
  s\coloneqq a(1+\rho) \label{eq:rs-def}
\end{equation}
will appear throughout the paper. The exponent $r$ is the polynomial strength of the slow linear contraction, while $s$ is the exponent generated by the nonlinear remainder.

\begin{assumption}[Slow contraction strength]\label{ass:strength}
The slow contraction exponent satisfies
\begin{equation}
  r>\frac12. \label{eq:r-condition}
\end{equation}
\end{assumption}

This is the standard finite-time requirement that the linear part of the slow recursion be strong enough to wash out the initial condition and to make the slow stochastic convolution square summable at rate $k^{-1}$. No additional condition involving the nonlinear exponent $s=a(1+\rho)$ is needed. When $s>1$, the nonlinear convolution need only be controlled at the $k^{-1}$ noise floor; the more flexible weighted-convolution estimate in the appendix handles this regime.

\begin{remark}[Relation to nonlinear TTSA]
For a nonlinear TTSA recursion, set $X_k=x_k-x^*(y_k)$ and $Y_k=y_k-y^*$. A local expansion of the slow drift around $(x^*(y),y)$ has the form
\[
  g(x,y)-y^* = g(x^*(y),y)-y^* + H(y)(x-x^*(y)) + R(x-x^*(y),y).
\]
The normal form \eqref{eq:fast}--\eqref{eq:slow} freezes the stable linear parts at the fixed point and isolates the two fast-error mechanisms that matter for rates: the locally linear term and the nonlinear remainder. The normal-form theorem is a sharp analysis of the leading local mechanism. Section~4 gives perturbation conditions under which this mechanism transfers to nonlinear TTSA in fast-manifold coordinates, while global stability or localization is handled separately.
\end{remark}

\section{Main results}

\begin{theorem}[Regularity-dependent upper bound]\label{thm:upper}
Suppose Assumptions \ref{ass:stability}, \ref{ass:remainder}, and \ref{ass:strength} hold. Then there exists a finite constant $C$ depending only on
\[
  A,B,H,L_R,\lambda,b,\alpha_0,\beta_0,k_0,a,\rho,\sigma_4,X_0,Y_0
\]
such that, for every $k\ge1$,
\begin{equation}
  \E\norm{Y_k}^2
  \le C\left(\bar k^{-1}+\bar k^{-s}\right),
  \qquad s=a(1+\rho). \label{eq:upper-bound}
\end{equation}
Moreover,
\begin{equation}
  \E\norm{X_k}^2\le C\alpha_k,
  \qquad
  \E\norm{X_k}^{2+2\rho}\le C\alpha_k^{1+\rho}. \label{eq:fast-moments-main}
\end{equation}
\end{theorem}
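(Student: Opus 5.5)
We first prove the fast moment bounds \eqref{eq:fast-moments-main}, then unroll the slow recursion and treat its four inputs separately: initial condition, slow noise, linear leakage, and nonlinear remainder.

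\textbf{Fast moments.} For the second moment, expand $\norm{X_{k+1}}^2$ using \eqref{eq:fast}. By \eqref{eq:matrix-contraction} and the martingale property, $\E\norm{X_{k+1}}^2\le(1-\lambda\alpha_k)^2\E\norm{X_k}^2+\alpha_k^2\sigma_4^2$. A standard induction, using $2a/\bar k\le(\lambda/2)\alpha_k$ to absorb the ratio $\alpha_k/\alpha_{k+1}$, gives $\E\norm{X_k}^2\le C\alpha_k$. For the fourth moment, expand $\norm{(I-\alpha_kA)X_k+\alpha_k\xi_{k+1}}^4$. The cross term linear in $\xi$ vanishes, and the remaining terms are bounded by $\alpha_k^2\norm{X_k}^2\norm{\xi}^2$, $\alpha_k^3\norm{X_k}\norm{\xi}^3$ and $\alpha_k^4\norm{\xi}^4$. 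Young's inequality together with the second-moment bound yields $\E\norm{X_{k+1}}^4\le(1-\lambda\alpha_k)\E\norm{X_k}^4+C\alpha_k^3$. The same induction, via the comparison \eqref{eq:alpha2-compare}, then gives $\E\norm{X_k}^4\le C\alpha_k^2$. Since $2+2\rho\le4$, Lyapunov/Jensen gives $\E\norm{X_k}^{2+2\rho}\le(\E\norm{X_k}^4)^{(1+\rho)/2}\le C\alpha_k^{1+\rho}$.

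\textbf{Slow decomposition.} Write $\Phi_{k,j}=\prod_{i=j+1}^{k-1}(I-\beta_iB)$. By \eqref{eq:matrix-contraction},
\[
\norm{\Phi_{k,j}}\le\prod_{i=j+1}^{k-1}(1-b\beta_i)\le C(\bar j/\bar k)^{r}.
\]
Unrolling \eqref{eq:slow} gives $Y_k=\Phi_{k,-1}Y_0+T_1+T_2+T_3$, where $T_1=\sum_j\Phi_{k,j}\beta_j\zeta_{j+1}$, $T_2=\sum_j\Phi_{k,j}\beta_jHX_j$ and $T_3=\sum_j\Phi_{k,j}\beta_jR(X_j)$. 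The four pieces are bounded as follows.
\begin{itemize}
\item The initial term is $O(\bar k^{-r})=o(\bar k^{-1/2})$ because $r>1/2$.
\item For $T_1$, orthogonality of martingale increments gives
\[
\E\norm{T_1}^2\le C\sum_j(\bar j/\bar k)^{2r}\bar j^{-2}\le C\bar k^{-1},
\]
again using $r>1/2$.
\item For $T_3$, Minkowski's inequality in $L^2$ gives
\[
\norm{T_3}_{L^2}\le C\sum_j(\bar j/\bar k)^r\bar j^{-1}\,\bigl(\E\norm{R(X_j)}^2\bigr)^{1/2}.
\]
The fast moments bound $\E\norm{R(X_j)}^2$ by $L_R^2\,\E\norm{X_j}^{2+2\rho}\le C\bar j^{-s}$. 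Hence $\norm{T_3}_{L^2}\le C\bar k^{-r}\sum_j\bar j^{r-1-s/2}$, which is $O(\bar k^{-s/2})$ if $s/2<r$ and $O(\bar k^{-r}\log\bar k)$ or $O(\bar k^{-r})$ otherwise. Squaring, every case is bounded by $C(\bar k^{-1}+\bar k^{-s})$, using $r>1/2$ when $s/2\ge r$. This is the weighted-convolution estimate mentioned after Assumption~\ref{ass:strength}.
\end{itemize}

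\textbf{Linear leakage (main obstacle).} A direct bound on $T_2$ only gives $\bar k^{-a}$, so we use the Abel transform. Substitute
\[
X_j=A^{-1}\bigl(\alpha_j^{-1}(X_j-X_{j+1})+\xi_{j+1}\bigr).
\]
This splits $T_2$ into a martingale part $\sum_j\Phi_{k,j}\beta_jHA^{-1}\xi_{j+1}$, which is $O(\bar k^{-1})$ in mean square exactly as for $T_1$, and a summation-by-parts part with weights $w_j=\Phi_{k,j}\beta_j\alpha_j^{-1}HA^{-1}$. Summation by parts produces two kinds of terms.
\begin{itemize}
\item Boundary terms $w_0X_0$ and $w_{k-1}X_k$. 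These have size $\beta_{k}\alpha_k^{-1}\norm{X_k}$, whose mean square is $\bar k^{-2(1-a)}\bar k^{-a}=\bar k^{a-2}\le\bar k^{-1}$. The $w_0X_0$ term decays like $\bar k^{-r}$.
\item Variation terms $\sum_j(w_j-w_{j-1})X_j$. The key estimate is
\[
\norm{w_j-w_{j-1}}\le C(\bar j/\bar k)^r\,\bar j^{-2+a},
\]
which holds because $\Phi_{k,j}-\Phi_{k,j-1}=\Phi_{k,j}\beta_jB$ and because $\beta_j/\alpha_j$ varies smoothly. Minkowski's inequality then gives
\[
C\bar k^{-r}\sum_j\bar j^{\,r-2+a-a/2}=C\bar k^{-r}\sum_j\bar j^{\,r-2+a/2},
\]
which is $O(\bar k^{a/2-1})$ when $r>1-a/2$ and $O(\bar k^{-r}\log\bar k)$ or $O(\bar k^{-r})$ otherwise.
\end{itemize}
Squaring the variation bound gives $\bar k^{a-2}\le\bar k^{-1}$ in the first case and at most $C\bar k^{-1}$ in the others, since $r>1/2$.

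Care is needed with the index bookkeeping, with the product-ratio bound $\norm{\Phi_{k,j}}\le C(\bar j/\bar k)^r$, and with the borderline logarithm when $r=1-a/2$. For the logarithm, we would handle it by slightly reducing $r$ to $r'\in(1/2,r)$ in the contraction bound, which removes the logarithm at no cost. Combining all pieces via $(\sum_{i=1}^{5} u_i)^2\le5\sum_{i=1}^{5} u_i^2$ gives \eqref{eq:upper-bound}.
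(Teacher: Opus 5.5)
Your proposal is correct and follows the paper's proof essentially step for step. You use the same Lyapunov inductions for the second and fourth fast moments followed by Lyapunov interpolation, the same unrolling of the slow recursion with a Minkowski/weighted-convolution bound for $R(X_j)$ and martingale orthogonality for $\zeta_{j+1}$, and the same Abel transform of $HX_j$ via $AX_j=\alpha_j^{-1}(X_j-X_{j+1})+\xi_{j+1}$, with identical boundary and coefficient-variation estimates. Your extra device of shrinking $r$ to remove the borderline logarithm is unnecessary, since $\bar k^{-2r}(\log\bar k)^2\le C\bar k^{-1}$ already holds for $r>1/2$, as you note yourself.
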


The proof is given in Appendix~\ref{app:upper}.

Theorem \ref{thm:upper} separates the two contributions to the slow error. The locally linear leakage $HX_k$ contributes only $O(\bar k^{-1})$. The nonlinear remainder contributes at most $O(\bar k^{-1}+\bar k^{-s})$; it can dominate the slow noise floor exactly when $s<1$.

\begin{corollary}[Phase transition]\label{cor:phase}
Under the assumptions of Theorem \ref{thm:upper},
\begin{equation}
  \E\norm{Y_k}^2 \le
  \begin{cases}
    C\bar k^{-a(1+\rho)}, & a(1+\rho)<1,\\[2mm]
    C\bar k^{-1}, & a(1+\rho)\ge1.
  \end{cases} \label{eq:phase-upper}
\end{equation}
In particular, the slow iterate has the decoupled $O(k^{-1})$ rate whenever
\begin{equation}
  \rho\ge \frac{1}{a}-1. \label{eq:threshold}
\end{equation}
For $\rho=0$, the theorem gives $\E\norm{Y_k}^2=O(k^{-a})$. For $\rho=1$ and $a>1/2$, it gives $\E\norm{Y_k}^2=O(k^{-1})$.
\end{corollary}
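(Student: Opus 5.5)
The corollary follows from Theorem~\ref{thm:upper} by comparing the two exponents in \eqref{eq:upper-bound}. No new probabilistic estimate is needed. We take the bound $\E\norm{Y_k}^2\le C(\bar k^{-1}+\bar k^{-s})$ with $s=a(1+\rho)$, valid for every $k\ge1$, and split into two cases according to the sign of $s-1$. Since $\bar k=k+k_0\ge k_0\ge2>1$, the map $t\mapsto\bar k^{-t}$ is decreasing in $t$. This monotonicity is the only fact used.

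\emph{Case $s<1$.} Here $\bar k^{-1}\le\bar k^{-s}$, so $\E\norm{Y_k}^2\le 2C\bar k^{-a(1+\rho)}$. This is the first line of \eqref{eq:phase-upper} after renaming $2C$ as $C$.

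\emph{Case $s\ge1$.} Here $\bar k^{-s}\le\bar k^{-1}$, so $\E\norm{Y_k}^2\le 2C\bar k^{-1}$, which is the second line. The renamed constant still depends only on the quantities listed in Theorem~\ref{thm:upper}.

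\emph{Threshold \eqref{eq:threshold}.} Since $a>0$, the condition $a(1+\rho)\ge1$ is equivalent to $1+\rho\ge 1/a$, that is, $\rho\ge 1/a-1$. So the decoupled case is exactly the region in \eqref{eq:threshold}. Because $\bar k$ and $k$ are comparable for $k\ge1$ (indeed $\bar k\ge k$, so $\bar k^{-t}\le k^{-t}$ for $t>0$), the bounds in $\bar k$ imply the stated $O(k^{-\cdot})$ forms.

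\emph{Special cases.} For $\rho=0$ we get $s=a<1$, so the first case gives $O(k^{-a})$. For $\rho=1$ we get $s=2a>1$ because $a>1/2$, so the second case gives $O(k^{-1})$.

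All of the analytic difficulty lies in Theorem~\ref{thm:upper}, which we take as given. The only point to watch here is that the threshold statement relies on $\bar k>1$, and this is guaranteed by $k_0\ge2$.
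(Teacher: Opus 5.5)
Your proposal is correct and matches the paper's argument. The paper gives no separate proof of the corollary because it follows directly from Theorem~\ref{thm:upper} by comparing $\bar k^{-1}$ with $\bar k^{-s}$ according to the sign of $s-1$, which is exactly how it handles the case $\rho=1$ in Proposition~\ref{prop:second-order}.
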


\begin{remark}[Uniform interpretation of the threshold]\label{rem:uniform-threshold}
The word ``threshold'' is used in a uniform worst-case sense over the normal-form class. Special instances below the threshold can converge faster, for example when $R\equiv0$ or when the nonlinear remainder has a symmetry that cancels its leading mean. The lower bound below shows that no rate better than $\bar k^{-\min\{1,a(1+\rho)\}}$ can be guaranteed uniformly from Assumptions~\ref{ass:stability}--\ref{ass:strength} alone.
\end{remark}

The next theorem shows that the exponent in Corollary~\ref{cor:phase} is intrinsic to the normal-form class.

\begin{theorem}[Matching scalar lower bound]\label{thm:lower}
Let $d_x=d_y=1$, $H=0$, and consider
\begin{align}
  X_{k+1} &= (1-\lambda\alpha_k)X_k+\alpha_k\xi_{k+1}, \label{eq:lower-fast}\\
  Y_{k+1} &= (1-b\beta_k)Y_k+\beta_k\big(\gamma |X_k|^{1+\rho}+\tau\eta_{k+1}\big), \label{eq:lower-slow}
\end{align}
where $X_0=Y_0=0$, $\lambda,b,\gamma>0$, $\tau\ge0$, and $\{\xi_k\}$ and $\{\eta_k\}$ are mutually independent i.i.d. standard Gaussian sequences. Choose $k_0$ large enough that $0<\lambda\alpha_k<1/4$ and $0<b\beta_k<1/2$ for every $k\ge0$. Then there exist constants $c>0$ and $k_1\ge1$, depending on the fixed parameters but not on $k$, such that for all $k\ge k_1$,
\begin{equation}
  \E[Y_k^2]\ge c\gamma^2\bar k^{-a(1+\rho)}+c\tau^2\bar k^{-1}. \label{eq:lower-bound}
\end{equation}
Consequently, the rate in Theorem \ref{thm:upper} is unimprovable uniformly over the normal-form class whenever $\gamma>0$ and, for the $k^{-1}$ noise floor, $\tau>0$. If in addition $b\beta_0>1/2$, the lower-bound instance belongs to the same slow-contraction regime as Theorem~\ref{thm:upper}.
\end{theorem}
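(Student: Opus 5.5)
The plan is to exploit linearity of the slow recursion in its inputs together with the independence of $\{\xi_k\}$ and $\{\eta_k\}$. Unrolling \eqref{eq:lower-slow} from $Y_0=0$ gives
\[
Y_k=\gamma\sum_{j=0}^{k-1}\beta_j\Pi_{j+1:k}|X_j|^{1+\rho}+\tau\sum_{j=0}^{k-1}\beta_j\Pi_{j+1:k}\eta_{j+1}=:\gamma U_k+\tau V_k,
\qquad
\Pi_{j+1:k}=\prod_{i=j+1}^{k-1}(1-b\beta_i).
\]
Here $U_k$ is $\sigma(\xi)$-measurable and $V_k$ is a centered function of $\eta$ independent of $U_k$. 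Hence the cross term vanishes, and
\[
\E[Y_k^2]=\gamma^2\E[U_k^2]+\tau^2\E[V_k^2]\ge \gamma^2(\E U_k)^2+\tau^2\E[V_k^2].
\]
It therefore suffices to lower-bound $\E U_k$ by $c\bar k^{-s/2}$ and $\E V_k^2$ by $c\bar k^{-1}$. All products $\Pi$ are positive since $b\beta_i<1/2$.

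\emph{Noise floor.} By orthogonality, $\E V_k^2=\sum_j\beta_j^2\Pi_{j+1:k}^2$. Using $\log(1-x)\ge -x-x^2$ for $x\le1/2$, we get $\Pi_{j+1:k}\ge c(\bar j/\bar k)^{r}$ with $r=b\beta_0$. Restricting the sum to $j\in[k/2,k-1]$ gives a lower bound of $c\sum_{j\ge k/2}\bar j^{-2}\ge c'\bar k^{-1}$ for $k\ge k_1$. This step requires no assumption on $r$.

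\emph{Nonlinear bias.} $X_k$ is a centered Gaussian variable, being a linear combination of the $\xi$'s, with variance $v_k=\sum_{j<k}\alpha_j^2\prod_{i=j+1}^{k-1}(1-\lambda\alpha_i)^2$. Then
\[
\E|X_j|^{1+\rho}=m_\rho v_j^{(1+\rho)/2},\qquad m_\rho=\E|N(0,1)|^{1+\rho}>0.
\]
The key step is the lower bound $v_j\ge c\alpha_j$ for $j\ge k_1/2$. To obtain it, restrict the sum to the last roughly $1/\alpha_j$ indices. Over that window, $\alpha_i\asymp\alpha_j$ because $a<1$, so $\bar i\asymp\bar j$ when $j-i\lesssim \bar j^{a}$. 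The products $\prod(1-\lambda\alpha_i)^2$ stay bounded below by $e^{-c}$ since $\lambda\alpha_i<1/4$. This gives $v_j\ge c\,\alpha_j^2\cdot\alpha_j^{-1}=c\alpha_j$.

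Alternatively, write the recursion $v_{j+1}=(1-\lambda\alpha_j)^2v_j+\alpha_j^2$. Then an induction shows $v_j\ge \alpha_j/(4\lambda)$ eventually, using $\alpha_{j+1}/\alpha_j=1-O(1/\bar j)$ and $\bar j^{-1}=o(\alpha_j)$. Starting from $X_0=0$, this can fail only for a finite burn-in, which is absorbed into $k_1$.

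With this bound, $\E U_k\ge c\sum_{j=k/2}^{k-1}\beta_j\Pi_{j+1:k}\bar j^{-s/2}$. On this range $\Pi_{j+1:k}\ge c2^{-r}$ and $\bar j^{-s/2}\ge \bar k^{-s/2}$, while $\sum_{j\ge k/2}\beta_j\ge c\log 2$. Hence $\E U_k\ge c\bar k^{-s/2}$ and $\gamma^2(\E U_k)^2\ge c\gamma^2\bar k^{-s}$. Combining the two pieces and taking the smaller of the constants yields \eqref{eq:lower-bound}.

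The final sentence of the theorem follows directly. When $b\beta_0>1/2$, the instance satisfies Assumption~\ref{ass:strength}, and Assumptions~\ref{ass:stability}--\ref{ass:remainder} hold with $A=\lambda$, $B=b$, $R(x)=\gamma|x|^{1+\rho}$ and Gaussian noise.

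The main obstacle is the uniform lower bound $v_j\ge c\alpha_j$, in particular making the window and burn-in argument clean near the start, where $X_0=0$ forces $v_0=0$. I would handle it with the recursion-based induction above, choosing $k_1$ so that after the burn-in $v_j$ has risen to the level $\alpha_j/(4\lambda)$. The remaining steps are routine comparisons of products with powers.
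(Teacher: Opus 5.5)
Your proposal is correct and follows the paper's proof essentially step for step. It uses the same independent split $Y_k=\gamma P_k+\tau N_k$ with Jensen on the bias part, the same restriction of both sums to $j\in[\lfloor k/2\rfloor,k)$ where the slow products are bounded below, and the same Gaussian variance lower bound $v_j\ge c\alpha_j$ from the last $\asymp 1/\alpha_j$ terms of the unrolled variance (the paper's window $m_k=\lfloor\delta/\alpha_k\rfloor$). Your alternative monotone induction also works, but you left the claim that the level is reached unjustified. Below the level $\alpha_j/(4\lambda)$ the variance grows by at least $\alpha_j^2/2$ per step, and $m\,\alpha_{j+m}\to\infty$ forces the crossing after a finite burn-in. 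After that, monotonicity of $v\mapsto(1-\lambda\alpha_j)^2v+\alpha_j^2$ and $\alpha_{j+1}\le\alpha_j$ keep $v_j$ above the level.
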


The proof is given in Appendix~\ref{app:lower}.

\begin{remark}[Interpretation]
The lower bound uses $H=0$. Thus the obstruction to $k^{-1}$ for $a(1+\rho)<1$ is not imperfect averaging of the locally linear term. It is the nonzero mean of the nonlinear remainder generated by the fast fluctuation scale $\norm{X_k}\approx\sqrt{\alpha_k}$. The statement is not that every nonlinear remainder creates this obstruction; odd or otherwise cancelling remainders may behave like centered noise. The result is a minimax sharpness statement.
\end{remark}

The next proposition makes the last sentence precise in a form that is useful for bias correction. It separates the size of a nonlinear fluctuation from its predictable component. A nonlinear term of size $\alpha_k^{(1+\rho)/2}$ causes the slower $k^{-a(1+\rho)}$ rate only when it has a persistent predictable part of that size.

\begin{proposition}[Centering and debiasing principle]\label{prop:centering}
Consider the same fast recursion \eqref{eq:fast}. Replace the slow recursion \eqref{eq:slow} by
\begin{equation}
  Y_{k+1}=(I-\beta_kB)Y_k+\beta_k\big(HX_k+m_k+\nu_{k+1}+\zeta_{k+1}\big), \label{eq:centered-slow}
\end{equation}
where $m_k$ is $\mathcal F_k$-measurable and $\nu_{k+1}$ is a martingale difference satisfying
\begin{equation}
  \E[\nu_{k+1}\mid \mathcal F_k]=0,
  \qquad
  \E[\norm{\nu_{k+1}}^2\mid \mathcal F_k]\le K_\nu\alpha_k^{1+\rho}. \label{eq:nu-assumption}
\end{equation}
Assume also that, for some $q>0$ and $K_m<\infty$,
\begin{equation}
  \big(\E\norm{m_k}^2\big)^{1/2}\le K_m\bar k^{-q}. \label{eq:m-assumption}
\end{equation}
Let
\begin{equation}
  \Psi_q(k)=
  \begin{cases}
    \bar k^{-q}, & q<r,\\
    \bar k^{-r}\log(\bar k), & q=r,\\
    \bar k^{-r}, & q>r.
  \end{cases} \label{eq:psi-q}
\end{equation}
Under Assumptions~\ref{ass:stability} and \ref{ass:strength}, there is a finite constant $C$ such that, for all $k\ge1$,
\begin{equation}
  \E\norm{Y_k}^2\le C\left(\bar k^{-1}+\Psi_q(k)^2\right). \label{eq:centering-rate}
\end{equation}
In particular, if $m_k\equiv0$, then $\E\norm{Y_k}^2\le C\bar k^{-1}$.
\end{proposition}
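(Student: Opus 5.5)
We unroll the slow recursion \eqref{eq:centered-slow} and treat each input separately. Write $\Phi_{k,j}=\prod_{i=j+1}^{k-1}(I-\beta_iB)$ for $j<k$, with $\Phi_{k,k-1}=I$. By Assumption~\ref{ass:stability},
\[
\norm{\Phi_{k,j}}\le\prod_{i=j+1}^{k-1}(1-b\beta_i)\le C\left(\frac{\bar j}{\bar k}\right)^{r},
\]
using $\sum_{i}\beta_i\ge\beta_0\log(\bar k/\bar j)-O(1)$. Then
\[
Y_k=\Phi_{k,-1}Y_0+\sum_{j<k}\Phi_{k,j}\beta_j\bigl(HX_j+m_j+\nu_{j+1}+\zeta_{j+1}\bigr),
\]
and we bound the second moment of each of the four pieces, plus the initial-condition piece, using $\E\norm{\sum_i Z_i}^2\le 5\sum_i\E\norm{Z_i}^2$. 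The initial term is $O(\bar k^{-2r})=O(\bar k^{-1})$ because $r>1/2$.

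The martingale pieces $\nu$ and $\zeta$ come first. By orthogonality of martingale differences, their contribution is $\sum_j\norm{\Phi_{k,j}}^2\beta_j^2\bigl(\sigma_4^2+K_\nu\alpha_j^{1+\rho}\bigr)$. Since $\alpha_j$ is bounded, this is at most $C\bar k^{-2r}\sum_j\bar j^{2r-2}$. Because $2r-2>-1$, the sum is $O(\bar k^{2r-1})$, and the contribution is $O(\bar k^{-1})$.

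For the predictable piece $m$, we use Minkowski's inequality in $L^2$:
\[
\Bigl(\E\Bigl\|\sum_j\Phi_{k,j}\beta_jm_j\Bigr\|^2\Bigr)^{1/2}\le C\bar k^{-r}\sum_{j<k}\bar j^{\,r-1-q}.
\]
The sum is $O(\bar k^{r-q})$ if $q<r$, $O(\log\bar k)$ if $q=r$, and $O(1)$ if $q>r$. This gives exactly $C\Psi_q(k)$, so squaring yields the $\Psi_q(k)^2$ term.

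The main obstacle is the linear leakage $HX_j$. A crude bound using $\E\norm{X_j}^2\le C\alpha_j$ (Theorem~\ref{thm:upper}, whose fast-moment part does not involve the slow recursion and so applies here) only gives $O(\bar k^{-a})$. Instead we use the Abel transform. Since $A$ is invertible, $X_j=A^{-1}\bigl(\alpha_j^{-1}(X_j-X_{j+1})+\xi_{j+1}\bigr)$. Substituting, we split $\sum_j\Phi_{k,j}\beta_jHX_j$ into two parts.
\begin{enumerate}[leftmargin=*]
\item The first part is the martingale term $\sum_j\Phi_{k,j}\beta_jHA^{-1}\xi_{j+1}$, which is $O(\bar k^{-1})$ in mean square exactly as above.
\item The second part is the telescoping sum $\sum_j w_j(X_j-X_{j+1})$ with $w_j=\Phi_{k,j}HA^{-1}\beta_j/\alpha_j$. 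Summation by parts turns it into boundary terms $w_0X_0-w_{k-1}X_k$ plus $\sum_j(w_j-w_{j-1})X_j$.
\end{enumerate}
For the boundary terms, $w_{k-1}=O(\bar k^{a-1})$, so
\[
\E\norm{w_{k-1}X_k}^2\le C\bar k^{2a-2}\alpha_k=O(\bar k^{a-2})=o(\bar k^{-1}),
\]
and the initial boundary term decays like $\bar k^{-2r}$. For the variation term, $\Phi_{k,j}-\Phi_{k,j-1}=\Phi_{k,j}\beta_jB$ and $\beta_j/\alpha_j-\beta_{j-1}/\alpha_{j-1}=O(\bar j^{a-2})$, so $\norm{w_j-w_{j-1}}\le C(\bar j/\bar k)^r\bar j^{\,a-2}$. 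Minkowski's inequality with $(\E\norm{X_j}^2)^{1/2}\le C\bar j^{-a/2}$ then gives the bound $C\bar k^{-r}\sum_j\bar j^{\,r+a/2-2}$. If $r+a/2>1$ this is $O(\bar k^{a/2-1})$, whose square is $\bar k^{a-2}=o(\bar k^{-1})$; otherwise it is $O(\bar k^{-r}\log\bar k)$, whose square is $o(\bar k^{-1})$ since $2r>1$. Collecting all terms gives \eqref{eq:centering-rate}, and taking $m_k\equiv0$ gives the final claim.
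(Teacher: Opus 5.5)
Your proof is correct and follows essentially the same route as the paper: unroll the slow recursion, then handle the initial term by contraction, the $\zeta$- and $\nu$-convolutions by martingale orthogonality, the predictable $m$-term by Minkowski to obtain $\Psi_q$, and the leakage $HX_j$ by the Abel transform (which the paper cites from Lemma~\ref{lem:linear-leakage} rather than re-deriving). The only difference is cosmetic: you bound the $\nu$-convolution using just $\alpha_j\le C$, whereas the paper runs a case analysis on $\alpha_j^{1+\rho}$. Your version is simpler and equally sufficient, since the decay of $\alpha_j^{1+\rho}$ is never needed for the $\bar k^{-1}$ floor.
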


\subsection{Online bias tracking in the normal form}
\label{subsec:online-bias-tracking}

Proposition~\ref{prop:centering} shows that a nonlinear input of the same instantaneous size as $R(X_k)$ is harmless when it is centered. We now give an online update that performs this centering in the normal-form model itself. The update does not require an independent simulator, a batch of fast loops, or an oracle for $\E[R(X_k)]$. It simply tracks the observed nonlinear term on an intermediate time scale.

Let
\begin{equation}
  \gamma_k=\frac{\gamma_0}{\bar k^c},
  \qquad \gamma_0>0, \label{eq:gamma-stepsize}
\end{equation}
where the exponent $c$ satisfies
\begin{equation}
  a<c<\min\left\{1,\frac{1+s}{2}\right\},
  \qquad s=a(1+\rho). \label{eq:c-condition}
\end{equation}
The interval is non-empty for every $a\in(1/2,1)$ and $\rho\in[0,1]$. Indeed, if $s\ge1$, then $a<1$ gives $a<\min\{1,(1+s)/2\}=1$; if $s<1$, then
\[
  \frac{1+s}{2}-a
  =\frac{1-a+a\rho}{2}>0.
\]
Increasing $k_0$ if necessary, assume also that $\gamma_k\le1$ for all $k$.

The bias-tracked normal-form recursion is
\begin{align}
  X_{k+1} &= (I-\alpha_k A)X_k+\alpha_k\xi_{k+1}, \label{eq:deb-fast}\\
  M_{k+1} &= (1-\gamma_k)M_k+\gamma_k R(X_k), \label{eq:bias-tracker}\\
  \widetilde Y_{k+1}
  &= (I-\beta_k B)\widetilde Y_k
     +\beta_k\big(HX_k+R(X_k)-M_k+\zeta_{k+1}\big). \label{eq:deb-slow}
\end{align}
The variable $M_k\in\R^{d_y}$ is a running estimate of the predictable nonlinear bias. The time-scale ordering
\[
  \frac{\beta_k}{\gamma_k}\to0,
  \qquad
  \frac{\gamma_k}{\alpha_k}\to0
\]
means that $M_k$ evolves faster than the slow recursion but slower than the fast tracking recursion.

\begin{theorem}[Online bias tracking averts the lower bound]
\label{thm:bias-tracking}
Suppose Assumptions~\ref{ass:stability}, \ref{ass:remainder}, and \ref{ass:strength} hold. Let $\gamma_k$ satisfy \eqref{eq:gamma-stepsize}--\eqref{eq:c-condition}, and let $(X_k,M_k,\widetilde Y_k)$ follow \eqref{eq:deb-fast}--\eqref{eq:deb-slow} with deterministic initial conditions. Then there is a finite constant $C$ such that, for every $k\ge1$,
\begin{equation}
  \E\norm{M_k}^2\le C\bar k^{-s},
  \qquad
  \E\norm{\widetilde Y_k}^2\le C\bar k^{-1}. \label{eq:bias-tracking-rate}
\end{equation}
Consequently, in the local normal form, the slower term $\bar k^{-a(1+\rho)}$ is unavoidable for the uncorrected recursion but is removed by the online bias-tracking update.
\end{theorem}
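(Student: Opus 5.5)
The plan is to split $\widetilde Y_k$ by linearity into three parts. The first is the part driven by $HX_k+\zeta_{k+1}$ together with the initial condition; it is exactly the $R\equiv0$ case of the normal form. The second is the part driven by $R(X_k)-M_k$. For the first part, Theorem~\ref{thm:upper} applied with $R\equiv0$ (equivalently Proposition~\ref{prop:centering} with $m_k\equiv0$, $\nu\equiv0$) already gives $O(\bar k^{-1})$, so all the work is on the nonlinear input. Write $\Pi_{k,j}=\prod_{i=j+1}^{k-1}(I-\beta_iB)$. The nonlinear contribution is then
\[
Z_k=\sum_{j<k}\Pi_{k,j}\beta_j\big(R(X_j)-M_j\big)=\sum_{j<k}\Pi_{k,j}\frac{\beta_j}{\gamma_j}(M_{j+1}-M_j).
\]

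\textbf{Step 1: the tracker bound.} Since $\gamma_k\le1$, \eqref{eq:bias-tracker} is a convex combination. By Minkowski,
\[
(\E\norm{M_{k+1}}^2)^{1/2}\le(1-\gamma_k)(\E\norm{M_k}^2)^{1/2}+\gamma_k L_R(\E\norm{X_k}^{2+2\rho})^{1/2}.
\]
Using \eqref{eq:fast-moments-main}, the last factor is $\le C\alpha_k^{(1+\rho)/2}\asymp\bar k^{-s/2}$. The standard comparison lemma for $u_{k+1}\le(1-\gamma_k)u_k+C\gamma_k\bar k^{-s/2}$ with $c<1$ gives $u_k\le C\bar k^{-s/2}$: the deterministic initial term decays like $\exp(-c'\bar k^{1-c})$, and $\gamma_k\gg\bar k^{-1}$ absorbs the polynomial drift. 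This proves the first claim in \eqref{eq:bias-tracking-rate}.

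\textbf{Step 2: Abel transform.} Summation by parts on $Z_k$ produces a boundary term $\tfrac{\beta_{k-1}}{\gamma_{k-1}}M_k$ plus a term from $M_0$. It also produces a variation sum $\sum_j(\Pi_{k,j-1}\beta_{j-1}/\gamma_{j-1}-\Pi_{k,j}\beta_j/\gamma_j)M_j$. Each coefficient difference is bounded by $\Pi_{k,j}\,\frac{\beta_j}{\gamma_j}\big(\norm{B}\beta_j+|1-\tfrac{\beta_{j-1}\gamma_j}{\gamma_{j-1}\beta_j}|\big)\lesssim\norm{\Pi_{k,j}}\,\bar j^{\,c-2}$. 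With $\norm{\Pi_{k,j}}\le C(\bar j/\bar k)^r$, the boundary term has $L^2$ norm $\lesssim\bar k^{c-1-s/2}$. The variation sum has $L^2$ norm (by Minkowski) at most
\[
\sum_j(\bar j/\bar k)^r\,\bar j^{\,c-2-s/2}.
\]
This is of order $\bar k^{c-1-s/2}$ when $c-1-s/2>-r-1$, and otherwise of order $\bar k^{-r}$ up to a logarithm. Squared, the main term is $\bar k^{2c-2-s}$, which is $\le\bar k^{-1}$ exactly when $c\le(1+s)/2$; this is guaranteed by \eqref{eq:c-condition}. The alternative $\bar k^{-2r}$ (or $\bar k^{-2r}\log^2\bar k$) is $o(\bar k^{-1})$ by $r>1/2$. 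The $M_0$ term contributes $\Pi_{k,0}\cdot O(1)=O(\bar k^{-r})$, which is also fine. Combining the three parts by $(x+y+z)^2\le3(x^2+y^2+z^2)$ finishes the proof.

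\textbf{Main obstacle.} The delicate step is the variation sum. The crude Minkowski bound above works only because the difference of consecutive $\beta_j/\gamma_j$ gains a full factor $\bar j^{-1}$. I would check carefully that this holds, which it does since $\beta_j/\gamma_j=(\beta_0/\gamma_0)\bar j^{\,c-1}$ is smooth in $j$. I would also check that no cross-term martingale argument is needed. If a borderline case appears (e.g.\ $c-1-s/2=-r-1$), I would use the weighted-convolution estimate from the appendix, i.e.\ the $\Psi_q$ bookkeeping in Proposition~\ref{prop:centering} with $q=1+s/2-c>1/2$. Since $q>1/2$ and $r>1/2$, we get $\Psi_q^2=o(\bar k^{-1})$ up to a logarithm only when $q=r$. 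In that case I would perturb $c$ slightly within \eqref{eq:c-condition}, or absorb the logarithm using $2r>1$.
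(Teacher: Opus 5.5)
Your proposal is correct and follows essentially the same route as the paper: an $L^2$ bound $\|M_k\|_{L^2}\lesssim\bar k^{-s/2}$ for the tracker, which you get from a one-step Minkowski recursion plus induction where the paper unrolls and splits at $k/2$ (equally valid); then the Abel transform of $\sum_j\beta_j\Pi^B_{j+1,k}\gamma_j^{-1}(M_{j+1}-M_j)$, with the terminal term $\bar k^{2c-2-s}$ controlled by $c<(1+s)/2$ and the linear and noise parts handled by the $R\equiv0$ analysis. One small slip: the variation sum $\bar k^{-r}\sum_j\bar j^{\,r+c-2-s/2}$ is of order $\bar k^{c-1-s/2}$ only when $c-1-s/2>-r$, not $>-r-1$ (the latter always holds here); since you ultimately bound it by $\bar k^{c-1-s/2}+\bar k^{-r}\log\bar k$, the conclusion is unaffected, and the borderline logarithm is absorbed by $2r>1$ without perturbing $c$, which the theorem does not let you choose.
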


The proof is given in Appendix~\ref{app:bias-tracking-proof}.

\begin{remark}[What the theorem does and does not require]
The theorem uses no additional statistical assumptions beyond those already imposed for Theorem~\ref{thm:upper}. The new ingredient is algorithmic: the normal form exposes $R(X_k)$, and the update \eqref{eq:bias-tracker} tracks this observed quantity. In a general nonlinear TTSA recursion, the analogous coupling error is hidden inside $g(x_k,y_k)-g(x^*(y_k),y_k)$, and $x^*(y_k)$ is typically unknown. Extending the online bias tracker to that setting therefore requires additional structure, such as a local Taylor control variate, an estimator of the reduced drift, or a frozen slow iterate batching construction. The present theorem resolves the algorithmic bias question in the local normal form without such additional access.
\end{remark}

\section{Localized transfer to nonlinear TTSA}
\label{sec:localized-transfer}

The normal-form analysis isolates the mechanism by which local regularity of the slow drift controls the slow rate. We now show how the same mechanism transfers to nonlinear two-time-scale stochastic approximation near a stable equilibrium. The result is local in nature: it applies on time horizons where the iterates remain in a neighborhood in which fast-manifold coordinates and the displayed local error recursions are valid. This includes projected, truncated, or otherwise localized algorithms, as well as analyses combined with an independent stability argument. Any correction terms introduced by the chosen localization mechanism may be included in the perturbations below, provided they satisfy the bounds in Assumption~\ref{ass:nonlinear-reduction}.

Consider the nonlinear two-time-scale recursion
\begin{align}
  x_{k+1} &= x_k+\alpha_k\big(h(x_k,y_k)+M^x_{k+1}\big), \label{eq:general-x}\\
  y_{k+1} &= y_k+\beta_k\big(g(x_k,y_k)+M^y_{k+1}\big), \label{eq:general-y}
\end{align}
where $h:\R^{d_x}\times\R^{d_y}\to\R^{d_x}$ and
$g:\R^{d_x}\times\R^{d_y}\to\R^{d_y}$. Suppose that, in a neighborhood
$\mathcal U$ of the target slow equilibrium $y^*$, there is a map
$\lambda:\mathcal U\to\R^{d_x}$ satisfying
\[
  h(\lambda(y),y)=0,
  \qquad y\in\mathcal U.
\]
The map $\lambda$ is the \emph{fast equilibrium manifold}. For a fixed value of the slow variable $y$, the fast mean field $x\mapsto h(x,y)$ has equilibrium $\lambda(y)$. Thus, if $y$ were frozen, the fast recursion would be expected to track $\lambda(y)$. In the two-time-scale recursion, however, $y_k$ changes slowly over time, so the fast iterate tracks a moving target. This motivates the fast-manifold coordinates
\[
  X_k=x_k-\lambda(y_k),
  \qquad
  Y_k=y_k-y^*.
\]
Here $X_k$ is the genuine fast tracking error, while $Y_k$ is the slow error. The reduced slow drift is the map
\[
  G(y)=g(\lambda(y),y),
\]
and the point $y^*$ is a stable zero of this reduced drift in the local model below.

The movement of the manifold $\lambda(y_k)$ is the main new feature relative to the normal form. Since
\[
  X_{k+1}=x_{k+1}-\lambda(y_{k+1}),
\]
increments of $y_k$ feed back into the fast tracking recursion. In particular, the slow noise $M^y_{k+1}$ produces a perturbation of size $\beta_k$ in the $X$-recursion. The transfer theorem below treats the nonlinear recursion in fast-manifold coordinates as a perturbation of the normal form. The term $p_{k+1}$ represents fast-recursion residuals at the $\alpha_k$ scale, $d_k$ represents predictable perturbations entering through the moving manifold at the $\beta_k$ scale, and $e_k$ represents higher-order slow-drift terms beyond the components $HX_k+R(X_k)$. The assumption is designed so that these perturbations remain below the phase-transition rate after the same Abel-transform cancellation used in the normal-form analysis.

\begin{assumption}[Localized nonlinear reduction]\label{ass:nonlinear-reduction}
There are matrices $A,B,H$, a matrix $C_\zeta$, a map
$R:\R^{d_x}\to\R^{d_y}$, and adapted perturbation processes
$p_{k+1}$, $d_k$, and $e_k$ such that the local error process satisfies
\begin{align}
  X_{k+1}
  &=(I-\alpha_kA)X_k+\alpha_k\xi_{k+1}
    +\alpha_k p_{k+1}
    +\beta_k C_\zeta\zeta_{k+1}
    +\beta_kd_k, \label{eq:perturbed-fast}\\
  Y_{k+1}
  &=(I-\beta_kB)Y_k
    +\beta_k\big(HX_k+R(X_k)+\zeta_{k+1}+e_k\big), \label{eq:perturbed-slow}
\end{align}
where $\xi_{k+1}$ and $\zeta_{k+1}$ are martingale differences with respect to $\mathcal F_k$. The perturbation terms $p_{k+1}$, $d_k$, and $e_k$ are not required to be martingale differences.

Assumption~\ref{ass:stability} holds for $A,B,\xi,\zeta$, and Assumption~\ref{ass:remainder} holds for $R$. With
\[
  s=a(1+\rho),
  \qquad
  \delta=\min\{1,s\},
\]
there are finite constants $K_X,K_p,K_d,K_e$ and an exponent
$\mu_p>\delta/2$ such that, for every $k\ge0$,
\begin{align}
  \E\norm{X_k}^2 &\le K_X\alpha_k,
  &
  \E\norm{X_k}^{2+2\rho} &\le K_X\alpha_k^{1+\rho}, \label{eq:transfer-fast-moments}\\
  \big(\E\norm{p_{k+1}}^2\big)^{1/2}
  &\le K_p\bar k^{-\mu_p}, \label{eq:p-bound}\\
  \big(\E\norm{d_k}^2\big)^{1/2}
  &\le K_d\big(\bar k^{-a/2}+\bar k^{-\delta/2}\big), \label{eq:d-bound}\\
  \big(\E\norm{e_k}^2\big)^{1/2}
  &\le K_e\bar k^{-\delta/2}. \label{eq:e-bound}
\end{align}
Finally, Assumption~\ref{ass:strength} holds.
\end{assumption}

\begin{theorem}[Localized nonlinear transfer]\label{thm:localized-transfer}
Under Assumption~\ref{ass:nonlinear-reduction}, there is a finite constant $C$, depending only on the constants in the assumption and on the fixed problem parameters, such that, for every $k\ge1$,
\begin{equation}
  \E\norm{Y_k}^2
  \le C\left(\bar k^{-1}+\bar k^{-a(1+\rho)}\right). \label{eq:localized-transfer-rate}
\end{equation}
Consequently, the phase transition from Corollary~\ref{cor:phase} holds for every localized nonlinear recursion satisfying the reduction above.
\end{theorem}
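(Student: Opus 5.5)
We will unroll the slow recursion \eqref{eq:perturbed-slow} and split $Y_k$ into five pieces, each bounded in $L^2$ separately. Write $\Phi_{k,j}=\prod_{i=j+1}^{k-1}(I-\beta_iB)$, so that $\norm{\Phi_{k,j}}\le C(\bar j/\bar k)^{r}$ by \eqref{eq:matrix-contraction}. Then
\[
Y_k=\Phi_{k,-1}Y_0+\sum_{j<k}\Phi_{k,j}\beta_j\big(HX_j+R(X_j)+\zeta_{j+1}+e_j\big).
\]

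Four of the pieces are handled by the normal-form machinery.
\begin{itemize}
\item The initial term is $O(\bar k^{-r})$, and $r>1/2$ makes it $o(\bar k^{-1/2})$.
\item The martingale term $\sum\Phi_{k,j}\beta_j\zeta_{j+1}$ has second moment $\lesssim\sum(\bar j/\bar k)^{2r}\bar j^{-2}\lesssim\bar k^{-1}$, again using $2r>1$.
\item The remainder term is handled by Minkowski's inequality together with $\E\norm{R(X_j)}^2\le L_R^2\E\norm{X_j}^{2+2\rho}$. For the exponent, we would either use a direct fourth-moment bound or, as in Theorem~\ref{thm:upper}, use the assumed moment through the $L^1$-type weighted-convolution estimate of the appendix. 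This gives an $L^2$ norm of order $\sum(\bar j/\bar k)^r\bar j^{-1}\bar j^{-s/2}$, which is $\lesssim \Psi_{s/2}(k)$, hence at most $\bar k^{-\delta/2}$ (possibly up to a log at $s/2=r$, avoided exactly as in the appendix's weighted estimate).
\item The perturbation $e_j$ is treated the same way by Minkowski: the bound \eqref{eq:e-bound} gives a contribution $\lesssim\Psi_{\delta/2}(k)\lesssim\bar k^{-\delta/2}$, since $\delta/2\le1/2<r$.
\end{itemize}

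The main work is the linear leakage $\sum\Phi_{k,j}\beta_jHX_j$, where the fast recursion is now perturbed. Using invertibility of $A$, we would rewrite \eqref{eq:perturbed-fast} as
\[
X_j=A^{-1}\Big[\alpha_j^{-1}(X_j-X_{j+1})+\xi_{j+1}+p_{j+1}+\tfrac{\beta_j}{\alpha_j}\big(C_\zeta\zeta_{j+1}+d_j\big)\Big].
\]
Substituting this into the leakage produces four contributions.

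\begin{enumerate}
\item \emph{Abel term.} The piece $\sum\Phi_{k,j}\beta_j\alpha_j^{-1}HA^{-1}(X_j-X_{j+1})$ is summed by parts exactly as in the proof of Theorem~\ref{thm:upper}. The boundary terms are of order $(\beta_k/\alpha_k)\norm{X_k}$, giving $L^2$ size $\bar k^{-1+a}\bar k^{-a/2}=\bar k^{-1+a/2}\le\bar k^{-1/2}$. The variation terms involve $\norm{\Phi_{k,j+1}\beta_{j+1}/\alpha_{j+1}-\Phi_{k,j}\beta_j/\alpha_j}\lesssim(\bar j/\bar k)^r\bar j^{-2+a}$, contributing $\sum(\bar j/\bar k)^r\bar j^{-2+a}\bar j^{-a/2}\lesssim\bar k^{-1+a/2}$. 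These are identical to the normal-form computation, using only the moment bound \eqref{eq:transfer-fast-moments}.
\item \emph{Martingale pieces.} The terms $\xi_{j+1}$ and $(\beta_j/\alpha_j)C_\zeta\zeta_{j+1}$ give martingale sums of variance $\lesssim\bar k^{-1}$.
\item \emph{Residual $p$.} The $p_{j+1}$ piece, which is not centered, is handled by Minkowski: its contribution is $\lesssim\sum(\bar j/\bar k)^r\bar j^{-1-\mu_p}\lesssim\Psi_{\mu_p}(k)$. Since $\mu_p>\delta/2$ and $r>1/2\ge\delta/2$, this is $O(\bar k^{-\delta/2})$. The strict inequality on $\mu_p$ is what removes any logarithm at the boundary case.
\item \emph{Residual $d$.} The $d_j$ piece carries the extra factor $\beta_j/\alpha_j=\bar j^{-(1-a)}$, so its effective exponent is $1-a+\min\{a,\delta\}/2$. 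We need this to be at least $\delta/2$. If $\delta\le a$ this is immediate. Otherwise $\delta>a$ and the exponent is $1-a/2\ge1/2\ge\delta/2$.
\end{enumerate}

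Squaring and summing the five pieces gives $\E\norm{Y_k}^2\le C(\bar k^{-1}+\bar k^{-\delta})\le C'(\bar k^{-1}+\bar k^{-s})$, which is \eqref{eq:localized-transfer-rate}.

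The step we expect to be hardest is the exponent bookkeeping in the non-centered pieces $p$, $d$ and $R$: keeping each below $\bar k^{-\delta/2}$ without logarithms, especially for the $R$ term at the critical case $s/2=r$. For that case we would first try to reuse verbatim the appendix's weighted-convolution lemma from the proof of Theorem~\ref{thm:upper}.
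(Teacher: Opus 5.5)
Your proposal is correct and follows essentially the same route as the paper. You unroll the perturbed slow recursion, bound the initial term, the slow noise, $R(X_j)$ and $e_j$ by martingale and Minkowski weighted-convolution estimates, and treat $HX_j$ by substituting the perturbed fast identity for $AX_j$; this gives the same Abel term, the two martingale pieces, and Minkowski bounds for $p$ and $d$ (your single exponent $1-a+\min\{a,\delta\}/2$ for $d$ is the paper's split into $1-a/2$ and $1-a+\delta/2$).

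One harmless slip: the Abel interior sum is of order $\bar k^{-\min\{r,\,1-a/2\}}$ (with a logarithm at equality), not $\bar k^{-1+a/2}$ in general. Since $r>1/2$ and $1-a/2>1/2$, its square is still $O(\bar k^{-1})$.
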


The next two propositions verify Assumption~\ref{ass:nonlinear-reduction} from local Taylor structure. They are intended to be combined with any mechanism that supplies the required localized moment bounds, such as projection, truncation, a Lyapunov stability argument, or an existing finite-time nonlinear TTSA estimate. Proposition~\ref{prop:verification} covers a separable local structure in which the leading slow remainder depends only on the fast tracking error. Proposition~\ref{prop:generic-verification} covers generic mixed local terms, at the cost of a fourth-moment bound on the slow error.

\begin{proposition}[Verification from separable local $C^{1,\rho}$ structure]\label{prop:verification}
Consider a localized version of \eqref{eq:general-x}--\eqref{eq:general-y} on a time horizon where $y_k\in\mathcal U$ and the iterates remain in a bounded neighborhood of $(\lambda(y^*),y^*)$. Suppose that the following properties hold on this neighborhood.
\begin{enumerate}[leftmargin=*]
\item The fast equilibrium manifold $\lambda$ is twice continuously differentiable with bounded first and second derivatives.

\item There are matrices $A,B,H$ and a map $R$ satisfying Assumption~\ref{ass:remainder} such that, for all local $u$ and $y$,
\begin{align}
  h(\lambda(y)+u,y)
  &=-Au+P(u,y),
  &
  \norm{P(u,y)}
  &\le L_h\norm{u}^2, \label{eq:h-local}\\
  g(\lambda(y)+u,y)
  &=-B(y-y^*)+Hu+R(u)+E(u,y),
  &
  \norm{E(u,y)}
  &\le L_g\norm{u}^{1+\rho}. \label{eq:g-local}
\end{align}
The matrices $A$ and $B$ satisfy the contraction inequalities in Assumption~\ref{ass:stability}, and the stepsize constants satisfy Assumption~\ref{ass:strength}.

\item The noises $M^x_{k+1}$ and $M^y_{k+1}$ are martingale differences with uniformly bounded fourth conditional moments. Set
\[
  \xi_{k+1}=M^x_{k+1},
  \qquad
  \zeta_{k+1}=M^y_{k+1}.
\]

\item A baseline localized moment estimate is available: for finite constants $K_0,K_1$,
\begin{equation}
  \E\norm{X_k}^4\le K_0\alpha_k^2,
  \qquad
  \E\norm{Y_k}^2\le K_1\bar k^{-a},
  \qquad k\ge0. \label{eq:baseline-localization}
\end{equation}
\end{enumerate}
Then the localized recursion satisfies Assumption~\ref{ass:nonlinear-reduction}. Consequently, the conclusion of Theorem~\ref{thm:localized-transfer} holds.
\end{proposition}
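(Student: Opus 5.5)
We verify Assumption~\ref{ass:nonlinear-reduction} by writing the localized recursion directly in the coordinates $X_k=x_k-\lambda(y_k)$, $Y_k=y_k-y^*$. We then read off $p_{k+1}$, $d_k$, $e_k$ and bound each in $L^2$ using \eqref{eq:baseline-localization}.

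\textbf{The slow equation.} By \eqref{eq:g-local} with $u=X_k$ and $y=y_k$,
\[
  g(x_k,y_k)=-BY_k+HX_k+R(X_k)+E(X_k,y_k).
\]
This gives \eqref{eq:perturbed-slow} with $\zeta_{k+1}=M^y_{k+1}$ and $e_k=E(X_k,y_k)$. Then
\[
  \E\norm{e_k}^2\le L_g^2\,\E\norm{X_k}^{2+2\rho}.
\]
Since $2+2\rho\le4$, Jensen's inequality applied to the fourth moment gives $\E\norm{X_k}^{2+2\rho}\le (K_0\alpha_k^2)^{(1+\rho)/2}=C\alpha_k^{1+\rho}$. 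Hence $(\E\norm{e_k}^2)^{1/2}\le C\bar k^{-s/2}\le C\bar k^{-\delta/2}$. The same Jensen step gives both moment bounds in \eqref{eq:transfer-fast-moments}, with $\E\norm{X_k}^2\le\sqrt{K_0}\,\alpha_k$.

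\textbf{The fast equation.} Write
\[
  X_{k+1}=x_k+\alpha_k\big(h(x_k,y_k)+M^x_{k+1}\big)-\lambda(y_{k+1}),
\]
and Taylor-expand $\lambda(y_{k+1})=\lambda(y_k)+D\lambda(y_k)\Delta_k+T_k$, where $\Delta_k=y_{k+1}-y_k$ and $\norm{T_k}\le C\norm{\Delta_k}^2$. Using \eqref{eq:h-local},
\[
  X_{k+1}=(I-\alpha_kA)X_k+\alpha_k\xi_{k+1}+\alpha_kP(X_k,y_k)-D\lambda(y_k)\Delta_k-T_k .
\]
We take $p_{k+1}=P(X_k,y_k)$. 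Then $(\E\norm{p_{k+1}}^2)^{1/2}\le L_h(\E\norm{X_k}^4)^{1/2}\le C\alpha_k=C\bar k^{-a}$, so $\mu_p=a>1/2\ge\delta/2$.

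Next, $\Delta_k=\beta_k\big(g(x_k,y_k)+\zeta_{k+1}\big)$. Split $D\lambda(y_k)\Delta_k$ into two parts.
\begin{itemize}
  \item The noise part $\beta_k D\lambda(y_k)\zeta_{k+1}$. Write it as $\beta_k C_\zeta\zeta_{k+1}$ with $C_\zeta=-D\lambda(y^*)$. The discrepancy $-\beta_k\big(D\lambda(y_k)-D\lambda(y^*)\big)\zeta_{k+1}$ is placed in $d_k$ (or in $\alpha_k p_{k+1}$ if $d_k$ must be predictable). By boundedness of $D^2\lambda$ it is at most $C\norm{Y_k}\norm{\zeta_{k+1}}$.
  \item The drift part $-D\lambda(y_k)g(x_k,y_k)$, which also goes into $d_k$.
\end{itemize}
The remainder $T_k$, of size $C\beta_k^2(\norm{g}^2+\norm{\zeta_{k+1}}^2)$, is absorbed as $\beta_k d_k$ with a $d_k$-contribution of order $\beta_k$.

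\textbf{Bounding $d_k$.} Local boundedness of $g$ on the neighborhood gives $\norm{g(x_k,y_k)}\le C(\norm{Y_k}+\norm{X_k})$ (using $g(\lambda(y^*),y^*)=0$ from \eqref{eq:g-local}), hence
\[
  \big(\E\norm{g(x_k,y_k)}^2\big)^{1/2}\le C\bar k^{-a/2}.
\]
The discrepancy term is controlled by conditioning on $\mathcal F_k$: $\E\big[\norm{Y_k}^2\norm{\zeta_{k+1}}^2\big]\le\sigma_4^2\,\E\norm{Y_k}^2\le C\bar k^{-a}$. The $T_k$ term contributes $O(\beta_k)=O(\bar k^{-1})$. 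Together these give \eqref{eq:d-bound}.

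\textbf{Conclusion.} Assumptions~\ref{ass:stability} (noise martingale structure, fourth moments, contraction) and \ref{ass:strength} are hypotheses of the proposition. Therefore Assumption~\ref{ass:nonlinear-reduction} holds, and Theorem~\ref{thm:localized-transfer} applies.

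\textbf{Main obstacle.} The delicate step is the moving-manifold term. We must justify the linear bound on $g$ and the cross term $(D\lambda(y_k)-D\lambda(y^*))\zeta_{k+1}$ using only the weak baseline bound $\E\norm{Y_k}^2\lesssim\bar k^{-a}$. This is exactly why \eqref{eq:d-bound} allows the $\bar k^{-a/2}$ term. We also need to check that the chosen split (what goes into $d_k$ versus $p_{k+1}$) matches the adaptedness the transfer theorem requires. If $d_k$ must be $\mathcal F_k$-measurable, the noise discrepancy moves into $\alpha_kp_{k+1}$. In that case we need $\beta_k\norm{Y_k}/\alpha_k=O(\bar k^{-1+a-a/2})$, which has exponent $1-a/2>\delta/2$, so the requirement $\mu_p>\delta/2$ is still met.
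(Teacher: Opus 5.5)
Your proof is correct and follows the paper's argument essentially line for line. It uses the same coordinates, the same Taylor expansion of $\lambda$, the same choices $e_k=E(X_k,y_k)$, $p_{k+1}=P(X_k,y_k)$, $C_\zeta=-D\lambda(y^*)$, and the same three-piece $d_k$ bounded at the $\bar k^{-a/2}$ scale.

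Two small remarks. First, the linear bound on $g$ does not come from local boundedness of $g$; it follows from \eqref{eq:g-local} together with $\norm{X_k}^{1+\rho}\le C\norm{X_k}$ on the bounded neighborhood, or from term-by-term $L^2$ bounds as in the paper. Second, your predictability worry is unnecessary: the paper keeps the $\zeta_{k+1}$-dependent pieces in $d_k$, and the transfer proof uses only the $L^2$ bound \eqref{eq:d-bound}.
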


\begin{proposition}[Verification with generic mixed local terms]\label{prop:generic-verification}
Consider a localized version of \eqref{eq:general-x}--\eqref{eq:general-y} on a time horizon where $y_k\in\mathcal U$ and the iterates remain in a bounded neighborhood of $(\lambda(y^*),y^*)$. Suppose that the following properties hold on this neighborhood.
\begin{enumerate}[leftmargin=*]
\item The fast equilibrium manifold $\lambda$ is twice continuously differentiable with bounded first and second derivatives.

\item There are matrices $A,B,H$ and a map $R$ satisfying Assumption~\ref{ass:remainder} such that, for all local $u$ and $y$ with $Y=y-y^*$,
\begin{align}
  h(\lambda(y)+u,y)
  &=-Au+P(u,y),
  &
  \norm{P(u,y)}
  &\le L_h\big(\norm{u}^2+\norm{u}\norm{Y}\big), \label{eq:h-generic}\\
  g(\lambda(y)+u,y)
  &=-BY+Hu+R(u)+E(u,y),
  &
  \norm{E(u,y)}
  &\le L_g\big(\norm{u}^{1+\rho}+\norm{u}\norm{Y}+\norm{Y}^2\big). \label{eq:g-generic}
\end{align}
The matrices $A$ and $B$ satisfy the contraction inequalities in Assumption~\ref{ass:stability}, and the stepsize constants satisfy Assumption~\ref{ass:strength}.

\item The noises $M^x_{k+1}$ and $M^y_{k+1}$ are martingale differences with uniformly bounded fourth conditional moments. Set
\[
  \xi_{k+1}=M^x_{k+1},
  \qquad
  \zeta_{k+1}=M^y_{k+1}.
\]

\item A fourth-moment localized estimate is available: for finite constants $K_0,K_1$,
\begin{equation}
  \E\norm{X_k}^4\le K_0\alpha_k^2,
  \qquad
  \E\norm{Y_k}^4\le K_1\bar k^{-2a},
  \qquad k\ge0. \label{eq:baseline-generic}
\end{equation}
\end{enumerate}
Then the localized recursion satisfies Assumption~\ref{ass:nonlinear-reduction}. Consequently, the conclusion of Theorem~\ref{thm:localized-transfer} holds.
\end{proposition}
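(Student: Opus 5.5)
We verify Assumption~\ref{ass:nonlinear-reduction} by writing the localized recursion in the fast-manifold coordinates $X_k=x_k-\lambda(y_k)$, $Y_k=y_k-y^*$, reading off $p_{k+1},d_k,e_k$, and bounding each with the fourth-moment estimates \eqref{eq:baseline-generic}.

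\emph{Slow equation.} Applying \eqref{eq:g-generic} with $u=X_k$, $y=y_k$ gives \eqref{eq:perturbed-slow} with $\zeta_{k+1}=M^y_{k+1}$ and $e_k=E(X_k,y_k)$. By Minkowski's inequality,
\[
(\E\norm{e_k}^2)^{1/2}\le L_g\Big((\E\norm{X_k}^{2+2\rho})^{1/2}+(\E\norm{X_k}^4\E\norm{Y_k}^4)^{1/4}+(\E\norm{Y_k}^4)^{1/2}\Big).
\]
Jensen's inequality gives $\E\norm{X_k}^{2+2\rho}\le(\E\norm{X_k}^4)^{(1+\rho)/2}\le C\alpha_k^{1+\rho}$, so the first term is $O(\bar k^{-s/2})$. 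The mixed term is $O(\bar k^{-a/2}\bar k^{-a/2})=O(\bar k^{-a})$, and the last term is also $O(\bar k^{-a})$. Since $a>1/2\ge\delta/2$, both are $O(\bar k^{-\delta/2})$. Hence \eqref{eq:e-bound} holds. The fast moment bounds \eqref{eq:transfer-fast-moments} follow from $\E\norm{X_k}^4\le K_0\alpha_k^2$ via the same Jensen step.

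\emph{Fast equation.} We expand the manifold shift by a second-order Taylor formula:
\[
\lambda(y_{k+1})-\lambda(y_k)=\beta_k\nabla\lambda(y_k)\big(g(x_k,y_k)+M^y_{k+1}\big)+\mathcal R_k,\qquad \norm{\mathcal R_k}\le C\beta_k^2\norm{g+M^y_{k+1}}^2.
\]
Using \eqref{eq:h-generic} gives
\[
X_{k+1}=(I-\alpha_kA)X_k+\alpha_k M^x_{k+1}+\alpha_kP(X_k,y_k)-\beta_k\nabla\lambda(y_k)g(x_k,y_k)-\beta_k\nabla\lambda(y_k)M^y_{k+1}-\mathcal R_k .
\]
The term $-\beta_k\nabla\lambda(y_k)M^y_{k+1}$ must take the form $\beta_kC_\zeta\zeta_{k+1}$ with a \emph{constant} matrix. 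We therefore set $C_\zeta=-\nabla\lambda(y^*)$ and move the difference $-\beta_k(\nabla\lambda(y_k)-\nabla\lambda(y^*))M^y_{k+1}$ into $\beta_kd_k$. Bounded second derivatives of $\lambda$ make this difference $O(\norm{Y_k})$ times the noise, which contributes $O(\bar k^{-a/2})$ to $d_k$ in $L^2$ (conditional fourth moments of the noise, Cauchy--Schwarz).

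The remaining pieces are assigned as follows.
\begin{itemize}
\item $\beta_k d_k$ also receives $-\beta_k\nabla\lambda(y_k)g(x_k,y_k)$. By \eqref{eq:g-generic}, $g=-BY_k+HX_k+R+E$ has $L^2$ norm $O(\bar k^{-a/2})$, which matches \eqref{eq:d-bound}.
\item $\alpha_kp_{k+1}$ receives $\alpha_kP(X_k,y_k)$ and $-\mathcal R_k$. By Cauchy--Schwarz and \eqref{eq:baseline-generic}, $(\E\norm{P}^2)^{1/2}\le L_h\big((\E\norm{X}^4)^{1/2}+(\E\norm X^4\E\norm Y^4)^{1/4}\big)=O(\bar k^{-a})$.
\item The Taylor remainder gives $\mathcal R_k/\alpha_k=O(\beta_k^2/\alpha_k)=O(\bar k^{-2+a})$ in $L^2$. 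This uses fourth moments of $M^y$ and boundedness of $g$ on the bounded neighborhood.
\end{itemize}
Thus $\mu_p=\min\{a,2-a\}=a>1/2\ge\delta/2$, and \eqref{eq:p-bound} holds. The martingale conditions and Assumption~\ref{ass:stability} for $\xi=M^x$ and $\zeta=M^y$ are immediate, and Assumption~\ref{ass:strength} is assumed. Theorem~\ref{thm:localized-transfer} then gives the conclusion.

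The main obstacle is the mixed terms. The slow-error cross term $\norm{u}\norm{Y}$ in $P$ and $E$ requires the joint fourth-moment Cauchy--Schwarz step; this is exactly why the fourth-moment bound on $Y_k$ is assumed here instead of the second-moment bound in Proposition~\ref{prop:verification}. The state-dependent noise coefficient $\nabla\lambda(y_k)$ must also be split into its frozen part $C_\zeta$ plus a perturbation. The key check is that every exponent obtained this way, $a$ for both the cross terms and $\mu_p$, strictly exceeds $\delta/2\le1/2$, which uses $a>1/2$.
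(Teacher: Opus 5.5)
Your proposal is correct and follows the paper's proof essentially step for step. It uses the same fast-manifold coordinates, the same choice $e_k=E(X_k,y_k)$, the same $C_\zeta=-D\lambda(y^*)$ with the frozen-gradient difference moved into $d_k$, and the same Cauchy--Schwarz/fourth-moment treatment of the mixed $\|u\|\|Y\|$ terms. The only difference is bookkeeping: you put the Taylor remainder of $\lambda$ into $\alpha_k p_{k+1}$, where its size $\beta_k^2/\alpha_k\asymp\bar k^{a-2}$ still gives $\mu_p=a$, whereas the paper puts $\beta_k^{-1}r^\lambda_{k+1}=O(\beta_k)$ into $d_k$; both placements meet the required bounds.
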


\section{Proof ideas}

All proofs are given in full detail in the appendix. We summarize the main mechanisms here. Throughout this discussion, write
\[
  \bar k=k+k_0,
  \qquad
  s=a(1+\rho).
\]
The proof separates three effects in the slow recursion: ordinary slow noise, locally linear leakage from the fast tracking error, and nonlinear predictable bias.

\paragraph{Fast fluctuation scale.}
The fast recursion is stable and is driven by noise through the stepsize $\alpha_k$. A standard Lyapunov calculation therefore gives
\[
  \E\norm{X_k}^2=O(\alpha_k),
  \qquad
  \E\norm{X_k}^4=O(\alpha_k^2).
\]
By interpolation,
\[
  \E\norm{X_k}^{2+2\rho}=O(\alpha_k^{1+\rho}).
\]
Thus the natural size of the fast tracking error is
\[
  \norm{X_k}\approx \sqrt{\alpha_k}\asymp \bar k^{-a/2}.
\]
Consequently, a nonlinear remainder satisfying
\[
  \norm{R(x)}\lesssim \norm{x}^{1+\rho}
\]
has typical magnitude
\[
  \norm{R(X_k)}
  \approx
  \alpha_k^{(1+\rho)/2}
  \asymp
  \bar k^{-a(1+\rho)/2}.
\]
This is the source of the phase-transition exponent.

\paragraph{Unrolling the slow recursion.}
Let
\[
  \Pi^B_{i,k}=\prod_{\ell=i}^{k-1}(I-\beta_\ell B),
  \qquad
  \norm{\Pi^B_{i,k}}\le C\left(\frac{i+k_0}{k+k_0}\right)^r .
\]
Unrolling the slow recursion gives
\[
  Y_k
  =
  \Pi^B_{0,k}Y_0
  +
  \sum_{j<k}\beta_j\Pi^B_{j+1,k}HX_j
  +
  \sum_{j<k}\beta_j\Pi^B_{j+1,k}R(X_j)
  +
  \sum_{j<k}\beta_j\Pi^B_{j+1,k}\zeta_{j+1}.
\]
The initial-condition term decays faster than the rates of interest once the slow contraction is chosen sufficiently strong.

The slow-noise term is standard. Since $\zeta_{j+1}$ is a martingale difference,
\[
  \E\left\|
  \sum_{j<k}\beta_j\Pi^B_{j+1,k}\zeta_{j+1}
  \right\|^2
  \le
  C\sum_{j<k}\beta_j^2
  \left(\frac{j+k_0}{k+k_0}\right)^{2r}
  =
  O(\bar k^{-1}).
\]
This is the intrinsic $k^{-1}$ noise floor of the slow stochastic approximation.

\paragraph{The nonlinear remainder.}
The nonlinear term is bounded directly. By Minkowski's inequality and the fast moment estimate,
\[
\begin{aligned}
  \left(
  \E\left\|
  \sum_{j<k}\beta_j\Pi^B_{j+1,k}R(X_j)
  \right\|^2
  \right)^{1/2}
  &\le
  C\sum_{j<k}\beta_j
  \left(\frac{j+k_0}{k+k_0}\right)^r
  \left(\E\norm{X_j}^{2+2\rho}\right)^{1/2}  \\
  &\le
  C\sum_{j<k}
  \frac{1}{j+k_0}
  \left(\frac{j+k_0}{k+k_0}\right)^r
  (j+k_0)^{-a(1+\rho)/2}.
\end{aligned}
\]
A weighted-convolution estimate gives an $L^2$ bound of order
\[
  O\!\left(\bar k^{-\min\{r,a(1+\rho)/2\}}\right)
\]
up to a logarithm in the borderline case. After squaring, this is always bounded by
\[
  O\!\left(\bar k^{-1}+\bar k^{-a(1+\rho)}\right),
\]
because $r>1/2$. This term is not improved by averaging because it is a predictable drift input rather than a martingale-centered fluctuation.

\paragraph{Locally linear leakage and Abel cancellation.}
The locally linear term $HX_j$ is subtler. A direct estimate using
$\E\norm{X_j}^2=O(\alpha_j)$ would only give the slower $O(k^{-a})$ rate. The key point is that $X_j$ is not an arbitrary error sequence; it is generated by a stable fast recursion. From
\[
  X_{j+1}=(I-\alpha_jA)X_j+\alpha_j\xi_{j+1},
\]
we obtain the identity
\[
  AX_j=\alpha_j^{-1}(X_j-X_{j+1})+\xi_{j+1}.
\]
With $C_H=HA^{-1}$,
\[
\begin{aligned}
  \sum_{j<k}\beta_j\Pi^B_{j+1,k}HX_j
  &=
  \sum_{j<k}\beta_j\Pi^B_{j+1,k}C_H\xi_{j+1} \\
  &\quad+
  \sum_{j<k}\frac{\beta_j}{\alpha_j}
  \Pi^B_{j+1,k}C_H(X_j-X_{j+1}).
\end{aligned}
\]
The first term is again a martingale convolution and has mean square $O(\bar k^{-1})$.

The second term is handled by summation by parts. The boundary coefficients are of order
\[
  \frac{\beta_k}{\alpha_k}\asymp \bar k^{a-1}.
\]
Since $\E\norm{X_k}^2=O(\bar k^{-a})$, the terminal boundary term contributes
\[
  \left(\frac{\beta_k}{\alpha_k}\right)^2\E\norm{X_k}^2
  =
  O(\bar k^{a-2})
  \le
  O(\bar k^{-1}).
\]
The initial boundary is negligible by contraction, and the interior coefficient differences are small enough to give the same rate. This summation-by-parts step is the Abel-transform form of averaged-noise cancellation: locally linear leakage from the fast recursion averages down to the slow $k^{-1}$ noise floor.

Combining the three contributions yields
\[
  \E\norm{Y_k}^2
  \le
  C\left(\bar k^{-1}+\bar k^{-a(1+\rho)}\right).
\]

\paragraph{Why the lower bound matches.}
The lower bound uses a scalar Gaussian normal form. In that example, the fast variable is Gaussian with variance
\[
  \E X_k^2=\Theta(\alpha_k).
\]
Therefore
\[
  \E |X_k|^{1+\rho}
  =
  \Theta\!\left(\alpha_k^{(1+\rho)/2}\right)
  =
  \Theta\!\left(\bar k^{-a(1+\rho)/2}\right).
\]
The slow recursion contains the positive predictable drift input
\[
  \gamma |X_k|^{1+\rho}.
\]
Summing the slow recursion over the window $j\in[k/2,k]$, where the slow transition weights remain uniformly comparable to constants and the input varies only polynomially, gives
\[
  \E Y_k
  \ge
  c\bar k^{-a(1+\rho)/2}.
\]
Hence
\[
  \E Y_k^2
  \ge
  (\E Y_k)^2
  \ge
  c\bar k^{-a(1+\rho)}.
\]
If the slow Gaussian noise level is nonzero, the usual martingale variance calculation gives the additional lower bound $c\bar k^{-1}$. Thus the upper bound is sharp uniformly over the normal-form class.

\paragraph{Centering and debiasing.}
The centering result explains why the obstruction is not the size of the nonlinear term alone. Suppose the nonlinear input is replaced by a martingale-centered fluctuation $\widetilde R_{k+1}$ satisfying
\[
  \E[\widetilde R_{k+1}\mid\mathcal F_k]=0,
  \qquad
  \E\norm{\widetilde R_{k+1}}^2
  \lesssim
  \alpha_k^{1+\rho}.
\]
Then its slow convolution satisfies
\[
  \E\left\|
  \sum_{j<k}\beta_j\Pi^B_{j+1,k}\widetilde R_{j+1}
  \right\|^2
  \le
  C\sum_{j<k}\beta_j^2
  \left(\frac{j+k_0}{k+k_0}\right)^{2r}
  \alpha_j^{1+\rho}
  \le
  C\bar k^{-1}.
\]
Thus a nonlinear term of the same instantaneous magnitude does not slow the slow iterate when it is martingale-centered. The rate loss comes from persistent predictable bias.

\paragraph{Online bias tracking.}
The bias-tracking recursion removes the predictable component without an oracle. Since
\[
  M_{k+1}-M_k=\gamma_k\big(R(X_k)-M_k\big),
\]
the corrected nonlinear input satisfies
\[
  R(X_k)-M_k=\gamma_k^{-1}(M_{k+1}-M_k).
\]
Therefore its slow convolution is
\[
  \sum_{j<k}\beta_j\Pi^B_{j+1,k}\big(R(X_j)-M_j\big)
  =
  \sum_{j<k}\frac{\beta_j}{\gamma_j}\Pi^B_{j+1,k}(M_{j+1}-M_j).
\]
This is another Abel-transform term. The tracker itself satisfies
\[
  \norm{M_k}_{L^2}=O(\bar k^{-s/2}),
\]
because it is an exponentially weighted average of inputs $R(X_j)$ of $L^2$ size $O(\bar j^{-s/2})$. The coefficient $\beta_j/\gamma_j\asymp \bar j^{c-1}$ is small because the tracking time scale is faster than the slow time scale. The condition $c<\min\{1,(1+s)/2\}$ ensures that the boundary and coefficient-variation terms in this Abel transform have $L^2$ size $O(\bar k^{-1/2})$. Thus the tracked nonlinear residual contributes only $O(\bar k^{-1})$ to the slow mean-square error.

\paragraph{Localized nonlinear transfer.}
For nonlinear TTSA, the same argument is applied after changing coordinates to the fast-manifold error
\[
  X_k=x_k-\lambda(y_k),
  \qquad
  Y_k=y_k-y^*.
\]
Here $\lambda(y)$ is the equilibrium of the fast mean field with the slow variable frozen. Because this target moves with $y_k$, the $X$-recursion contains additional perturbations beyond the normal form. Assumption~\ref{ass:nonlinear-reduction} collects these perturbations as $p_{k+1}$, $d_k$, and $e_k$.

The proof of the transfer theorem repeats the normal-form decomposition. The slow error is unrolled into the same three main terms, plus perturbation convolutions. The direct slow perturbation $e_k$ is controlled by the same convolution estimate used for $R(X_k)$. The perturbations in the fast recursion are handled through the Abel transform for the locally linear term $HX_k$. Terms entering the fast recursion at the $\beta_k$ scale acquire an additional factor $\beta_k/\alpha_k$ after the transform; since $\beta_k/\alpha_k\asymp \bar k^{a-1}$ and $a<1$, these terms remain below the phase-transition rate under the stated moment bounds. The verification propositions show that the required perturbation bounds follow from local Taylor expansions of $h$ and $g$, bounded derivatives of the manifold $\lambda$, and the corresponding localized moment estimates.

\section{Conclusion and Discussion}

This paper identifies a regularity-dependent boundary for the uncorrected normal-form recursion in nonlinear two-time-scale stochastic approximation. Locally linear fast-error leakage averages down to the intrinsic $O(k^{-1})$ slow noise floor, while a nonlinear remainder of order $1+\rho$ creates a predictable bias at the fast fluctuation scale. This yields the sharp rate
\[
  \E\norm{Y_k}^2
  \le
  C\left(k^{-1}+k^{-a(1+\rho)}\right),
\]
with a matching scalar lower bound. Thus, for the uncorrected normal-form recursion, the decoupled $O(k^{-1})$ rate is guaranteed uniformly precisely when $a(1+\rho)\ge1$.

The lower bound also identifies how to beat it. It is not an information-theoretic impossibility result for the normal-form problem class; it is the bias of a particular update rule. The online tracker
\[
  M_{k+1}=M_k+\gamma_k\big(R(X_k)-M_k\big)
\]
estimates the predictable nonlinear component on an intermediate time scale, and subtracting $M_k$ from the slow update restores
\[
  \E\norm{\widetilde Y_k}^2=O(k^{-1})
\]
for every $\rho\in[0,1]$. The proof uses a second Abel transform, now applied to $R(X_k)-M_k=\gamma_k^{-1}(M_{k+1}-M_k)$.

The localized transfer results show that the same phase-transition mechanism persists for nonlinear TTSA in fast-manifold coordinates under the stated localized moment estimates. Extending the online bias tracker beyond the normal form is a natural next step, but it requires a way to observe or estimate the hidden coupling error $g(x_k,y_k)-g(x^*(y_k),y_k)$. Other directions include simulator-based batching or multilevel debiasing for general TTSA, high-probability analogues, and non-asymptotic central limit theorems with the nonlinear bias explicitly centered.
\newpage
\bibliographystyle{unsrt}
\bibliography{refs}
\newpage
\appendix

\section{Deterministic estimates}\label{app:deterministic}

Throughout the appendix, $C$ denotes a finite positive constant whose value may change from line to line but depends only on the fixed problem parameters. It never depends on $k$ or on summation indices. Recall that $\bar k=k+k_0$.

For $0\le i\le k$, define the slow transition matrix
\begin{equation}
  \Pi^B_{i,k}\coloneqq
  \begin{cases}
    (I-\beta_{k-1}B)(I-\beta_{k-2}B)\cdots(I-\beta_iB), & i<k,\\
    I, & i=k.
  \end{cases} \label{eq:B-transition}
\end{equation}
The order of multiplication is irrelevant for the estimates because every factor is a polynomial in the same matrix $B$, but the displayed convention matches the recursion.

\begin{lemma}[Polynomial decay of the slow transition]\label{lem:transition}
Under Assumption \ref{ass:stability}, for all $0\le i\le k$,
\begin{equation}
  \norm{\Pi^B_{i,k}}
  \le \left(\frac{\bar i}{\bar k}\right)^r,
  \qquad r=b\beta_0. \label{eq:transition-bound}
\end{equation}
\end{lemma}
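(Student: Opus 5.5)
The plan is to bound the product factor by factor and then compare the resulting scalar product with a ratio of powers via logarithms. By submultiplicativity of the operator norm and the slow contraction in \eqref{eq:matrix-contraction},
\[
  \norm{\Pi^B_{i,k}}\le\prod_{\ell=i}^{k-1}\norm{I-\beta_\ell B}\le\prod_{\ell=i}^{k-1}\bigl(1-b\beta_\ell\bigr)
  =\prod_{\ell=i}^{k-1}\Bigl(1-\frac{r}{\bar\ell}\Bigr),
\]
with $\bar\ell=\ell+k_0$ and $r=b\beta_0$. The case $i=k$ is trivial, since the product is empty and equals $1=(\bar k/\bar k)^r$.

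For $i<k$, I would first check that each factor is positive: the standing choice of $k_0$ gives $b\beta_\ell\le1/2$, so $r/\bar\ell\in(0,1/2]$. I would then use $\log(1-x)\le -x$ for $x\in[0,1)$ to get
\[
  \log\prod_{\ell=i}^{k-1}\Bigl(1-\frac{r}{\bar\ell}\Bigr)\le -r\sum_{\ell=i}^{k-1}\frac{1}{\ell+k_0}.
\]

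The remaining step is the harmonic-sum comparison. Because $t\mapsto1/t$ is decreasing,
\[
  \sum_{\ell=i}^{k-1}\frac1{\ell+k_0}\ge\int_{i+k_0}^{k+k_0}\frac{dt}{t}=\log\frac{\bar k}{\bar i}.
\]
Exponentiating then yields $\norm{\Pi^B_{i,k}}\le(\bar i/\bar k)^r$, with constant exactly $1$ as stated.

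I do not expect a real obstacle here. The only points needing care are the direction of the integral comparison, which is a left-endpoint Riemann sum and therefore an upper sum for a decreasing function, so the inequality points the right way, and the positivity of the factors, which is guaranteed by the choice of $k_0$. Commutativity of the factors is not even needed, since submultiplicativity holds in any order.
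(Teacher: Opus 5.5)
Your proposal is correct and follows the paper's proof step for step: submultiplicativity with the slow contraction bound, then $1-u\le e^{-u}$ (your $\log(1-x)\le -x$), then the comparison $\sum_{\ell=i}^{k-1}(\ell+k_0)^{-1}\ge\log(\bar k/\bar i)$. You also do not need the $k_0$ condition for nonnegativity of the factors, since $1-b\beta_\ell\ge\norm{I-\beta_\ell B}\ge0$ already follows from the assumption itself.
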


\begin{proof}
If $i=k$, then \eqref{eq:transition-bound} is equality. Assume $i<k$. By \eqref{eq:matrix-contraction},
\[
  \norm{\Pi^B_{i,k}}
  \le \prod_{\ell=i}^{k-1}(1-b\beta_\ell)
  \le \exp\left(-b\sum_{\ell=i}^{k-1}\beta_\ell\right),
\]
where we used $1-u\le e^{-u}$ for $u\ge0$. Since $\beta_\ell=\beta_0/(\ell+k_0)$,
\[
  \sum_{\ell=i}^{k-1}\beta_\ell
  =\beta_0\sum_{\ell=i}^{k-1}\frac1{\ell+k_0}
  \ge \beta_0\int_i^k \frac{dt}{t+k_0}
  =\beta_0\log\left(\frac{\bar k}{\bar i}\right).
\]
Combining the two inequalities gives
\[
  \norm{\Pi^B_{i,k}}
  \le \exp\left(-b\beta_0\log\left(\frac{\bar k}{\bar i}\right)\right)
  =\left(\frac{\bar i}{\bar k}\right)^{b\beta_0}.
\]
This proves the lemma.
\end{proof}

\begin{lemma}[Weighted convolution estimates]\label{lem:weighted-sums}
Let $u\ge0$ and $r>0$. Then, for all $k\ge1$,
\begin{equation}
  \sum_{j=0}^{k-1}\beta_j\norm{\Pi^B_{j+1,k}}\bar j^{-u}
  \le
  C
  \begin{cases}
    \bar k^{-u}, & u<r,\\
    \bar k^{-r}\log(\bar k), & u=r,\\
    \bar k^{-r}, & u>r.
  \end{cases} \label{eq:weighted-first}
\end{equation}
Consequently, if $r>1/2$, then
\begin{equation}
  \left(
  \sum_{j=0}^{k-1}\beta_j\norm{\Pi^B_{j+1,k}}\bar j^{-u}
  \right)^2
  \le C\left(\bar k^{-1}+\bar k^{-2u}\right). \label{eq:weighted-first-squared}
\end{equation}
If $r>1/2$, then, for all $k\ge1$,
\begin{equation}
  \sum_{j=0}^{k-1}\beta_j^2\norm{\Pi^B_{j+1,k}}^2
  \le C\bar k^{-1}. \label{eq:weighted-second}
\end{equation}
\end{lemma}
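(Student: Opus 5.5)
We will use Lemma~\ref{lem:transition} to replace $\norm{\Pi^B_{j+1,k}}$ by $(\overline{j+1}/\bar k)^r\le 2^r(\bar j/\bar k)^r$, since $\overline{j+1}\le 2\bar j$ when $k_0\ge 2$. We also use $\beta_j=\beta_0\bar j^{-1}$. This reduces all three claims to elementary power sums. For \eqref{eq:weighted-first} the sum is bounded by
\[
  C\bar k^{-r}\sum_{j=0}^{k-1}\bar j^{\,r-u-1}
  = C\bar k^{-r}\sum_{m=k_0}^{\bar k-1} m^{\,r-u-1}.
\]
We then compare with $\int_{k_0-1}^{\bar k} t^{r-u-1}\,dt$, treating the exponent sign separately. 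Since $t^{r-u-1}$ is monotone, the integral comparison works with the endpoint shifted by one, and $k_0\ge2$ keeps $t\ge1$.

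We split into three cases according to the sign of $r-u$.
\begin{enumerate}[leftmargin=*]
\item If $u<r$, the exponent $r-u-1>-1$, so the sum is at most $C\bar k^{\,r-u}$ and the whole expression is $C\bar k^{-u}$.
\item If $u=r$, the sum is harmonic and bounded by $C\log\bar k$; this uses $\bar k\ge k_0+1\ge3$, so $\log\bar k$ is bounded below.
\item If $u>r$, the series $\sum_m m^{r-u-1}$ converges, giving a bound $C\bar k^{-r}$.
\end{enumerate}

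For \eqref{eq:weighted-first-squared} we square each case and use $r>1/2$. In the first case the square is $\bar k^{-2u}$. In the third case $\bar k^{-2r}\le\bar k^{-1}$. In the borderline case $\bar k^{-2r}\log^2\bar k\le C\bar k^{-1}$, because $2r>1$ makes $\bar k^{1-2r}\log^2\bar k$ bounded on $[1,\infty)$. Each case is therefore at most $C(\bar k^{-1}+\bar k^{-2u})$.

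For \eqref{eq:weighted-second} the same substitution gives
\[
  C\bar k^{-2r}\sum_{m=k_0}^{\bar k-1} m^{\,2r-2}.
\]
Because $2r-2>-1$, this is at most $C\bar k^{-2r}\bar k^{\,2r-1}=C\bar k^{-1}$. This is the first case above with $r$ replaced by $2r$, $u=1$, and the condition $u<2r$.

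No step is a genuine obstacle. The only care required is the borderline logarithmic case and the uniformity of constants in $k$. The latter needs the integral comparison to hold both when the exponent is negative, where we compare with the integral from $k_0-1$, and when it is positive, where we compare with the integral up to $\bar k$. We will check that every constant depends only on $r,u,\beta_0,k_0$.
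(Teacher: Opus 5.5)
Your proposal is correct and follows the paper's proof essentially line by line. You use the same reduction via Lemma~\ref{lem:transition} and $\overline{j+1}\le 2\bar j$ to $C\bar k^{-r}\sum_j \bar j^{\,r-u-1}$ (respectively $C\bar k^{-2r}\sum_j \bar j^{\,2r-2}$), the same three-case split on the sign of $r-u$, and the same use of $2r>1$ to absorb the logarithm in the squared bound; your explicit integral-comparison bookkeeping simply fills in the elementary power-sum estimates that the paper states without detail.
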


\begin{proof}
For \eqref{eq:weighted-first}, Lemma \ref{lem:transition} gives
\[
\begin{aligned}
  \sum_{j=0}^{k-1}\beta_j\norm{\Pi^B_{j+1,k}}\bar j^{-u}
  &\le \beta_0\sum_{j=0}^{k-1}\bar j^{-1}\left(\frac{\overline{j+1}}{\bar k}\right)^r\bar j^{-u}.
\end{aligned}
\]
Because $k_0\ge2$, $\overline{j+1}=\bar j+1\le 2\bar j$. Therefore
\[
  \left(\frac{\overline{j+1}}{\bar k}\right)^r\le 2^r\bar k^{-r}\bar j^r,
\]
and hence
\begin{equation}
  \sum_{j=0}^{k-1}\beta_j\norm{\Pi^B_{j+1,k}}\bar j^{-u}
  \le C\bar k^{-r}\sum_{j=0}^{k-1}\bar j^{r-u-1}. \label{eq:first-sum-reduced}
\end{equation}
If $r-u>0$, the sum in \eqref{eq:first-sum-reduced} is at most $C\bar k^{r-u}$. If $r-u=0$, it is at most $C\log(\bar k)$. If $r-u<0$, it is bounded by a constant. These three cases prove \eqref{eq:weighted-first}.

We next prove \eqref{eq:weighted-first-squared}. If $u<r$, then \eqref{eq:weighted-first} gives $C\bar k^{-2u}$. If $u>r$, then it gives $C\bar k^{-2r}\le C\bar k^{-1}$ because $r>1/2$. If $u=r$, then $2u=2r>1$, and
\[
  \bar k^{-2r}(\log\bar k)^2\le C\bar k^{-1}.
\]
Thus \eqref{eq:weighted-first-squared} holds in all cases.

For \eqref{eq:weighted-second}, again by Lemma \ref{lem:transition},
\[
\begin{aligned}
  \sum_{j=0}^{k-1}\beta_j^2\norm{\Pi^B_{j+1,k}}^2
  &\le C\bar k^{-2r}\sum_{j=0}^{k-1}\bar j^{2r-2}.
\end{aligned}
\]
Since $2r-2>-1$ is equivalent to $r>1/2$, the sum is at most $C\bar k^{2r-1}$. Therefore the last display is bounded by $C\bar k^{-1}$.
\end{proof}

\begin{lemma}[Abel coefficient bounds]\label{lem:abel-coefficients}
Let $C_H=HA^{-1}$ and define, for $0\le j\le k-1$,
\begin{equation}
  Q_{j,k}\coloneqq \frac{\beta_j}{\alpha_j}\Pi^B_{j+1,k}C_H. \label{eq:Q-def}
\end{equation}
Then
\begin{align}
  \norm{Q_{j,k}} &\le C\bar k^{-r}\bar j^{r+a-1}, \label{eq:Q-size}\\
  \norm{Q_{j,k}-Q_{j-1,k}} &\le C\bar k^{-r}\bar j^{r+a-2},
  \qquad 1\le j\le k-1. \label{eq:Q-variation}
\end{align}
\end{lemma}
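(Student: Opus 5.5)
The plan is to reduce both bounds to the polynomial decay of the slow transition in Lemma~\ref{lem:transition}, combined with elementary facts about the scalar sequence $\beta_j/\alpha_j=(\beta_0/\alpha_0)\bar j^{a-1}$.

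For \eqref{eq:Q-size}, submultiplicativity of the operator norm gives
\[
\norm{Q_{j,k}}\le \frac{\beta_j}{\alpha_j}\norm{\Pi^B_{j+1,k}}\norm{C_H}.
\]
Lemma~\ref{lem:transition} bounds the middle factor by $(\overline{j+1}/\bar k)^r$. As in the proof of Lemma~\ref{lem:weighted-sums}, $k_0\ge2$ gives $\overline{j+1}\le 2\bar j$. Hence
\[
\norm{Q_{j,k}}\le C\bar j^{a-1}\bar j^{r}\bar k^{-r},
\]
which is \eqref{eq:Q-size}.

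For \eqref{eq:Q-variation}, I would split the difference into a coefficient-variation part and a transition-variation part:
\[
Q_{j,k}-Q_{j-1,k}
=\Big(\tfrac{\beta_j}{\alpha_j}-\tfrac{\beta_{j-1}}{\alpha_{j-1}}\Big)\Pi^B_{j+1,k}C_H
+\tfrac{\beta_{j-1}}{\alpha_{j-1}}\big(\Pi^B_{j+1,k}-\Pi^B_{j,k}\big)C_H .
\]

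The first term is handled by the mean value theorem applied to $t\mapsto t^{a-1}$. This gives
\[
\big|\bar j^{a-1}-(\bar j-1)^{a-1}\big|\le (1-a)(\bar j-1)^{a-2}\le C\bar j^{a-2},
\]
using $\bar j-1\ge \bar j/2$, which again follows from $k_0\ge2$. Multiplying by $\norm{\Pi^B_{j+1,k}}\le C\bar j^r\bar k^{-r}$ yields $C\bar k^{-r}\bar j^{r+a-2}$.

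For the second term, I use the definition \eqref{eq:B-transition}. It gives $\Pi^B_{j,k}=\Pi^B_{j+1,k}(I-\beta_jB)$, because the factors are polynomials in $B$ and therefore commute. Consequently
\[
\Pi^B_{j+1,k}-\Pi^B_{j,k}=\beta_j\Pi^B_{j+1,k}B,
\]
whose norm is at most $C\bar j^{-1}\bar j^r\bar k^{-r}$. Multiplying by $\beta_{j-1}/\alpha_{j-1}\le C\bar j^{a-1}$ again gives $C\bar k^{-r}\bar j^{r+a-2}$.

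Adding the two contributions proves \eqref{eq:Q-variation}. The constant depends only on $\beta_0,\alpha_0,a,r,\norm{B},\norm{C_H}$ and the requirement $k_0\ge2$.

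The only step needing care is the second term. The obvious temptation is to bound it through the triangle inequality, but each transition matrix alone is only of size $\bar j^r\bar k^{-r}$. It is the exact one-factor telescoping identity for $\Pi^B$ that supplies the extra factor $\beta_j\asymp\bar j^{-1}$. Everything else is routine shifting of indices by one, absorbed by $k_0\ge2$.
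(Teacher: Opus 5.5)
Your proposal is correct and follows the paper's proof essentially step for step. You split the difference into the same two pieces: a coefficient-variation term handled by the mean value theorem, and a transition-variation term handled by the identity $\Pi^B_{j+1,k}-\Pi^B_{j,k}=\beta_j\Pi^B_{j+1,k}B$, with index shifts absorbed via $k_0\ge2$. One small point: the appeal to commutativity is unnecessary, since $\Pi^B_{j,k}=\Pi^B_{j+1,k}(I-\beta_jB)$ already follows from the multiplication order fixed in the definition.
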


\begin{proof}
First,
\[
  \frac{\beta_j}{\alpha_j}
  =\frac{\beta_0}{\alpha_0}\bar j^{a-1}.
\]
Using Lemma \ref{lem:transition},
\[
  \norm{Q_{j,k}}
  \le \frac{\beta_0}{\alpha_0}\norm{C_H}\bar j^{a-1}
      \left(\frac{\overline{j+1}}{\bar k}\right)^r
  \le C\bar k^{-r}\bar j^{r+a-1},
\]
which proves \eqref{eq:Q-size}.

For the variation bound, write
\[
  q_j\coloneqq \frac{\beta_j}{\alpha_j}=\frac{\beta_0}{\alpha_0}\bar j^{a-1}.
\]
Then
\begin{align*}
  Q_{j,k}-Q_{j-1,k}
  &=\left(q_j\Pi^B_{j+1,k}-q_{j-1}\Pi^B_{j,k}\right)C_H\\
  &=\left((q_j-q_{j-1})\Pi^B_{j+1,k}\right)C_H
    +q_{j-1}\left(\Pi^B_{j+1,k}-\Pi^B_{j,k}\right)C_H.
\end{align*}
Because $a-1<0$, the map $t\mapsto(t+k_0)^{a-1}$ is differentiable with derivative of magnitude $(1-a)(t+k_0)^{a-2}$. The mean-value theorem gives
\begin{equation}
  |q_j-q_{j-1}|\le C\bar j^{a-2}. \label{eq:q-diff}
\end{equation}
Moreover, from the definition of the transition matrix,
\[
  \Pi^B_{j,k}=\Pi^B_{j+1,k}(I-\beta_jB),
\]
so
\begin{equation}
  \Pi^B_{j+1,k}-\Pi^B_{j,k}=\beta_j\Pi^B_{j+1,k}B. \label{eq:transition-diff}
\end{equation}
Combining \eqref{eq:q-diff}, \eqref{eq:transition-diff}, Lemma \ref{lem:transition}, and $q_{j-1}\le C\bar j^{a-1}$ yields
\[
\begin{aligned}
  \norm{Q_{j,k}-Q_{j-1,k}}
  &\le C\bar j^{a-2}\left(\frac{\overline{j+1}}{\bar k}\right)^r
     +C\bar j^{a-1}\beta_j\left(\frac{\overline{j+1}}{\bar k}\right)^r\\
  &\le C\bar j^{a-2}\bar k^{-r}\bar j^r
     +C\bar j^{a-1}\bar j^{-1}\bar k^{-r}\bar j^r\\
  &\le C\bar k^{-r}\bar j^{r+a-2}.
\end{aligned}
\]
This proves \eqref{eq:Q-variation}.
\end{proof}

\section{Proof of the upper bound}\label{app:upper}

We now prove Theorem \ref{thm:upper}. The proof is divided into three lemmas and then assembled.

\begin{lemma}[Fast moments]\label{lem:fast-moments}
Under Assumption \ref{ass:stability}, there exists $C<\infty$ such that, for every $k\ge0$,
\begin{equation}
  \E\norm{X_k}^2\le C\alpha_k,\qquad
  \E\norm{X_k}^4\le C\alpha_k^2. \label{eq:fast-second-fourth}
\end{equation}
Consequently, for every $\rho\in[0,1]$,
\begin{equation}
  \E\norm{X_k}^{2+2\rho}\le C\alpha_k^{1+\rho}. \label{eq:fast-interp}
\end{equation}
\end{lemma}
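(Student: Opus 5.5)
The plan is a Lyapunov argument on the squared and fourth-power norms of the fast recursion \eqref{eq:fast}, closed by a one-step comparison that uses the slow decay of $\alpha_k$. Interpolation then gives the fractional moment. Write $D_k=(I-\alpha_kA)X_k$, so $X_{k+1}=D_k+\alpha_k\xi_{k+1}$ and, by \eqref{eq:matrix-contraction}, $\norm{D_k}\le(1-\lambda\alpha_k)\norm{X_k}$.

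\emph{Second moment.} Since $\xi_{k+1}$ is a martingale difference, the cross term vanishes after conditioning on $\mathcal F_k$. Using $\E[\norm{\xi_{k+1}}^2\mid\mathcal F_k]\le\sigma_4^2$ (Jensen applied to the fourth-moment bound), we get
\[
  \E\norm{X_{k+1}}^2\le(1-\lambda\alpha_k)^2\E\norm{X_k}^2+\alpha_k^2\sigma_4^2\le(1-\lambda\alpha_k)\E\norm{X_k}^2+\alpha_k^2\sigma_4^2 .
\]
We prove $u_k:=\E\norm{X_k}^2\le K\alpha_k$ by induction. Choose $K\ge\norm{X_0}^2/\alpha_0$ and $K\ge 2\sigma_4^2/\lambda$. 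Then
\[
  u_{k+1}\le K\alpha_k-(\lambda K-\sigma_4^2)\alpha_k^2\le K\alpha_k-\tfrac{\lambda K}{2}\alpha_k^2 .
\]
It remains to show $K\alpha_k(1-\tfrac{\lambda}{2}\alpha_k)\le K\alpha_{k+1}$. We have
\[
  \alpha_k/\alpha_{k+1}=(1+1/\bar k)^a\le 1+a/\bar k ,
\]
and the standing choice of $k_0$ gives $2a/\bar k\le(\lambda/2)\alpha_k$. Hence $\alpha_k-\alpha_{k+1}\le \alpha_{k+1}a/\bar k\le(\lambda/4)\alpha_k^2$, which closes the induction.

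\emph{Fourth moment.} Expand $\norm{D_k+\alpha_k\xi}^4=(\norm{D_k}^2+2\alpha_k\ip{D_k}{\xi}+\alpha_k^2\norm{\xi}^2)^2$ and take conditional expectations.
\begin{itemize}[leftmargin=*]
\item The term that is linear in $\xi$ vanishes.
\item The remaining cross terms are bounded by $C\alpha_k^2\norm{D_k}^2\sigma_4^2$, $C\alpha_k^3\norm{D_k}\,\E[\norm{\xi}^3\mid\mathcal F_k]$ and $\alpha_k^4\sigma_4^4$.
\item The cubic term is handled by Young's inequality, which reduces it to the $\alpha_k^2\norm{D_k}^2$ and $\alpha_k^4$ terms.
\end{itemize}
Using $(1-\lambda\alpha_k)^4\le 1-\lambda\alpha_k$ this gives
\[
  v_{k+1}\le(1-\lambda\alpha_k)v_k+C\alpha_k^2u_k+C\alpha_k^4\le(1-\lambda\alpha_k)v_k+C'\alpha_k^3 ,
\]
with $v_k=\E\norm{X_k}^4$ and the step-one bound $u_k\le K\alpha_k$. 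The same induction applies with target $K'\alpha_k^2$. The needed comparison is $\alpha_k^2-\alpha_{k+1}^2\le(\lambda/2)\alpha_k^3$, which follows from $\alpha_k^2/\alpha_{k+1}^2\le1+2a/\bar k+O(\bar k^{-2})$ together with the choice of $k_0$; this is presumably what \eqref{eq:alpha2-compare} records.

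\emph{Interpolation.} For $\rho\in[0,1]$, Hölder's (Lyapunov's) inequality with exponent $1+\rho\in[1,2]$ between the second and fourth moments gives
\[
  \E\norm{X_k}^{2+2\rho}=\E\bigl(\norm{X_k}^2\bigr)^{1+\rho}\le(\E\norm{X_k}^2)^{1-\rho}(\E\norm{X_k}^4)^{\rho}\le C\alpha_k^{1-\rho}\alpha_k^{2\rho}=C\alpha_k^{1+\rho}.
\]

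The only delicate points are bookkeeping in the fourth-moment expansion, specifically bounding the third-moment cross term via Young's inequality, and verifying the step-size comparisons so that the induction constant does not depend on $k$.
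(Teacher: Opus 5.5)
Your proposal is correct and follows essentially the same route as the paper: conditional second- and fourth-moment recursions, each closed by an induction against a step-size comparison of the form $\alpha_k-\alpha_{k+1}\le\frac{\lambda}{2}\alpha_k^2$ or its squared analogue (the paper's \eqref{eq:alpha2-compare}), followed by interpolation. The differences are cosmetic. You split the cubic cross term as $\alpha_k^3\norm{D_k}\le\frac12(\alpha_k^2\norm{D_k}^2+\alpha_k^4)$ instead of absorbing it into $\frac{\lambda}{4}\alpha_k\norm{X_k}^4$ via Young's inequality with exponents $4$ and $4/3$, which lets you use the cruder $(1-\lambda\alpha_k)^4\le1-\lambda\alpha_k$; and you interpolate by H\"older between the second and fourth moments, where the paper applies Lyapunov's inequality to the fourth moment alone.
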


\begin{proof}
Let $D_k=I-\alpha_kA$. By Assumption \ref{ass:stability}, $\norm{D_k}\le1-\lambda\alpha_k$.

\paragraph{Second moment.}
Conditioning on $\mathcal F_k$ and using $\E[\xi_{k+1}\mid\mathcal F_k]=0$,
\[
\begin{aligned}
  \E[\norm{X_{k+1}}^2\mid\mathcal F_k]
  &=\E[\norm{D_kX_k+\alpha_k\xi_{k+1}}^2\mid\mathcal F_k]\\
  &=\norm{D_kX_k}^2
    +2\alpha_k\ip{D_kX_k}{\E[\xi_{k+1}\mid\mathcal F_k]}
    +\alpha_k^2\E[\norm{\xi_{k+1}}^2\mid\mathcal F_k]\\
  &\le (1-\lambda\alpha_k)^2\norm{X_k}^2+\sigma_4^2\alpha_k^2.
\end{aligned}
\]
Since $\lambda\alpha_k\le1$ after increasing $k_0$ if necessary, $(1-\lambda\alpha_k)^2\le1-\lambda\alpha_k$. Thus
\begin{equation}
  m_{k+1}\le (1-\lambda\alpha_k)m_k+\sigma_4^2\alpha_k^2,\qquad
  m_k\coloneqq \E\norm{X_k}^2. \label{eq:m-recursion}
\end{equation}
We prove $m_k\le C_2\alpha_k$ by induction. Since $X_0$ is deterministic and $\alpha_0$ in the display below denotes the fixed numerator in the stepsize formula, choose $C_2$ so large that $m_0\le C_2\alpha_0 k_0^{-a}$. Because $\alpha_k=\alpha_0\bar k^{-a}$,
\[
  \frac{\alpha_{k+1}}{\alpha_k}=\left(\frac{\bar k}{\bar k+1}\right)^a
  =\left(1+\frac1{\bar k}\right)^{-a}.
\]
The inequality $(1+t)^{-a}\ge1-at$ for $t\ge0$ gives
\begin{equation}
  \alpha_{k+1}\ge \alpha_k\left(1-\frac{a}{\bar k}\right). \label{eq:alpha-lower}
\end{equation}
Since $a<1$, we have $\bar k^{-1}=o(\alpha_k)$ because $\alpha_k=\alpha_0\bar k^{-a}$. Hence, after increasing $k_0$ if necessary,
\begin{equation}
  \frac{a}{\bar k}\le \frac{\lambda}{2}\alpha_k,
  \qquad k\ge0. \label{eq:k0-condition-second}
\end{equation}
Combining \eqref{eq:alpha-lower} and \eqref{eq:k0-condition-second},
\begin{equation}
  \alpha_{k+1}\ge \alpha_k\left(1-\frac{\lambda}{2}\alpha_k\right). \label{eq:alpha-rec-compare}
\end{equation}
Assume $m_k\le C_2\alpha_k$. From \eqref{eq:m-recursion},
\[
  m_{k+1}\le C_2\alpha_k(1-\lambda\alpha_k)+\sigma_4^2\alpha_k^2
  =C_2\alpha_k\left(1-\lambda\alpha_k+\frac{\sigma_4^2}{C_2}\alpha_k\right).
\]
Choose $C_2\ge 2\sigma_4^2/\lambda$. Then
\[
  m_{k+1}\le C_2\alpha_k\left(1-\frac{\lambda}{2}\alpha_k\right)
  \le C_2\alpha_{k+1},
\]
where the last inequality uses \eqref{eq:alpha-rec-compare}. This completes the induction and proves $m_k\le C_2\alpha_k$.

\paragraph{Fourth moment.}
Let $u=D_kX_k$ and $v=\alpha_k\xi_{k+1}$. The identity
\[
  \norm{u+v}^4=(\norm{u}^2+2\ip{u}{v}+\norm{v}^2)^2
\]
expands as
\begin{align*}
  \norm{u+v}^4
  &=\norm{u}^4+4\norm{u}^2\ip{u}{v}
    +2\norm{u}^2\norm{v}^2+4\ip{u}{v}^2\\
  &\qquad +4\ip{u}{v}\norm{v}^2+\norm{v}^4.
\end{align*}
Taking conditional expectation, the term $4\norm{u}^2\ip{u}{v}$ vanishes because $\E[v\mid\mathcal F_k]=0$. For the remaining terms, using $\norm{u}\le\norm{X_k}$ and \eqref{eq:moment-noise},
\begin{align*}
  \E[2\norm{u}^2\norm{v}^2\mid\mathcal F_k]
  &\le 2\sigma_4^2\alpha_k^2\norm{X_k}^2,\\
  \E[4\ip{u}{v}^2\mid\mathcal F_k]
  &\le 4\norm{u}^2\E[\norm{v}^2\mid\mathcal F_k]
   \le 4\sigma_4^2\alpha_k^2\norm{X_k}^2,\\
  \E[\norm{v}^4\mid\mathcal F_k]
  &\le \sigma_4^4\alpha_k^4.
\end{align*}
For the mixed cubic term, first use Cauchy's inequality and Jensen's inequality:
\[
  \left|\E[4\ip{u}{v}\norm{v}^2\mid\mathcal F_k]\right|
  \le 4\norm{u}\E[\norm{v}^3\mid\mathcal F_k]
  \le 4\sigma_4^3\alpha_k^3\norm{X_k}.
\]
We now apply Young's inequality in the form
\[
  c_1\alpha_k^3 t
  \le \frac{\lambda}{4}\alpha_k t^4+C\alpha_k^{11/3},
  \qquad t\ge0, \label{eq:young-fourth}
\]
which follows from $ab\le \varepsilon a^4 + C\varepsilon^{-1/3}b^{4/3}$ by taking $a=t$, $b=c_1\alpha_k^3$, and $\varepsilon=(\lambda/4)\alpha_k$. Since $\alpha_k\le1$, $\alpha_k^{11/3}\le\alpha_k^3$. Therefore
\begin{equation}
  \left|\E[4\ip{u}{v}\norm{v}^2\mid\mathcal F_k]\right|
  \le \frac{\lambda}{4}\alpha_k\norm{X_k}^4+C\alpha_k^3. \label{eq:mixed-cubic}
\end{equation}
Finally,
\[
  \norm{u}^4\le(1-\lambda\alpha_k)^4\norm{X_k}^4.
\]
For $\lambda\alpha_k\le1/4$, the elementary inequality
\[
  (1-x)^4\le 1-\frac{15}{8}x,\qquad 0\le x\le\frac14,
\]
with $x=\lambda\alpha_k$ gives
\[
  \norm{u}^4\le \left(1-\frac{15\lambda}{8}\alpha_k\right)\norm{X_k}^4.
\]
Combining this estimate with \eqref{eq:mixed-cubic} and the remaining second- and fourth-order noise terms gives
\[
  \E[\norm{X_{k+1}}^4\mid\mathcal F_k]
  \le \left(1-\frac{13\lambda}{8}\alpha_k\right)\norm{X_k}^4
       +C\alpha_k^2\norm{X_k}^2+C\alpha_k^3.
\]
Absorbing constants and taking expectation,
\begin{equation}
  n_{k+1}\le(1-c_4\alpha_k)n_k+C\alpha_k^2m_k+C\alpha_k^3,
  \qquad n_k\coloneqq\E\norm{X_k}^4, \label{eq:n-recursion-pre}
\end{equation}
for $c_4=13\lambda/8$ replaced by any smaller positive constant. The already-proved second moment bound gives $m_k\le C\alpha_k$, hence
\begin{equation}
  n_{k+1}\le(1-c_4\alpha_k)n_k+C\alpha_k^3. \label{eq:n-recursion}
\end{equation}
We prove $n_k\le C_4\alpha_k^2$ by induction. Choose $C_4$ so large that $n_0\le C_4(\alpha_0 k_0^{-a})^2$. Since
\[
  \frac{\alpha_{k+1}^2}{\alpha_k^2}=\left(1+\frac1{\bar k}\right)^{-2a}
  \ge1-\frac{2a}{\bar k},
\]
and since $\bar k^{-1}=o(\alpha_k)$, after increasing $k_0$ if necessary,
\begin{equation}
  \alpha_{k+1}^2\ge \alpha_k^2\left(1-\frac{c_4}{2}\alpha_k\right). \label{eq:alpha2-compare}
\end{equation}
Assume $n_k\le C_4\alpha_k^2$. From \eqref{eq:n-recursion},
\[
  n_{k+1}\le C_4\alpha_k^2(1-c_4\alpha_k)+C\alpha_k^3
  =C_4\alpha_k^2\left(1-c_4\alpha_k+\frac{C}{C_4}\alpha_k\right).
\]
Choosing $C_4\ge 2C/c_4$ gives
\[
  n_{k+1}\le C_4\alpha_k^2\left(1-\frac{c_4}{2}\alpha_k\right)
  \le C_4\alpha_{k+1}^2.
\]
This completes the induction and proves the fourth moment bound.

\paragraph{Interpolation.}
Let $p=2+2\rho\in[2,4]$. By Lyapunov's inequality,
\[
  \E\norm{X_k}^p\le \left(\E\norm{X_k}^4\right)^{p/4}
  \le C\left(\alpha_k^2\right)^{p/4}
  =C\alpha_k^{p/2}
  =C\alpha_k^{1+\rho}.
\]
This proves \eqref{eq:fast-interp} and the lemma.
\end{proof}

\begin{lemma}[Averaging of the locally linear fast leakage]\label{lem:linear-leakage}
Under Assumptions \ref{ass:stability} and \ref{ass:strength},
\begin{equation}
  \E\left\|\sum_{j=0}^{k-1}\beta_j\Pi^B_{j+1,k}HX_j\right\|^2
  \le C\bar k^{-1}. \label{eq:linear-leakage-bound}
\end{equation}
\end{lemma}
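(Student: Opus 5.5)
The plan is to use the identity $AX_j=\alpha_j^{-1}(X_j-X_{j+1})+\xi_{j+1}$, which follows from \eqref{eq:fast}, and multiply it by $C_H=HA^{-1}$. This gives $HX_j=C_H\xi_{j+1}+\alpha_j^{-1}C_H(X_j-X_{j+1})$. With $Q_{j,k}$ as in Lemma~\ref{lem:abel-coefficients}, the sum splits as
\[
  \sum_{j=0}^{k-1}\beta_j\Pi^B_{j+1,k}HX_j
  =\sum_{j=0}^{k-1}\beta_j\Pi^B_{j+1,k}C_H\xi_{j+1}
  +\sum_{j=0}^{k-1}Q_{j,k}(X_j-X_{j+1}).
\]
Since $(u+v)^2\le2u^2+2v^2$, it suffices to bound the second moment of each piece by $C\bar k^{-1}$.

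The first piece is a martingale transform with deterministic coefficients, because the $\mathcal F_k$-martingale differences $\xi_{j+1}$ are orthogonal in $L^2$. Hence its second moment is at most $\norm{C_H}^2\sigma_4^2\sum_{j<k}\beta_j^2\norm{\Pi^B_{j+1,k}}^2$. The conditional second moment of $\xi_{j+1}$ is bounded via Jensen from the fourth-moment bound \eqref{eq:moment-noise}. By \eqref{eq:weighted-second} of Lemma~\ref{lem:weighted-sums}, which uses $r>1/2$, this is $O(\bar k^{-1})$.

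For the second piece I apply summation by parts:
\[
  \sum_{j=0}^{k-1}Q_{j,k}(X_j-X_{j+1})
  =Q_{0,k}X_0-Q_{k-1,k}X_k+\sum_{j=1}^{k-1}(Q_{j,k}-Q_{j-1,k})X_j.
\]
I bound each of the three terms in $L^2$ using Minkowski and Lemma~\ref{lem:fast-moments}, i.e.\ $(\E\norm{X_j}^2)^{1/2}\le C\bar j^{-a/2}$.

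The initial boundary term is deterministic. By \eqref{eq:Q-size} it is bounded by $C\bar k^{-r}k_0^{r+a-1}\norm{X_0}$, whose square is $O(\bar k^{-2r})=O(\bar k^{-1})$.

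For the terminal boundary term, \eqref{eq:Q-size} gives $\norm{Q_{k-1,k}}\le C\bar k^{a-1}$, using $\overline{k-1}\asymp\bar k$. Its $L^2$ norm is therefore $\le C\bar k^{a-1-a/2}=C\bar k^{a/2-1}$, whose square is $\bar k^{a-2}\le\bar k^{-1}$.

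For the interior sum, \eqref{eq:Q-variation} gives an $L^2$ norm at most $C\bar k^{-r}\sum_{j<k}\bar j^{r+a-2-a/2}=C\bar k^{-r}\sum_{j<k}\bar j^{\,r+a/2-2}$. The exponent satisfies $r+a/2-2>-1$ if and only if $r>1-a/2$. I split into three cases:
\begin{itemize}
\item If $r+a/2-1>0$, the sum is $C\bar k^{r+a/2-1}$, giving a bound $C\bar k^{a/2-1}$ with square $\bar k^{a-2}\le\bar k^{-1}$.
\item If the exponent equals $-1$, the sum is $O(\log\bar k)$, giving $\bar k^{-2r}\log^2\bar k\le C\bar k^{-1}$ since $r>1/2$.
\item Otherwise the sum is bounded and the square is $O(\bar k^{-2r})\le C\bar k^{-1}$.
\end{itemize}

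The step needing the most care is this interior variation term. Minkowski loses all cancellation between the $X_j$, so one must check that the extra $\bar j^{-1}$ decay in \eqref{eq:Q-variation} compensates the $\bar j^{-a/2}$ size of $X_j$. The case analysis above shows it does, because $a<1$ gives $a-2<-1$ and $r>1/2$ handles the remaining regimes. Combining the martingale bound and the three Abel bounds then yields \eqref{eq:linear-leakage-bound}.
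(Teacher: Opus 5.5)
Your proposal is correct and follows the paper's proof essentially step for step. You use the same identity $HX_j=C_H\alpha_j^{-1}(X_j-X_{j+1})+C_H\xi_{j+1}$ and bound the martingale convolution via \eqref{eq:weighted-second}. You then bound the initial boundary, terminal boundary, and interior pieces of the Abel sum with Lemma~\ref{lem:abel-coefficients} and Lemma~\ref{lem:fast-moments}, using the same three-case exponent analysis.
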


\begin{proof}
Let $C_H=HA^{-1}$. From the fast recursion,
\[
  X_{j+1}=X_j-\alpha_jAX_j+\alpha_j\xi_{j+1}.
\]
Rearranging,
\begin{equation}
  AX_j=\alpha_j^{-1}(X_j-X_{j+1})+\xi_{j+1}. \label{eq:poisson-identity}
\end{equation}
Multiplying by $HA^{-1}=C_H$ gives
\begin{equation}
  HX_j=C_H\alpha_j^{-1}(X_j-X_{j+1})+C_H\xi_{j+1}. \label{eq:H-identity}
\end{equation}
Therefore
\begin{align}
  \sum_{j=0}^{k-1}\beta_j\Pi^B_{j+1,k}HX_j
  &=M_k+T_k, \label{eq:S-M-T}\\
  M_k&\coloneqq\sum_{j=0}^{k-1}\beta_j\Pi^B_{j+1,k}C_H\xi_{j+1}, \label{eq:M-def}\\
  T_k&\coloneqq\sum_{j=0}^{k-1}Q_{j,k}(X_j-X_{j+1}), \label{eq:T-def}
\end{align}
where $Q_{j,k}$ is defined in \eqref{eq:Q-def}.

We first bound $M_k$. For $i<j$, the random vector $\xi_{i+1}$ is $\mathcal F_j$-measurable and $\E[\xi_{j+1}\mid\mathcal F_j]=0$. Hence cross terms vanish. Thus
\[
\begin{aligned}
  \E\norm{M_k}^2
  &=\sum_{j=0}^{k-1}\beta_j^2\E\norm{\Pi^B_{j+1,k}C_H\xi_{j+1}}^2\\
  &\le \norm{C_H}^2\sigma_4^2\sum_{j=0}^{k-1}\beta_j^2\norm{\Pi^B_{j+1,k}}^2
  \le C\bar k^{-1},
\end{aligned}
\]
where the last step uses Lemma \ref{lem:weighted-sums} and $r>1/2$.

We now bound $T_k$. Abel summation gives
\begin{equation}
  T_k=Q_{0,k}X_0-Q_{k-1,k}X_k+
      \sum_{j=1}^{k-1}(Q_{j,k}-Q_{j-1,k})X_j. \label{eq:abel}
\end{equation}
Indeed, expanding the right-hand side produces
\[
  Q_{0,k}X_0+
  \sum_{j=1}^{k-1}Q_{j,k}X_j-
  \sum_{j=1}^{k-1}Q_{j-1,k}X_j-Q_{k-1,k}X_k
  =\sum_{j=0}^{k-1}Q_{j,k}(X_j-X_{j+1}).
\]
We treat the three terms in \eqref{eq:abel} separately.

By Lemma \ref{lem:abel-coefficients},
\[
  \norm{Q_{0,k}X_0}^2\le C\bar k^{-2r}\norm{X_0}^2\le C\bar k^{-1},
\]
because $r>1/2$. For the terminal boundary term, Lemma \ref{lem:abel-coefficients} and Lemma \ref{lem:fast-moments} imply
\[
\begin{aligned}
  \E\norm{Q_{k-1,k}X_k}^2
  &\le C\bar k^{2a-2}\E\norm{X_k}^2
   \le C\bar k^{2a-2}\alpha_k
   \le C\bar k^{a-2}.
\end{aligned}
\]
Since $a<1$, $\bar k^{a-2}\le C\bar k^{-1}$.

It remains to bound the interior sum. By the triangle inequality in $L^2$,
\[
\begin{aligned}
  \left(\E\left\|\sum_{j=1}^{k-1}(Q_{j,k}-Q_{j-1,k})X_j\right\|^2\right)^{1/2}
  &\le \sum_{j=1}^{k-1}\norm{Q_{j,k}-Q_{j-1,k}}\left(\E\norm{X_j}^2\right)^{1/2}\\
  &\le C\bar k^{-r}\sum_{j=1}^{k-1}\bar j^{r+a-2}\alpha_j^{1/2}\\
  &\le C\bar k^{-r}\sum_{j=1}^{k-1}\bar j^{r+a-2-a/2}\\
  &=C\bar k^{-r}\sum_{j=1}^{k-1}\bar j^{r+a/2-2}.
\end{aligned}
\]
If $r+a/2-2<-1$, then the sum is bounded by a constant, and the squared expression is at most $C\bar k^{-2r}\le C\bar k^{-1}$. If $r+a/2-2=-1$, then the sum is at most $C\log(\bar k)$, and the squared expression is at most $C\bar k^{-2r}(\log \bar k)^2\le C\bar k^{-1}$ because $r>1/2$. If $r+a/2-2>-1$, then the sum is at most $C\bar k^{r+a/2-1}$, and the squared expression is at most
\[
  C\bar k^{-2r}\bar k^{2r+a-2}=C\bar k^{a-2}\le C\bar k^{-1}.
\]
Thus the interior term also has second moment at most $C\bar k^{-1}$.

Using $(u+v)^2\le2u^2+2v^2$ in \eqref{eq:S-M-T}, and using the bounds for $M_k$ and $T_k$, proves \eqref{eq:linear-leakage-bound}.
\end{proof}

\begin{lemma}[Nonlinear and slow-noise convolutions]\label{lem:nonlinear-slow-noise}
Under Assumptions \ref{ass:stability}, \ref{ass:remainder}, and \ref{ass:strength},
\begin{align}
  \E\left\|\sum_{j=0}^{k-1}\beta_j\Pi^B_{j+1,k}R(X_j)\right\|^2
  &\le C\left(\bar k^{-1}+\bar k^{-s}\right), \label{eq:R-bound}\\
  \E\left\|\sum_{j=0}^{k-1}\beta_j\Pi^B_{j+1,k}\zeta_{j+1}\right\|^2
  &\le C\bar k^{-1}. \label{eq:zeta-bound}
\end{align}
\end{lemma}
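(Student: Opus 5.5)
For \eqref{eq:R-bound}, I will bound the nonlinear convolution directly in $L^2$, without exploiting any cancellation. Minkowski's inequality in $L^2$ gives
\[
  \left(\E\left\|\sum_{j=0}^{k-1}\beta_j\Pi^B_{j+1,k}R(X_j)\right\|^2\right)^{1/2}
  \le \sum_{j=0}^{k-1}\beta_j\norm{\Pi^B_{j+1,k}}\left(\E\norm{R(X_j)}^2\right)^{1/2}.
\]
By Assumption~\ref{ass:remainder}, $\norm{R(X_j)}^2\le L_R^2\norm{X_j}^{2+2\rho}$. The interpolated moment bound \eqref{eq:fast-interp} of Lemma~\ref{lem:fast-moments} then gives
\[
  \left(\E\norm{R(X_j)}^2\right)^{1/2}\le C\alpha_j^{(1+\rho)/2}=C\bar j^{-s/2}.
\]
The right-hand side therefore has the form $C\sum_{j}\beta_j\norm{\Pi^B_{j+1,k}}\bar j^{-u}$ with $u=s/2\ge0$.

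Squaring this bound and applying the squared weighted-convolution estimate \eqref{eq:weighted-first-squared} of Lemma~\ref{lem:weighted-sums} yields $C(\bar k^{-1}+\bar k^{-s})$. That lemma requires $r>1/2$, which is Assumption~\ref{ass:strength}. Its case analysis covers the three regimes $s/2<r$, $s/2=r$ and $s/2>r$. The only delicate case is the borderline $u=r$, where $\bar k^{-2r}(\log\bar k)^2$ appears. It is absorbed into $\bar k^{-1}$ because $2r>1$, and Lemma~\ref{lem:weighted-sums} already handles this.

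For \eqref{eq:zeta-bound}, I will expand the square of the martingale convolution. For $i<j$, the term $\beta_i\Pi^B_{i+1,k}\zeta_{i+1}$ is $\mathcal F_j$-measurable, because $\Pi^B$ is deterministic. Since $\E[\zeta_{j+1}\mid\mathcal F_j]=0$, every cross term vanishes after conditioning on $\mathcal F_j$. This leaves
\[
  \sum_{j}\beta_j^2\E\norm{\Pi^B_{j+1,k}\zeta_{j+1}}^2\le\sigma_4^2\sum_j\beta_j^2\norm{\Pi^B_{j+1,k}}^2,
\]
where the inequality uses the conditional second-moment bound \eqref{eq:moment-noise}. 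The last sum is at most $C\bar k^{-1}$ by \eqref{eq:weighted-second}, again using $r>1/2$. This is the same computation already carried out for $M_k$ in the proof of Lemma~\ref{lem:linear-leakage}.

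There is no real obstacle; the only point requiring care is that the Minkowski step uses $L^2$ norms of $R(X_j)$ rather than second moments of $X_j$. This is why the $(2+2\rho)$-moment bound, and hence the fourth-moment estimate of Lemma~\ref{lem:fast-moments}, is needed.
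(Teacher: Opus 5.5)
Your proposal is correct and follows essentially the same route as the paper. For the nonlinear term, both use the $L^2$ triangle inequality, the $(2+2\rho)$-moment bound \eqref{eq:fast-interp}, and the squared weighted-convolution estimate \eqref{eq:weighted-first-squared} with $u=s/2$; for the slow-noise term, both use vanishing martingale cross terms together with \eqref{eq:weighted-second}.
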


\begin{proof}
For the nonlinear term, the triangle inequality in $L^2$, Assumption \ref{ass:remainder}, and Lemma \ref{lem:fast-moments} give
\[
\begin{aligned}
  &\left(\E\left\|\sum_{j=0}^{k-1}\beta_j\Pi^B_{j+1,k}R(X_j)\right\|^2\right)^{1/2}\\
  &\qquad\le \sum_{j=0}^{k-1}\beta_j\norm{\Pi^B_{j+1,k}}
       \left(\E\norm{R(X_j)}^2\right)^{1/2}\\
  &\qquad\le L_R\sum_{j=0}^{k-1}\beta_j\norm{\Pi^B_{j+1,k}}
       \left(\E\norm{X_j}^{2+2\rho}\right)^{1/2}\\
  &\qquad\le C\sum_{j=0}^{k-1}\beta_j\norm{\Pi^B_{j+1,k}}\alpha_j^{(1+\rho)/2}\\
  &\qquad\le C\sum_{j=0}^{k-1}\beta_j\norm{\Pi^B_{j+1,k}}\bar j^{-s/2}.
\end{aligned}
\]
By \eqref{eq:weighted-first-squared} with $u=s/2$,
\[
  \E\left\|\sum_{j=0}^{k-1}\beta_j\Pi^B_{j+1,k}R(X_j)\right\|^2
  \le C\left(\bar k^{-1}+\bar k^{-s}\right).
\]
This proves \eqref{eq:R-bound}.

For the slow-noise convolution, martingale cross terms vanish exactly as in the proof of Lemma \ref{lem:linear-leakage}. Hence
\[
\begin{aligned}
  \E\left\|\sum_{j=0}^{k-1}\beta_j\Pi^B_{j+1,k}\zeta_{j+1}\right\|^2
  &=\sum_{j=0}^{k-1}\beta_j^2\E\norm{\Pi^B_{j+1,k}\zeta_{j+1}}^2\\
  &\le \sigma_4^2\sum_{j=0}^{k-1}\beta_j^2\norm{\Pi^B_{j+1,k}}^2
  \le C\bar k^{-1},
\end{aligned}
\]
where the last inequality is \eqref{eq:weighted-second}. This proves \eqref{eq:zeta-bound}.
\end{proof}

\begin{proof}[Proof of Theorem \ref{thm:upper}]
Unroll the slow recursion \eqref{eq:slow} to obtain
\begin{align*}
  Y_k
  &=\Pi^B_{0,k}Y_0
    +\sum_{j=0}^{k-1}\beta_j\Pi^B_{j+1,k}HX_j
    +\sum_{j=0}^{k-1}\beta_j\Pi^B_{j+1,k}R(X_j)
    +\sum_{j=0}^{k-1}\beta_j\Pi^B_{j+1,k}\zeta_{j+1}.
\end{align*}
Using $(a_1+a_2+a_3+a_4)^2\le4(a_1^2+a_2^2+a_3^2+a_4^2)$ for nonnegative scalars,
\begin{align*}
  \E\norm{Y_k}^2
  &\le4\norm{\Pi^B_{0,k}Y_0}^2
    +4\E\left\|\sum_{j=0}^{k-1}\beta_j\Pi^B_{j+1,k}HX_j\right\|^2\\
  &\quad+4\E\left\|\sum_{j=0}^{k-1}\beta_j\Pi^B_{j+1,k}R(X_j)\right\|^2
    +4\E\left\|\sum_{j=0}^{k-1}\beta_j\Pi^B_{j+1,k}\zeta_{j+1}\right\|^2.
\end{align*}
The initial term satisfies, by Lemma \ref{lem:transition},
\[
  \norm{\Pi^B_{0,k}Y_0}^2\le C\bar k^{-2r}\norm{Y_0}^2.
\]
Because $r>1/2$, we have $2r>1$, and therefore
\[
  \bar k^{-2r}\le C\bar k^{-1}\le C(\bar k^{-1}+\bar k^{-s}).
\]
The locally linear term is bounded by Lemma \ref{lem:linear-leakage}; the nonlinear term and slow-noise term are bounded by Lemma \ref{lem:nonlinear-slow-noise}. Combining these estimates gives
\[
  \E\norm{Y_k}^2\le C\left(\bar k^{-1}+\bar k^{-s}\right).
\]
The moment estimates for $X_k$ are exactly Lemma \ref{lem:fast-moments}. This completes the proof.
\end{proof}

\section{Proof of the centering principle}\label{app:centering-proof}

\begin{proof}[Proof of Proposition~\ref{prop:centering}]
Unrolling \eqref{eq:centered-slow} gives
\begin{align*}
  Y_k
  &=\Pi^B_{0,k}Y_0
    +\sum_{j=0}^{k-1}\beta_j\Pi^B_{j+1,k}HX_j
    +\sum_{j=0}^{k-1}\beta_j\Pi^B_{j+1,k}m_j\\
  &\qquad
    +\sum_{j=0}^{k-1}\beta_j\Pi^B_{j+1,k}\nu_{j+1}
    +\sum_{j=0}^{k-1}\beta_j\Pi^B_{j+1,k}\zeta_{j+1}.
\end{align*}
The initial term satisfies $\norm{\Pi^B_{0,k}Y_0}^2\le C\bar k^{-2r}\le C\bar k^{-1}$ because $r>1/2$. Lemma~\ref{lem:linear-leakage} gives
\[
  \E\left\|\sum_{j=0}^{k-1}\beta_j\Pi^B_{j+1,k}HX_j\right\|^2\le C\bar k^{-1}.
\]
The slow martingale noise is bounded as in Lemma~\ref{lem:nonlinear-slow-noise}:
\[
  \E\left\|\sum_{j=0}^{k-1}\beta_j\Pi^B_{j+1,k}\zeta_{j+1}\right\|^2\le C\bar k^{-1}.
\]

We next treat the predictable component $m_j$. By the $L^2$ triangle inequality, \eqref{eq:m-assumption}, and Lemma~\ref{lem:transition},
\begin{align*}
  \left(\E\left\|\sum_{j=0}^{k-1}\beta_j\Pi^B_{j+1,k}m_j\right\|^2\right)^{1/2}
  &\le C\bar k^{-r}\sum_{j=0}^{k-1}\bar j^{r-1}\big(\E\norm{m_j}^2\big)^{1/2}\\
  &\le C\bar k^{-r}\sum_{j=0}^{k-1}\bar j^{r-q-1}.
\end{align*}
If $q<r$, the last sum is at most $C\bar k^{r-q}$; if $q=r$, it is at most $C\log(\bar k)$; if $q>r$, it is bounded by a constant. Therefore
\begin{equation}
  \E\left\|\sum_{j=0}^{k-1}\beta_j\Pi^B_{j+1,k}m_j\right\|^2
  \le C\Psi_q(k)^2. \label{eq:m-centered-bound}
\end{equation}

It remains to control the martingale-centered nonlinear fluctuation. Since $\nu_{j+1}$ is a martingale difference and the matrices $\beta_j\Pi^B_{j+1,k}$ are $\mathcal F_j$-measurable deterministic coefficients, cross terms vanish. Hence, using \eqref{eq:nu-assumption},
\begin{align*}
  \E\left\|\sum_{j=0}^{k-1}\beta_j\Pi^B_{j+1,k}\nu_{j+1}\right\|^2
  &\le C\sum_{j=0}^{k-1}\beta_j^2\norm{\Pi^B_{j+1,k}}^2\alpha_j^{1+\rho}\\
  &\le C\bar k^{-2r}\sum_{j=0}^{k-1}\bar j^{2r-2-s},
\end{align*}
where $s=a(1+\rho)$. If $2r-2-s<-1$, the sum is bounded, and the display is at most $C\bar k^{-2r}\le C\bar k^{-1}$. If $2r-2-s=-1$, the display is at most $C\bar k^{-2r}\log(\bar k)$, which is $O(\bar k^{-1})$ because $2r=s+1>1$. If $2r-2-s>-1$, the sum is at most $C\bar k^{2r-1-s}$, and the display is at most $C\bar k^{-1-s}\le C\bar k^{-1}$. Thus
\begin{equation}
  \E\left\|\sum_{j=0}^{k-1}\beta_j\Pi^B_{j+1,k}\nu_{j+1}\right\|^2
  \le C\bar k^{-1}. \label{eq:nu-centered-bound}
\end{equation}
Combining the five displayed bounds and using $\|z_1+\cdots+z_5\|^2\le5\sum_i\|z_i\|^2$ proves \eqref{eq:centering-rate}. If $m_k\equiv0$, the term \eqref{eq:m-centered-bound} is absent, and the rate is $C\bar k^{-1}$.
\end{proof}

\section{Proof of online bias tracking}
\label{app:bias-tracking-proof}

This appendix proves Theorem~\ref{thm:bias-tracking}. The argument has two steps. First, the tracker $M_k$ inherits the $L^2$ decay of the nonlinear input $R(X_k)$. Second, the corrected residual $R(X_k)-M_k$ is written as a scaled difference of consecutive tracker values, so its slow convolution is controlled by Abel summation.

For $0\le i\le k$, define the scalar tracking transition
\[
  \Pi^\gamma_{i,k}
  =
  \begin{cases}
    \prod_{\ell=i}^{k-1}(1-\gamma_\ell), & i<k,\\
    1, & i=k.
  \end{cases}
\]

\begin{lemma}[Moment bound for the bias tracker]
\label{lem:bias-tracker-moment}
Under the assumptions of Theorem~\ref{thm:bias-tracking},
\[
  \big(\E\norm{M_k}^2\big)^{1/2}\le C\bar k^{-s/2},
  \qquad k\ge1.
\]
Consequently, $\E\norm{M_k}^2\le C\bar k^{-s}$.
\end{lemma}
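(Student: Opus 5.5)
We unroll the tracker recursion \eqref{eq:bias-tracker} with the scalar transition $\Pi^\gamma_{i,k}$:
\[
  M_k=\Pi^\gamma_{0,k}M_0+\sum_{j=0}^{k-1}\gamma_j\Pi^\gamma_{j+1,k}R(X_j).
\]
Since $M_0$ is deterministic, the first term is bounded by $\norm{M_0}\exp\bigl(-\sum_{\ell<k}\gamma_\ell\bigr)$. Because $c<1$, we have $\sum_{\ell<k}\gamma_\ell\ge C^{-1}(\bar k^{1-c}-k_0^{1-c})$, so this term decays faster than any polynomial, in particular faster than $\bar k^{-s/2}$.

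For the second term we apply the $L^2$ triangle inequality, exactly as in Lemma~\ref{lem:nonlinear-slow-noise}. Assumption~\ref{ass:remainder} and Lemma~\ref{lem:fast-moments} give $(\E\norm{R(X_j)}^2)^{1/2}\le L_R(\E\norm{X_j}^{2+2\rho})^{1/2}\le C\alpha_j^{(1+\rho)/2}\le C\bar j^{-s/2}$. Note that the fast recursion \eqref{eq:deb-fast} is identical to \eqref{eq:fast}, so that lemma applies verbatim. It therefore suffices to prove the deterministic estimate
\[
  \sum_{j=0}^{k-1}\gamma_j\Pi^\gamma_{j+1,k}\bar j^{-u}\le C\bar k^{-u},\qquad u=s/2,
\]
which is the analogue of Lemma~\ref{lem:weighted-sums} for an exponent $c<1$ rather than $1$.

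We prove this estimate by splitting the sum at $j=\lfloor k/2\rfloor$.

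For $j\ge k/2$, we have $\bar j^{-u}\le 2^u\bar k^{-u}$. The telescoping identity $\sum_{j<k}\gamma_j\Pi^\gamma_{j+1,k}=1-\Pi^\gamma_{0,k}\le1$ (valid since $\gamma_j\le1$) then bounds this part by $C\bar k^{-u}$.

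For $j<k/2$, we bound $\Pi^\gamma_{j+1,k}\le\exp\bigl(-\sum_{\ell=\lfloor k/2\rfloor}^{k-1}\gamma_\ell\bigr)\le\exp(-c'\bar k^{1-c})$. The number of terms is at most $k$ and each weight satisfies $\gamma_j\bar j^{-u}\le C$, so this part is at most $Ck\exp(-c'\bar k^{1-c})$, which is $o(\bar k^{-u})$ uniformly in $k$. The constant $c'>0$ uses only $\gamma_\ell\ge\gamma_0\bar k^{-c}$ for $\ell<k$, together with the requirement that the window has at least $k/2$ terms, which is fine for $k\ge2$; the case $k=1$ is absorbed into $C$.

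Combining the two parts gives $(\E\norm{M_k}^2)^{1/2}\le C\bar k^{-s/2}$, and squaring gives the second claim.

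The only step that requires care is the far-past part of the split. There, the stretched-exponential decay $\exp(-c'\bar k^{1-c})$ must beat the polynomial factor $k\,\bar k^{u}$ with a constant uniform in $k$. This holds because $\sup_{t\ge1}t^{1+u}e^{-c't^{1-c}}<\infty$ for $c<1$. No condition on $c$ beyond $c<1$ and $\gamma_k\le1$ is needed for this lemma; the stronger conditions in \eqref{eq:c-condition} are used only later, in the Abel-transform step for $\widetilde Y_k$.
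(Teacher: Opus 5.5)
Your proposal is correct and follows the paper's proof essentially step for step. It uses the same unrolling, the same $L^2$ triangle inequality with $\|R(X_j)\|_{L^2}\le C\bar j^{-s/2}$ from Lemma~\ref{lem:fast-moments}, and the same split at $\lfloor k/2\rfloor$: stretched-exponential control of $\Pi^\gamma_{0,k}$ and of the far-past half, and the telescoping identity $\sum_{j<k}\gamma_j\Pi^\gamma_{j+1,k}=1-\Pi^\gamma_{0,k}\le1$ for the recent half. One cosmetic point: for odd $k$ your far-past window $\sum_{\ell=\lfloor k/2\rfloor}^{k-1}\gamma_\ell$ should start at $\lceil k/2\rceil$, which changes only constants.
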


\begin{proof}
Unrolling \eqref{eq:bias-tracker} gives
\[
  M_k=\Pi^\gamma_{0,k}M_0+
  \sum_{i=0}^{k-1}\gamma_i\Pi^\gamma_{i+1,k}R(X_i).
\]
By Assumption~\ref{ass:remainder} and Lemma~\ref{lem:fast-moments},
\[
  \big(\E\norm{R(X_i)}^2\big)^{1/2}
  \le L_R\big(\E\norm{X_i}^{2+2\rho}\big)^{1/2}
  \le C\alpha_i^{(1+\rho)/2}
  \le C\bar i^{-s/2}. \label{eq:R-L2-decay-bias}
\]
Thus, by the $L^2$ triangle inequality,
\begin{equation}
  \norm{M_k}_{L^2}
  \le \Pi^\gamma_{0,k}\norm{M_0}
     +C\sum_{i=0}^{k-1}\gamma_i\Pi^\gamma_{i+1,k}\bar i^{-s/2}. \label{eq:M-unroll-bound}
\end{equation}
We bound the two terms on the right.

Because $c<1$ and $\gamma_i=\gamma_0\bar i^{-c}$,
\[
  \sum_{i=0}^{k-1}\gamma_i
  \ge c_0\bar k^{1-c}
\]
for all $k$ and a constant $c_0>0$ after changing constants for small $k$. Hence
\[
  \Pi^\gamma_{0,k}
  \le \exp\!\left(-\sum_{i=0}^{k-1}\gamma_i\right)
  \le \exp(-c_0\bar k^{1-c})
  \le C\bar k^{-s/2}.
\]
This controls the initial term.

For the convolution term, split the sum in \eqref{eq:M-unroll-bound} at $\lfloor k/2\rfloor$. If $i\le k/2$, then
\[
  \sum_{\ell=i+1}^{k-1}\gamma_\ell\ge c_1\bar k^{1-c}
\]
for some $c_1>0$, and therefore
\[
  \gamma_i\Pi^\gamma_{i+1,k}\bar i^{-s/2}
  \le C\bar i^{-c-s/2}e^{-c_1\bar k^{1-c}}.
\]
Summing over $i\le k/2$ gives at most $C\bar k^{-s/2}$, because the stretched-exponential factor dominates every polynomial. If $i>k/2$, then $\bar i^{-s/2}\le C\bar k^{-s/2}$, and
\[
  \sum_{i=\lfloor k/2\rfloor+1}^{k-1}\gamma_i\Pi^\gamma_{i+1,k}
  \le \sum_{i=0}^{k-1}\gamma_i\Pi^\gamma_{i+1,k}
  =1-\Pi^\gamma_{0,k}
  \le 1.
\]
Therefore the recent half of the sum is also at most $C\bar k^{-s/2}$. Combining the initial, early, and recent bounds proves the lemma.
\end{proof}

\begin{lemma}[Abel bound for the tracked residual]
\label{lem:tracked-residual-abel}
Under the assumptions of Theorem~\ref{thm:bias-tracking},
\begin{equation}
  \E\left\|
  \sum_{j=0}^{k-1}\beta_j\Pi^B_{j+1,k}\big(R(X_j)-M_j\big)
  \right\|^2
  \le C\bar k^{-1}. \label{eq:tracked-residual-bound}
\end{equation}
\end{lemma}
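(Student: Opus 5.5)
The plan is to mimic the proof of Lemma~\ref{lem:linear-leakage}, now using the tracker identity $R(X_j)-M_j=\gamma_j^{-1}(M_{j+1}-M_j)$, which is immediate from \eqref{eq:bias-tracker}. Define coefficients
\[
  G_{j,k}\coloneqq \frac{\beta_j}{\gamma_j}\Pi^B_{j+1,k},\qquad 0\le j\le k-1,
\]
so that the sum in \eqref{eq:tracked-residual-bound} equals $\sum_{j=0}^{k-1}G_{j,k}(M_{j+1}-M_j)$. Abel summation, exactly as in \eqref{eq:abel}, rewrites this as
\[
  G_{k-1,k}M_k-G_{0,k}M_0-\sum_{j=1}^{k-1}(G_{j,k}-G_{j-1,k})M_j .
\]
Each of the three pieces will then be bounded in $L^2$ using Lemma~\ref{lem:bias-tracker-moment}, i.e.\ $\norm{M_j}_{L^2}\le C\bar j^{-s/2}$ (for $j=0$ I simply use the deterministic $\norm{M_0}$).

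First, the coefficient bounds, proved as in Lemma~\ref{lem:abel-coefficients} with $\alpha_j$ replaced by $\gamma_j$. Since $\beta_j/\gamma_j=(\beta_0/\gamma_0)\bar j^{c-1}$ and $c<1$, the mean-value theorem and \eqref{eq:transition-diff} give
\[
  \norm{G_{j,k}}\le C\bar k^{-r}\bar j^{r+c-1},\qquad
  \norm{G_{j,k}-G_{j-1,k}}\le C\bar k^{-r}\bar j^{r+c-2}.
\]
The initial boundary term is then $O(\bar k^{-r})$, hence $O(\bar k^{-1})$ after squaring because $r>1/2$. The terminal boundary term has second moment at most $C\bar k^{2c-2}\bar k^{-s}=C\bar k^{-1}\cdot\bar k^{2c-1-s}$, and this is $\le C\bar k^{-1}$ precisely because $c\le(1+s)/2$ in \eqref{eq:c-condition}.

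For the interior sum I use the $L^2$ triangle inequality to get the bound
\[
  C\bar k^{-r}\sum_{j=1}^{k-1}\bar j^{\,r+c-2-s/2}.
\]
I split into the three usual cases on the exponent $e\coloneqq r+c-2-s/2$, as in Lemma~\ref{lem:linear-leakage}.
\begin{itemize}
\item If $e<-1$, the sum is bounded, giving $C\bar k^{-2r}\le C\bar k^{-1}$ after squaring.
\item If $e=-1$, the sum is at most $C\log\bar k$, giving $C\bar k^{-2r}\log^2\bar k\le C\bar k^{-1}$ since $r>1/2$.
\item If $e>-1$, the sum is at most $C\bar k^{e+1}$, giving after squaring $C\bar k^{2c-2-s}=C\bar k^{-1}\bar k^{2c-1-s}\le C\bar k^{-1}$, again by $c<(1+s)/2$.
\end{itemize}
Combining the three Abel pieces with $(u+v+w)^2\le 3(u^2+v^2+w^2)$ yields \eqref{eq:tracked-residual-bound}.

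The only delicate point is that the exponent bookkeeping in the terminal and interior terms lands exactly at the $\bar k^{-1}$ floor. This is where the upper constraint $c<(1+s)/2$ is consumed. The lower constraint $c>a$ and $c<1$ are used only in Lemma~\ref{lem:bias-tracker-moment} and for the sign of the derivative in the variation bound. I expect no real obstacle beyond checking that the crude $L^2$ triangle inequality is not too lossy, and the computation above shows it suffices, so no martingale structure of $M_j$ is needed.
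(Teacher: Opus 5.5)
Your proposal is correct and matches the paper's proof essentially line by line: the same tracker identity $R(X_j)-M_j=\gamma_j^{-1}(M_{j+1}-M_j)$, the same Abel summation with coefficients $\beta_j\gamma_j^{-1}\Pi^B_{j+1,k}$ and bounds $C\bar k^{-r}\bar j^{r+c-1}$, $C\bar k^{-r}\bar j^{r+c-2}$, and the same three-case exponent analysis, with $c<(1+s)/2$ used in the terminal and interior terms. One minor correction to your closing remark: the mean-value variation bound holds for any $c$ regardless of the sign of $c-1$, and $c>a$ is not used in Lemma~\ref{lem:bias-tracker-moment} either, which only needs $c<1$.
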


\begin{proof}
From \eqref{eq:bias-tracker},
\[
  R(X_j)-M_j=\gamma_j^{-1}(M_{j+1}-M_j).
\]
Define
\[
  q_j=\frac{\beta_j}{\gamma_j}=\frac{\beta_0}{\gamma_0}\bar j^{c-1},
  \qquad
  Q_{j,k}=q_j\Pi^B_{j+1,k},
  \qquad 0\le j\le k-1.
\]
Then
\[
  S_k\coloneqq
  \sum_{j=0}^{k-1}\beta_j\Pi^B_{j+1,k}\big(R(X_j)-M_j\big)
  =\sum_{j=0}^{k-1}Q_{j,k}(M_{j+1}-M_j).
\]
Abel summation gives
\begin{equation}
  S_k=-Q_{0,k}M_0+Q_{k-1,k}M_k+
      \sum_{j=1}^{k-1}(Q_{j-1,k}-Q_{j,k})M_j. \label{eq:tracked-abel}
\end{equation}
We first record coefficient bounds. Lemma~\ref{lem:transition} gives
\begin{equation}
  \norm{Q_{j,k}}
  \le C\bar k^{-r}\bar j^{r+c-1}. \label{eq:Qgamma-size}
\end{equation}
Moreover, the same calculation as in Lemma~\ref{lem:abel-coefficients}, with $a$ replaced by $c$, yields
\begin{equation}
  \norm{Q_{j,k}-Q_{j-1,k}}
  \le C\bar k^{-r}\bar j^{r+c-2},
  \qquad 1\le j\le k-1. \label{eq:Qgamma-var}
\end{equation}
Indeed, $|q_j-q_{j-1}|\le C\bar j^{c-2}$ and $q_{j-1}\beta_j\le C\bar j^{c-2}$.

We now bound the three terms in \eqref{eq:tracked-abel}. The initial boundary satisfies
\[
  \norm{Q_{0,k}M_0}^2\le C\bar k^{-2r}\norm{M_0}^2\le C\bar k^{-1},
\]
because $r>1/2$. For the terminal boundary, \eqref{eq:Qgamma-size} and Lemma~\ref{lem:bias-tracker-moment} give
\[
  \E\norm{Q_{k-1,k}M_k}^2
  \le C\bar k^{2c-2}\E\norm{M_k}^2
  \le C\bar k^{2c-2-s}.
\]
The condition $c<(1+s)/2$ implies $2c-2-s<-1$, and hence this term is at most $C\bar k^{-1}$.

For the interior term, the $L^2$ triangle inequality, \eqref{eq:Qgamma-var}, and Lemma~\ref{lem:bias-tracker-moment} imply
\begin{align*}
  \left\|\sum_{j=1}^{k-1}(Q_{j-1,k}-Q_{j,k})M_j\right\|_{L^2}
  &\le C\bar k^{-r}\sum_{j=1}^{k-1}\bar j^{r+c-2}\bar j^{-s/2}  \\
  &=C\bar k^{-r}\sum_{j=1}^{k-1}\bar j^{r+c-2-s/2}.
\end{align*}
If $r+c-2-s/2<-1$, the sum is bounded, so the squared expression is at most $C\bar k^{-2r}\le C\bar k^{-1}$. If $r+c-2-s/2=-1$, the sum is logarithmic, and $C\bar k^{-2r}(\log\bar k)^2\le C\bar k^{-1}$ because $r>1/2$. If $r+c-2-s/2>-1$, the sum is at most $C\bar k^{r+c-1-s/2}$, and the square is at most
\[
  C\bar k^{-2r}\bar k^{2r+2c-2-s}
  =C\bar k^{2c-2-s}
  \le C\bar k^{-1},
\]
again by $c<(1+s)/2$. Combining the three terms in \eqref{eq:tracked-abel} proves \eqref{eq:tracked-residual-bound}.
\end{proof}

\begin{proof}[Proof of Theorem~\ref{thm:bias-tracking}]
The estimate for $M_k$ is Lemma~\ref{lem:bias-tracker-moment}. It remains to prove the slow-rate estimate. Unrolling \eqref{eq:deb-slow} gives
\begin{align*}
  \widetilde Y_k
  &=\Pi^B_{0,k}\widetilde Y_0
    +\sum_{j=0}^{k-1}\beta_j\Pi^B_{j+1,k}HX_j \\
  &\quad
    +\sum_{j=0}^{k-1}\beta_j\Pi^B_{j+1,k}\big(R(X_j)-M_j\big)
    +\sum_{j=0}^{k-1}\beta_j\Pi^B_{j+1,k}\zeta_{j+1}.
\end{align*}
The initial term satisfies
\[
  \norm{\Pi^B_{0,k}\widetilde Y_0}^2\le C\bar k^{-2r}\le C\bar k^{-1}
\]
because $r>1/2$. Lemma~\ref{lem:linear-leakage} bounds the locally linear leakage by $C\bar k^{-1}$ in mean square. Lemma~\ref{lem:tracked-residual-abel} bounds the corrected nonlinear residual by $C\bar k^{-1}$ in mean square. Finally, Lemma~\ref{lem:nonlinear-slow-noise} bounds the slow martingale noise by $C\bar k^{-1}$ in mean square. Using $\norm{z_1+z_2+z_3+z_4}^2\le4\sum_i\norm{z_i}^2$ proves
\[
  \E\norm{\widetilde Y_k}^2\le C\bar k^{-1}.
\]
This proves the theorem.
\end{proof}

\section{Proof of the localized nonlinear transfer theorem}
\label{app:localized-transfer}

This appendix proves Theorem~\ref{thm:localized-transfer} and Propositions~\ref{prop:verification} and~\ref{prop:generic-verification}. We first record a weighted-sum estimate that allows borderline logarithms.

\begin{lemma}[General weighted convolution]\label{lem:general-weighted}
Let $u\ge0$ and $r>0$. Then, for all $k\ge1$,
\begin{equation}
  \sum_{j=0}^{k-1}\beta_j\norm{\Pi^B_{j+1,k}}\bar j^{-u}
  \le
  C\begin{cases}
    \bar k^{-u}, & u<r,\\
    \bar k^{-r}\log(\bar k), & u=r,\\
    \bar k^{-r}, & u>r.
  \end{cases} \label{eq:general-weighted}
\end{equation}
In particular, if $\min\{u,r\}>v$, then the square of the left-hand side is at most $C\bar k^{-2v}$.
\end{lemma}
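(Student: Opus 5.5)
The first display \eqref{eq:general-weighted} is exactly \eqref{eq:weighted-first} of Lemma~\ref{lem:weighted-sums}, so I would not repeat that work. I would recall the reduction used there. By Lemma~\ref{lem:transition} and the bound $\overline{j+1}\le 2\bar j$ (valid since $k_0\ge2$), each summand is at most $C\bar k^{-r}\bar j^{r-u-1}$. The sum $\sum_{j<k}\bar j^{r-u-1}$ is then compared with an integral, in three cases:
\begin{itemize}
\item if $r>u$, it is at most $C\bar k^{r-u}$;
\item if $r=u$, it is at most $C\log\bar k$;
\item if $r<u$, it is at most a constant.
\end{itemize}
Multiplying by $\bar k^{-r}$ gives the three branches. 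The first claim is therefore a direct citation of Lemma~\ref{lem:weighted-sums}, or this two-line recap.

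For the "in particular" clause, assume $\min\{u,r\}>v$ and square each branch.
\begin{itemize}
\item If $u<r$, the square is $C\bar k^{-2u}$, and $\bar k^{-2u}\le\bar k^{-2v}$ because $u>v$ and $\bar k\ge1$.
\item If $u>r$, the square is $C\bar k^{-2r}\le C\bar k^{-2v}$ because $r>v$.
\item If $u=r$, the square is $C\bar k^{-2r}(\log\bar k)^2$, and this is the only step needing care. Since $r>v$ strictly, write it as $\bar k^{-2v}\cdot\bar k^{-2(r-v)}(\log\bar k)^2$. The function $t\mapsto t^{-2(r-v)}(\log t)^2$ is bounded on $[k_0,\infty)$, with maximum value at most $\bigl(e(r-v)\bigr)^{-2}$, so this factor goes into $C$.
\end{itemize}

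The only real obstacle is this borderline logarithm. The strict inequality $\min\{u,r\}>v$ is exactly what absorbs it, and the constant then depends on $r-v$, which is allowed since $u,r,v$ are fixed parameters. Note also that this version does not require $r>1/2$, unlike \eqref{eq:weighted-first-squared}. That extra flexibility is what lets it be applied later with $v=\delta/2$ for perturbations such as $e_k$ and $d_k$.
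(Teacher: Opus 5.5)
Your proposal is correct and follows essentially the same route as the paper. Both arguments reduce the sum, via Lemma~\ref{lem:transition} and $\overline{j+1}\le 2\bar j$, to $C\bar k^{-r}\sum_{j<k}\bar j^{r-u-1}$, split into the same three cases, and absorb the borderline $(\log\bar k)^2$ factor using the strict inequality $r>v$. Your explicit bound $\sup_{t\ge 1} t^{-2(r-v)}(\log t)^2\le (e(r-v))^{-2}$ simply makes precise the step the paper states without computation.
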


\begin{proof}
By Lemma~\ref{lem:transition} and the inequality $\overline{j+1}\le2\bar j$,
\[
  \sum_{j=0}^{k-1}\beta_j\norm{\Pi^B_{j+1,k}}\bar j^{-u}
  \le C\bar k^{-r}\sum_{j=0}^{k-1}\bar j^{r-u-1}.
\]
If $r-u>0$, the sum is at most $C\bar k^{r-u}$. If $r-u=0$, the sum is at most $C\log(\bar k)$. If $r-u<0$, the sum is bounded by a constant because $k_0\ge2$ and the exponent $r-u-1$ is strictly less than $-1$. These three cases give \eqref{eq:general-weighted}. If $\min\{u,r\}>v$, the first and third cases give $C\bar k^{-2v}$ after squaring, while the borderline case gives $C\bar k^{-2r}(\log \bar k)^2\le C\bar k^{-2v}$ because $r>v$.
\end{proof}

\begin{lemma}[Perturbed linear leakage]\label{lem:perturbed-leakage}
Under Assumption~\ref{ass:nonlinear-reduction},
\begin{equation}
  \E\left\|\sum_{j=0}^{k-1}\beta_j\Pi^B_{j+1,k}HX_j\right\|^2
  \le C\left(\bar k^{-1}+\bar k^{-s}\right). \label{eq:perturbed-leakage-bound}
\end{equation}
\end{lemma}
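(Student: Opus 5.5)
The plan is to repeat the Abel-transform argument of Lemma~\ref{lem:linear-leakage}, now starting from the perturbed fast recursion \eqref{eq:perturbed-fast}. Solving it for $AX_j$ gives
\[
  AX_j=\alpha_j^{-1}(X_j-X_{j+1})+\xi_{j+1}+p_{j+1}+\frac{\beta_j}{\alpha_j}C_\zeta\zeta_{j+1}+\frac{\beta_j}{\alpha_j}d_j .
\]
Multiplying by $C_H=HA^{-1}$ and inserting this into the slow convolution splits $\sum_{j<k}\beta_j\Pi^B_{j+1,k}HX_j$ into five pieces:
\begin{enumerate}[leftmargin=*]
\item the martingale term $\sum_j\beta_j\Pi^B_{j+1,k}C_H\xi_{j+1}$;
\item the Abel term $T_k=\sum_jQ_{j,k}(X_j-X_{j+1})$, with $Q_{j,k}$ as in \eqref{eq:Q-def};
\item the residual term $\sum_j\beta_j\Pi^B_{j+1,k}C_Hp_{j+1}$;
\item the slow-noise term $\sum_j Q_{j,k}\beta_jC_\zeta\zeta_{j+1}$;
\item the predictable term $\sum_jQ_{j,k}\beta_j d_j$.
\end{enumerate}
Each piece will be bounded in mean square separately and the five bounds combined with $\|\sum z_i\|^2\le5\sum\|z_i\|^2$.

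\textbf{Pieces 1, 2 and 4.} Piece 1 is handled exactly as $M_k$ in Lemma~\ref{lem:linear-leakage} and is $O(\bar k^{-1})$ by \eqref{eq:weighted-second}. For piece 2, use the Abel identity \eqref{eq:abel} and the coefficient bounds of Lemma~\ref{lem:abel-coefficients}. The earlier proof used only $\E\norm{X_j}^2\le C\alpha_j$, and this is now supplied by \eqref{eq:transfer-fast-moments}, so the same three-case computation gives $O(\bar k^{-1})$. Piece 4 is again a martingale convolution, with coefficient norm $\le C\beta_j(\beta_j/\alpha_j)\norm{\Pi^B_{j+1,k}}$. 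Its variance is therefore at most $C\sum_j\beta_j^2\bar j^{2a-2}\norm{\Pi^B_{j+1,k}}^2\le C\bar k^{-1}$.

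\textbf{Piece 5.} Write $Q_{j,k}\beta_j=\beta_j\Pi^B_{j+1,k}C_H\cdot(\beta_j/\alpha_j)$. Using \eqref{eq:d-bound} and the $L^2$ triangle inequality, the $L^2$ norm is bounded by $C\sum_j\beta_j\norm{\Pi^B_{j+1,k}}\bar j^{-u}$ for two exponents:
\begin{itemize}[leftmargin=*]
\item $u=1-a+a/2=1-a/2>1/2$;
\item $u=1-a+\delta/2$.
\end{itemize}
By Lemma~\ref{lem:general-weighted}, with $v$ slightly below $\min\{u,r\}$ and also $\ge\min\{1/2,\delta/2\}$, the square is $O(\bar k^{-1}+\bar k^{-s})$. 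This works because both exponents exceed $\delta/2$ and $1-a/2>1/2$, while $r>1/2\ge\delta/2$. Choosing $v=\min\{1/2,\delta/2\}=\delta/2$ gives $\bar k^{-\delta}\le\bar k^{-1}+\bar k^{-s}$.

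\textbf{Piece 3 (main obstacle).} The residual $p_{j+1}$ enters with the full $\beta_j$ weight rather than an extra factor $\beta_j/\alpha_j$, and it need not be a martingale difference. The first attempt is the direct $L^2$ triangle bound: by \eqref{eq:p-bound} it yields $C\sum_j\beta_j\norm{\Pi^B_{j+1,k}}\bar j^{-\mu_p}$. Since $\mu_p>\delta/2$ and $r>1/2\ge\delta/2$, Lemma~\ref{lem:general-weighted} with $v=\delta/2$ gives a squared bound $C\bar k^{-\delta}\le C(\bar k^{-1}+\bar k^{-s})$. This is precisely why the hypothesis $\mu_p>\delta/2$ was imposed, so the direct bound should suffice and no further cancellation is needed.

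Summing the five contributions yields \eqref{eq:perturbed-leakage-bound}. The only delicate bookkeeping is checking that every exponent in Lemma~\ref{lem:general-weighted} strictly exceeds $\delta/2$, so that borderline logarithms are absorbed.
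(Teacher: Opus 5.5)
Your proposal is correct and follows the paper's proof essentially verbatim. Like the paper, you solve \eqref{eq:perturbed-fast} for $AX_j$ to get the same five-term decomposition, bound the Abel term and both martingale terms at the $\bar k^{-1}$ level, and bound the $p$- and $d$-terms at order $\bar k^{-\delta}$ by the direct $L^2$ triangle inequality and Lemma~\ref{lem:general-weighted}; your handling of the $\zeta$-term (using $\bar j^{2a-2}\le 1$ and then \eqref{eq:weighted-second}) is only a slightly shorter route to the same bound than the paper's three-case computation.
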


\begin{proof}
Let $C_H=HA^{-1}$. Rearranging the perturbed fast recursion \eqref{eq:perturbed-fast} gives
\begin{equation}
  AX_j
  =\alpha_j^{-1}(X_j-X_{j+1})+\xi_{j+1}+p_{j+1}
   +\frac{\beta_j}{\alpha_j}C_\zeta\zeta_{j+1}
   +\frac{\beta_j}{\alpha_j}d_j. \label{eq:perturbed-poisson}
\end{equation}
Multiplying by $HA^{-1}=C_H$ and summing against the slow weights yields
\begin{align}
  \sum_{j=0}^{k-1}\beta_j\Pi^B_{j+1,k}HX_j
  &=T_k+M^\xi_k+P_k+M^\zeta_k+D_k, \label{eq:perturbed-decomposition}\\
  T_k&\coloneqq\sum_{j=0}^{k-1}Q_{j,k}(X_j-X_{j+1}), \label{eq:transfer-T}\\
  M^\xi_k&\coloneqq\sum_{j=0}^{k-1}\beta_j\Pi^B_{j+1,k}C_H\xi_{j+1}, \\
  P_k&\coloneqq\sum_{j=0}^{k-1}\beta_j\Pi^B_{j+1,k}C_Hp_{j+1}, \\
  M^\zeta_k&\coloneqq\sum_{j=0}^{k-1}\beta_j\Pi^B_{j+1,k}C_H\frac{\beta_j}{\alpha_j}C_\zeta\zeta_{j+1}, \\
  D_k&\coloneqq\sum_{j=0}^{k-1}\beta_j\Pi^B_{j+1,k}C_H\frac{\beta_j}{\alpha_j}d_j,
\end{align}
where $Q_{j,k}$ is defined in \eqref{eq:Q-def}.

The term $T_k$ is the same Abel boundary term as in Lemma~\ref{lem:linear-leakage}. Its proof used only the coefficient bounds in Lemma~\ref{lem:abel-coefficients} and the moment estimate $\E\norm{X_j}^2\le C\alpha_j$. Therefore
\begin{equation}
  \E\norm{T_k}^2\le C\bar k^{-1}. \label{eq:T-transfer-bound}
\end{equation}
The martingale term $M^\xi_k$ is also the same as before, hence
\begin{equation}
  \E\norm{M^\xi_k}^2\le C\bar k^{-1}. \label{eq:Mxi-transfer-bound}
\end{equation}

For $P_k$, the triangle inequality in $L^2$ and \eqref{eq:p-bound} give
\[
  \big(\E\norm{P_k}^2\big)^{1/2}
  \le C\sum_{j=0}^{k-1}\beta_j\norm{\Pi^B_{j+1,k}}\bar j^{-\mu_p}.
\]
By Lemma~\ref{lem:general-weighted}, this is at most $C\bar k^{-\nu_p}$ up to a logarithm, where $\nu_p=\min\{\mu_p,r\}$. Since $\mu_p>\delta/2$ and $r>1/2\ge\delta/2$, we have $\nu_p>\delta/2$. Therefore
\begin{equation}
  \E\norm{P_k}^2\le C\bar k^{-\delta}
  \le C\left(\bar k^{-1}+\bar k^{-s}\right). \label{eq:P-transfer-bound}
\end{equation}

For $M^\zeta_k$, martingale cross terms vanish. Since $\beta_j/\alpha_j=C\bar j^{a-1}$,
\begin{align*}
  \E\norm{M^\zeta_k}^2
  &\le C\sum_{j=0}^{k-1}\beta_j^2\norm{\Pi^B_{j+1,k}}^2\left(\frac{\beta_j}{\alpha_j}\right)^2 \\
  &\le C\bar k^{-2r}\sum_{j=0}^{k-1}\bar j^{2r+2a-4}.
\end{align*}
If $2r+2a-4<-1$, the last sum is bounded by a constant and the result is at most $C\bar k^{-2r}\le C\bar k^{-1}$. If $2r+2a-4=-1$, the sum is at most $C\log(\bar k)$ and the result is at most $C\bar k^{-2r}\log(\bar k)\le C\bar k^{-1}$ because $r>1/2$. If $2r+2a-4>-1$, the sum is at most $C\bar k^{2r+2a-3}$, so the result is at most $C\bar k^{2a-3}$. Since $a<1$, $3-2a>1$, and again the term is $O(\bar k^{-1})$. Thus
\begin{equation}
  \E\norm{M^\zeta_k}^2\le C\bar k^{-1}. \label{eq:Mzeta-transfer-bound}
\end{equation}

It remains to bound $D_k$. By \eqref{eq:d-bound},
\begin{align*}
  \big(\E\norm{D_k}^2\big)^{1/2}
  &\le C\sum_{j=0}^{k-1}\beta_j\norm{\Pi^B_{j+1,k}}\bar j^{a-1}
      \big(\bar j^{-a/2}+\bar j^{-\delta/2}\big) \\
  &= C\sum_{j=0}^{k-1}\beta_j\norm{\Pi^B_{j+1,k}}\bar j^{-(1-a/2)}
    +C\sum_{j=0}^{k-1}\beta_j\norm{\Pi^B_{j+1,k}}\bar j^{-(1-a+\delta/2)}.
\end{align*}
The first exponent is $u_1=1-a/2$, and $2u_1=2-a\ge1\ge\delta$. The second exponent is $u_2=1-a+\delta/2$, and $u_2>\delta/2$ because $a<1$. Since $r>1/2\ge\delta/2$, Lemma~\ref{lem:general-weighted} implies that both sums have squares bounded by $C\bar k^{-\delta}$. Hence
\begin{equation}
  \E\norm{D_k}^2\le C\bar k^{-\delta}
  \le C\left(\bar k^{-1}+\bar k^{-s}\right). \label{eq:D-transfer-bound}
\end{equation}

Combining \eqref{eq:perturbed-decomposition}--\eqref{eq:D-transfer-bound} and using the elementary inequality $\|z_1+\cdots+z_5\|^2\le5\sum_{i=1}^5\|z_i\|^2$ proves \eqref{eq:perturbed-leakage-bound}.
\end{proof}

\begin{proof}[Proof of Theorem~\ref{thm:localized-transfer}]
Unroll \eqref{eq:perturbed-slow}:
\begin{align*}
  Y_k
  &=\Pi^B_{0,k}Y_0
    +\sum_{j=0}^{k-1}\beta_j\Pi^B_{j+1,k}HX_j
    +\sum_{j=0}^{k-1}\beta_j\Pi^B_{j+1,k}R(X_j)\\
  &\qquad
    +\sum_{j=0}^{k-1}\beta_j\Pi^B_{j+1,k}\zeta_{j+1}
    +\sum_{j=0}^{k-1}\beta_j\Pi^B_{j+1,k}e_j.
\end{align*}
The initial term satisfies $\norm{\Pi^B_{0,k}Y_0}^2\le C\bar k^{-2r}\le C\bar k^{-1}$ because $r>1/2$. The locally linear leakage term is bounded by Lemma~\ref{lem:perturbed-leakage}. The nonlinear remainder and slow martingale noise are bounded exactly as in Lemma~\ref{lem:nonlinear-slow-noise}, using \eqref{eq:transfer-fast-moments} in place of Lemma~\ref{lem:fast-moments}. Thus they contribute $C(\bar k^{-1}+\bar k^{-s})$ and $C\bar k^{-1}$, respectively.

For the additional slow perturbation, the $L^2$ triangle inequality and \eqref{eq:e-bound} give
\[
  \left(\E\left\|\sum_{j=0}^{k-1}\beta_j\Pi^B_{j+1,k}e_j\right\|^2\right)^{1/2}
  \le C\sum_{j=0}^{k-1}\beta_j\norm{\Pi^B_{j+1,k}}\bar j^{-\delta/2}.
\]
Since $r>1/2\ge\delta/2$, Lemma~\ref{lem:general-weighted} implies that the square of this term is at most $C\bar k^{-\delta}$. Because $\bar k^{-\delta}\le C(\bar k^{-1}+\bar k^{-s})$, all terms are bounded by the right-hand side of \eqref{eq:localized-transfer-rate}. This proves the theorem.
\end{proof}

\begin{proof}[Proof of Proposition~\ref{prop:verification}]
We verify Assumption~\ref{ass:nonlinear-reduction} under the displayed local error recursions. Let
\[
  X_k=x_k-\lambda(y_k),
  \qquad
  Y_k=y_k-y^*,
  \qquad
  \xi_{k+1}=M^x_{k+1},
  \qquad
  \zeta_{k+1}=M^y_{k+1}.
\]
The slow recursion follows immediately from \eqref{eq:g-local}:
\[
  Y_{k+1}
  =Y_k+\beta_k\big(-BY_k+HX_k+R(X_k)+E(X_k,y_k)+\zeta_{k+1}\big).
\]
Thus \eqref{eq:perturbed-slow} holds with $e_k=E(X_k,y_k)$. By \eqref{eq:g-local}, \eqref{eq:baseline-localization}, and Lyapunov's inequality,
\[
  \big(\E\norm{e_k}^2\big)^{1/2}
  \le L_g\big(\E\norm{X_k}^{2+2\rho}\big)^{1/2}
  \le L_g\big(\E\norm{X_k}^4\big)^{(1+\rho)/4}
  \le C\alpha_k^{(1+\rho)/2}
  =C\bar k^{-s/2}
  \le C\bar k^{-\delta/2}.
\]
This is \eqref{eq:e-bound}. The first estimate in \eqref{eq:transfer-fast-moments} follows from \eqref{eq:baseline-localization} by Jensen's inequality:
\[
  \E\norm{X_k}^2\le (\E\norm{X_k}^4)^{1/2}\le C\alpha_k.
\]
The second estimate in \eqref{eq:transfer-fast-moments} is the same interpolation calculation:
\[
  \E\norm{X_k}^{2+2\rho}
  \le (\E\norm{X_k}^4)^{(1+\rho)/2}
  \le C\alpha_k^{1+\rho}.
\]

It remains to derive the perturbed fast recursion and the bounds on $p_{k+1}$ and $d_k$. By Taylor's theorem applied to $\lambda$ between $y_k$ and $y_{k+1}$,
\begin{equation}
  \lambda(y_{k+1})
  =\lambda(y_k)+D\lambda(y_k)(y_{k+1}-y_k)+r^\lambda_{k+1},
  \qquad
  \norm{r^\lambda_{k+1}}\le C\norm{y_{k+1}-y_k}^2, \label{eq:lambda-taylor}
\end{equation}
where the constant $C$ is the local bound on the Hessian of $\lambda$. Since
\[
  y_{k+1}-y_k=\beta_k\big(g(x_k,y_k)+\zeta_{k+1}\big),
\]
and since \eqref{eq:h-local} gives $h(x_k,y_k)=-AX_k+P(X_k,y_k)$, we obtain
\begin{align*}
  X_{k+1}
  &=x_{k+1}-\lambda(y_{k+1})\\
  &=x_k+\alpha_k\big(h(x_k,y_k)+\xi_{k+1}\big)
    -\lambda(y_k)-D\lambda(y_k)\beta_k\big(g(x_k,y_k)+\zeta_{k+1}\big)-r^\lambda_{k+1}\\
  &=(I-\alpha_kA)X_k+\alpha_k\xi_{k+1}+\alpha_kP(X_k,y_k)
    -\beta_kD\lambda(y^*)\zeta_{k+1}+\beta_k d_k,
\end{align*}
where
\begin{equation}
  d_k
  =-D\lambda(y_k)g(x_k,y_k)
   -\big(D\lambda(y_k)-D\lambda(y^*)\big)\zeta_{k+1}
   -\beta_k^{-1}r^\lambda_{k+1}. \label{eq:d-verification-def}
\end{equation}
Thus \eqref{eq:perturbed-fast} holds with
\[
  p_{k+1}=P(X_k,y_k),
  \qquad
  C_\zeta=-D\lambda(y^*).
\]
The bound on $p_{k+1}$ follows from \eqref{eq:h-local} and \eqref{eq:baseline-localization}:
\[
  \big(\E\norm{p_{k+1}}^2\big)^{1/2}
  \le L_h\big(\E\norm{X_k}^4\big)^{1/2}
  \le C\alpha_k
  =C\bar k^{-a}.
\]
Because $\delta\le1$ and $a>1/2$, we have $a>\delta/2$. Hence \eqref{eq:p-bound} holds with $\mu_p=a$.

We next bound $d_k$. First, \eqref{eq:g-local}, Assumption~\ref{ass:remainder}, the already-proved fast moment bounds, and the baseline estimate for $Y_k$ imply
\begin{align*}
  \big(\E\norm{g(x_k,y_k)}^2\big)^{1/2}
  &\le C\big(\E\norm{Y_k}^2\big)^{1/2}
       +C\big(\E\norm{X_k}^2\big)^{1/2}
       +C\big(\E\norm{X_k}^{2+2\rho}\big)^{1/2}\\
  &\le C\bar k^{-a/2}+C\bar k^{-s/2}
  \le C\big(\bar k^{-a/2}+\bar k^{-\delta/2}\big). \label{eq:g-l2-verification}
\end{align*}
Since $D\lambda$ is locally bounded, the first term in \eqref{eq:d-verification-def} satisfies the same bound.

For the second term in \eqref{eq:d-verification-def}, the local Lipschitz continuity of $D\lambda$ gives
\[
  \norm{D\lambda(y_k)-D\lambda(y^*)}\le C\norm{Y_k}.
\]
Using the conditional second-moment bound on $\zeta_{k+1}$,
\[
  \E\norm{(D\lambda(y_k)-D\lambda(y^*))\zeta_{k+1}}^2
  \le C\E\big[\norm{Y_k}^2\E(\norm{\zeta_{k+1}}^2\mid\mathcal F_k)\big]
  \le C\E\norm{Y_k}^2
  \le C\bar k^{-a}.
\]
Thus this term has $L^2$ norm at most $C\bar k^{-a/2}$.

For the Taylor remainder term, \eqref{eq:lambda-taylor} gives
\[
  \beta_k^{-1}\norm{r^\lambda_{k+1}}
  \le C\beta_k\norm{g(x_k,y_k)+\zeta_{k+1}}^2.
\]
By the localization hypothesis, the process stays in the neighborhood where \eqref{eq:g-local} holds. Hence $g$ is locally bounded there, and $\zeta_{k+1}$ has a uniformly bounded fourth conditional moment. Therefore
\[
  \big(\E\norm{\beta_k^{-1}r^\lambda_{k+1}}^2\big)^{1/2}
  \le C\beta_k
  \le C\bar k^{-1}
  \le C\bar k^{-a/2},
\]
because $a<1$. Combining the three estimates proves \eqref{eq:d-bound}. Therefore Assumption~\ref{ass:nonlinear-reduction} holds, and Theorem~\ref{thm:localized-transfer} gives the claimed rate.
\end{proof}

\begin{proof}[Proof of Proposition~\ref{prop:generic-verification}]
The proof follows the same coordinate calculation as Proposition~\ref{prop:verification}, but we spell out the estimates because the mixed terms are the point of the result. Define
\[
  X_k=x_k-\lambda(y_k),
  \qquad
  Y_k=y_k-y^*,
  \qquad
  \xi_{k+1}=M^x_{k+1},
  \qquad
  \zeta_{k+1}=M^y_{k+1}.
\]
By Jensen's inequality and \eqref{eq:baseline-generic},
\begin{equation}
  \E\norm{X_k}^2\le C\alpha_k,
  \qquad
  \E\norm{Y_k}^2\le C\bar k^{-a}. \label{eq:generic-second-moments}
\end{equation}
By Lyapunov's inequality,
\begin{equation}
  \E\norm{X_k}^{2+2\rho}
  \le \big(\E\norm{X_k}^4\big)^{(1+\rho)/2}
  \le C\alpha_k^{1+\rho}. \label{eq:generic-x-interp}
\end{equation}
Thus the fast moment requirements \eqref{eq:transfer-fast-moments} hold.

The slow equation follows directly from \eqref{eq:g-generic}:
\[
  Y_{k+1}=(I-\beta_kB)Y_k+\beta_k\big(HX_k+R(X_k)+\zeta_{k+1}+e_k\big),
  \qquad
  e_k=E(X_k,y_k).
\]
We bound $e_k$ in $L^2$. By \eqref{eq:g-generic}, Cauchy's inequality, \eqref{eq:baseline-generic}, and \eqref{eq:generic-x-interp},
\begin{align*}
  \big(\E\norm{e_k}^2\big)^{1/2}
  &\le C\big(\E\norm{X_k}^{2+2\rho}\big)^{1/2}
       +C\big(\E\norm{X_k}^2\norm{Y_k}^2\big)^{1/2}
       +C\big(\E\norm{Y_k}^4\big)^{1/2}\\
  &\le C\bar k^{-s/2}
       +C\big(\E\norm{X_k}^4\big)^{1/4}\big(\E\norm{Y_k}^4\big)^{1/4}
       +C\bar k^{-a}\\
  &\le C\bar k^{-s/2}+C\bar k^{-a}+C\bar k^{-a}.
\end{align*}
Let $\delta=\min\{1,s\}$. Since $a>1/2\ge \delta/2$ and $s/2\ge \delta/2$, the last line is at most $C\bar k^{-\delta/2}$. Hence \eqref{eq:e-bound} holds.

We next derive the fast equation. Taylor's theorem for $\lambda$ gives
\begin{equation}
  \lambda(y_{k+1})
  =\lambda(y_k)+D\lambda(y_k)(y_{k+1}-y_k)+r^\lambda_{k+1},
  \qquad
  \norm{r^\lambda_{k+1}}\le C\norm{y_{k+1}-y_k}^2. \label{eq:lambda-taylor-generic}
\end{equation}
Using \eqref{eq:h-generic} and $y_{k+1}-y_k=\beta_k(g(x_k,y_k)+\zeta_{k+1})$, we get
\begin{align*}
  X_{k+1}
  &=x_{k+1}-\lambda(y_{k+1})\\
  &=x_k+\alpha_k\big(h(x_k,y_k)+\xi_{k+1}\big)-\lambda(y_k)
    -D\lambda(y_k)\beta_k\big(g(x_k,y_k)+\zeta_{k+1}\big)-r^\lambda_{k+1}\\
  &=(I-\alpha_kA)X_k+\alpha_k\xi_{k+1}+\alpha_kP(X_k,y_k)
    -\beta_kD\lambda(y^*)\zeta_{k+1}+\beta_kd_k,
\end{align*}
where
\begin{equation}
  d_k=-D\lambda(y_k)g(x_k,y_k)
      -\big(D\lambda(y_k)-D\lambda(y^*)\big)\zeta_{k+1}
      -\beta_k^{-1}r^\lambda_{k+1}. \label{eq:d-generic-def}
\end{equation}
Thus \eqref{eq:perturbed-fast} holds with
\[
  p_{k+1}=P(X_k,y_k),
  \qquad
  C_\zeta=-D\lambda(y^*).
\]
For $p_{k+1}$, \eqref{eq:h-generic}, Cauchy's inequality, and \eqref{eq:baseline-generic} imply
\begin{align*}
  \big(\E\norm{p_{k+1}}^2\big)^{1/2}
  &\le C\big(\E\norm{X_k}^4\big)^{1/2}
      +C\big(\E\norm{X_k}^2\norm{Y_k}^2\big)^{1/2}\\
  &\le C\alpha_k+C\big(\E\norm{X_k}^4\big)^{1/4}\big(\E\norm{Y_k}^4\big)^{1/4}
   \le C\bar k^{-a}.
\end{align*}
Because $a>1/2\ge\delta/2$, \eqref{eq:p-bound} holds with $\mu_p=a$.

It remains to prove \eqref{eq:d-bound}. First, by \eqref{eq:g-generic}, Assumption~\ref{ass:remainder}, \eqref{eq:generic-second-moments}, \eqref{eq:generic-x-interp}, and the bound just proved for $e_k$,
\begin{align*}
  \big(\E\norm{g(x_k,y_k)}^2\big)^{1/2}
  &\le C\big(\E\norm{Y_k}^2\big)^{1/2}
       +C\big(\E\norm{X_k}^2\big)^{1/2}
       +C\big(\E\norm{X_k}^{2+2\rho}\big)^{1/2}
       +C\big(\E\norm{e_k}^2\big)^{1/2}\\
  &\le C\bar k^{-a/2}+C\bar k^{-\delta/2}.
\end{align*}
The derivative $D\lambda$ is locally bounded, so the first term in \eqref{eq:d-generic-def} has this same $L^2$ bound.

For the second term in \eqref{eq:d-generic-def}, boundedness of the Hessian of $\lambda$ gives
\[
  \norm{D\lambda(y_k)-D\lambda(y^*)}\le C\norm{Y_k}.
\]
Using the conditional second-moment bound on $\zeta_{k+1}$,
\begin{align*}
  \E\norm{(D\lambda(y_k)-D\lambda(y^*))\zeta_{k+1}}^2
  &\le C\E\big[\norm{Y_k}^2\E(\norm{\zeta_{k+1}}^2\mid\mathcal F_k)\big]
   \le C\E\norm{Y_k}^2
   \le C\bar k^{-a}.
\end{align*}
Therefore this term has $L^2$ norm at most $C\bar k^{-a/2}$.

For the Taylor remainder term, \eqref{eq:lambda-taylor-generic} gives
\[
  \beta_k^{-1}\norm{r^\lambda_{k+1}}
  \le C\beta_k\norm{g(x_k,y_k)+\zeta_{k+1}}^2.
\]
The localized process remains in the neighborhood on which $g$ is bounded, and $\zeta_{k+1}$ has a uniformly bounded fourth conditional moment. Hence
\[
  \big(\E\norm{\beta_k^{-1}r^\lambda_{k+1}}^2\big)^{1/2}
  \le C\beta_k
  \le C\bar k^{-1}
  \le C\bar k^{-a/2},
\]
because $a<1$. Combining the three bounds for the terms in \eqref{eq:d-generic-def} yields
\[
  \big(\E\norm{d_k}^2\big)^{1/2}
  \le C\big(\bar k^{-a/2}+\bar k^{-\delta/2}\big),
\]
which is \eqref{eq:d-bound}. All parts of Assumption~\ref{ass:nonlinear-reduction} have now been verified, and Theorem~\ref{thm:localized-transfer} gives the claimed rate.
\end{proof}

\section{Proof of the lower bound}\label{app:lower}

We prove Theorem \ref{thm:lower}. The proof uses elementary properties of the scalar Gaussian fast recursion.

\begin{lemma}[Fast variance lower bound]\label{lem:variance-lower}
Consider the scalar recursion
\[
  X_{k+1}=(1-\lambda\alpha_k)X_k+\alpha_k\xi_{k+1},
  \qquad X_0=0,
\]
where $\xi_k$ are i.i.d. standard Gaussian random variables and $0<\lambda\alpha_k\le1/2$. Then there are constants $c_v,C_v>0$ and $k_v\ge1$ such that, for all $k\ge k_v$,
\begin{equation}
  c_v\alpha_k\le \E[X_k^2]\le C_v\alpha_k. \label{eq:variance-two-sided}
\end{equation}
Consequently, for every $p>0$ there exists $c_p>0$ such that, for all $k\ge k_v$,
\begin{equation}
  \E|X_k|^p\ge c_p\alpha_k^{p/2}. \label{eq:gaussian-p-lower}
\end{equation}
\end{lemma}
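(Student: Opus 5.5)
The plan is to exploit that $X_k$ is an exact centered Gaussian, so everything reduces to its variance $v_k\coloneqq\E[X_k^2]$. Since $X_0=0$ and the recursion is linear with independent Gaussian innovations, $X_k$ is a linear combination of $\xi_1,\dots,\xi_k$, hence $X_k\sim\mathcal N(0,v_k)$. Independence of $\xi_{k+1}$ from $X_k$ gives the exact variance recursion
\[
  v_{k+1}=(1-\lambda\alpha_k)^2v_k+\alpha_k^2,\qquad v_0=0.
\]

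The upper bound $v_k\le C_v\alpha_k$ is immediate from Lemma~\ref{lem:fast-moments}, since the scalar model satisfies Assumption~\ref{ass:stability} with $\sigma_4^4=3$. For the lower bound I will compare $v_k$ with $\theta\alpha_k$ for a small $\theta>0$ by induction, mirroring the upper-bound argument. Using $(1-x)^2\ge1-2x$, we get $v_{k+1}\ge(1-2\lambda\alpha_k)v_k+\alpha_k^2$. If $v_k\ge\theta\alpha_k$ with $\theta\le 1/(4\lambda)$, then
\[
  v_{k+1}\ge\theta\alpha_k-2\lambda\theta\alpha_k^2+\alpha_k^2\ge\theta\alpha_k+\tfrac12\alpha_k^2\ge\theta\alpha_k\ge\theta\alpha_{k+1},
\]
because $\alpha_k$ is nonincreasing.

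The main obstacle is the base case, since $v_0=0$. I handle it by starting the induction at $k=1$. There $v_1=\alpha_0'^2$, where $\alpha_0'$ is the step at index $0$, namely $\alpha_0k_0^{-a}$. We need $v_1\ge\theta\alpha_1$, and since $\alpha_1\le\alpha_0'$ this holds once $\theta\le\alpha_0'$. So set $\theta=\min\{1/(4\lambda),\alpha_0k_0^{-a}\}$ and $k_v=1$, which gives $c_v=\theta$. The constant depends on $k_0$, which is allowed because $k_0$ is fixed.

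For \eqref{eq:gaussian-p-lower}, Gaussian scaling gives
\[
  \E|X_k|^p=v_k^{p/2}\,\E|Z|^p,\qquad Z\sim\mathcal N(0,1).
\]
Here $\E|Z|^p=2^{p/2}\Gamma((p+1)/2)/\sqrt\pi>0$. The variance lower bound then gives $\E|X_k|^p\ge c_v^{p/2}\E|Z|^p\,\alpha_k^{p/2}$, so we may take $c_p=c_v^{p/2}\E|Z|^p$.
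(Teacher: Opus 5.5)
Your proof is correct, and for the lower bound it takes a genuinely different route from the paper. The paper unrolls the variance recursion into $v_k=\sum_{i<k}\alpha_i^2\prod_{\ell=i+1}^{k-1}(1-\lambda\alpha_\ell)^2$ and keeps only the last $m_k=\lfloor\delta/\alpha_k\rfloor$ terms. Because $a<1$, $m_k=o(k)$, so on this window $\alpha_i$ is comparable to $\alpha_k$ and the products are bounded below by $e^{-8\lambda\delta}$. This gives $v_k\ge\frac{\delta e^{-8\lambda\delta}}{8}\alpha_k$ for all large $k$.

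You instead run a forward induction $v_k\ge\theta\alpha_k$. It closes because the stepsize comparison needed in the lower direction is just monotonicity, $\alpha_{k+1}\le\alpha_k$. The upper bound, by contrast, needs $\alpha_{k+1}\ge\alpha_k(1-\tfrac{\lambda}{2}\alpha_k)$ and hence large $\bar k$. The hypothesis $\lambda\alpha_k\le 1/2$ is exactly what gives $1-2\lambda\alpha_k\ge0$ in your inductive step.

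Your argument is shorter, holds for every $k\ge1$, and uses only that $\alpha_k$ is nonincreasing, not its polynomial form. Its constant, however, depends on the initial step $\alpha_0k_0^{-a}$ through the base case. The paper's windowing argument gives a constant independent of $k_0$ and of the early steps. It also shows the mechanism: the variance is generated over the last $\asymp 1/\alpha_k$ steps, the fast relaxation time. The price is a large-$k$ threshold.

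One small caveat on the upper bound. Citing Lemma~\ref{lem:fast-moments} directly relies on the paper's standing largeness condition on $k_0$, namely $a/\bar k\le(\lambda/2)\alpha_k$ for all $k$. That is stronger than the stated hypothesis $\lambda\alpha_k\le 1/2$. Since the claim is only for $k\ge k_v$, you should do what the paper does. Run that second-moment induction from a sufficiently large index, and either take $k_v$ at least that index or enlarge $C_v$ to cover the finitely many earlier ones.
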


\begin{proof}
Because the recursion is linear and the driving noises are Gaussian, $X_k$ is a centered Gaussian random variable. Let $v_k=\E[X_k^2]$. Then
\begin{equation}
  v_{k+1}=(1-\lambda\alpha_k)^2v_k+\alpha_k^2. \label{eq:v-recursion}
\end{equation}
The upper bound $v_k\le C_v\alpha_k$ follows from the same second-moment induction used in Lemma \ref{lem:fast-moments}; we repeat the one-line argument. Since $(1-\lambda\alpha_k)^2\le1-\lambda\alpha_k$ and $\alpha_{k+1}\ge\alpha_k(1-(\lambda/2)\alpha_k)$ for all large $k$, choosing $C_v\ge2/\lambda$ yields
\[
  v_k\le C_v\alpha_k
  \quad\Rightarrow\quad
  v_{k+1}\le C_v\alpha_k(1-\lambda\alpha_k)+\alpha_k^2
  \le C_v\alpha_{k+1}.
\]
Increasing $C_v$ handles finitely many initial indices.

For the lower bound, unroll \eqref{eq:v-recursion}:
\begin{equation}
  v_k=\sum_{i=0}^{k-1}\alpha_i^2\prod_{\ell=i+1}^{k-1}(1-\lambda\alpha_\ell)^2. \label{eq:v-unrolled}
\end{equation}
Fix a small constant $\delta\in(0,1)$ to be chosen below and define
\[
  m_k\coloneqq\left\lfloor \frac{\delta}{\alpha_k}\right\rfloor.
\]
Because $a<1$, $m_k=o(k)$, so for all large $k$ we have $m_k\le k/2$. For $i\in\{k-m_k,\ldots,k-1\}$ and all large $k$, monotonicity of $\alpha_j$ and $m_k=o(k)$ imply
\begin{equation}
  \frac12\alpha_k\le \alpha_i\le 2\alpha_k. \label{eq:alpha-window}
\end{equation}
Also, since $\lambda\alpha_\ell\le1/2$, the inequality $\log(1-u)\ge-2u$ for $u\in[0,1/2]$ gives
\[
\begin{aligned}
  \prod_{\ell=i+1}^{k-1}(1-\lambda\alpha_\ell)^2
  &=\exp\left(2\sum_{\ell=i+1}^{k-1}\log(1-\lambda\alpha_\ell)\right)\\
  &\ge \exp\left(-4\lambda\sum_{\ell=i+1}^{k-1}\alpha_\ell\right).
\end{aligned}
\]
For $i\ge k-m_k$, \eqref{eq:alpha-window} gives
\[
  \sum_{\ell=i+1}^{k-1}\alpha_\ell\le 2m_k\alpha_k\le 2\delta.
\]
Therefore every product in the last $m_k$ terms is at least $e^{-8\lambda\delta}$. Restricting the sum \eqref{eq:v-unrolled} to these terms,
\[
  v_k\ge \sum_{i=k-m_k}^{k-1}\alpha_i^2 e^{-8\lambda\delta}
  \ge m_k\left(\frac12\alpha_k\right)^2e^{-8\lambda\delta}.
\]
For all large $k$, $m_k\ge \delta/(2\alpha_k)$, so
\[
  v_k\ge \frac{\delta e^{-8\lambda\delta}}{8}\alpha_k.
\]
This proves the lower variance bound.

Finally, if $X_k\sim N(0,v_k)$, then for any $p>0$,
\[
  \E|X_k|^p = v_k^{p/2}\E|Z|^p,
  \qquad Z\sim N(0,1).
\]
Using $v_k\ge c_v\alpha_k$ proves \eqref{eq:gaussian-p-lower}.
\end{proof}

\begin{proof}[Proof of Theorem \ref{thm:lower}]
Let
\[
  \phi_{i,k}\coloneqq \prod_{\ell=i}^{k-1}(1-b\beta_\ell),\qquad 0\le i\le k,
\]
with the empty product $\phi_{k,k}=1$. Unrolling \eqref{eq:lower-slow} gives
\begin{equation}
  Y_k=\gamma\sum_{j=0}^{k-1}\beta_j\phi_{j+1,k}|X_j|^{1+\rho}
      +\tau\sum_{j=0}^{k-1}\beta_j\phi_{j+1,k}\eta_{j+1}. \label{eq:Y-lower-unrolled}
\end{equation}
Denote the two sums by $P_k$ and $N_k$, respectively, so that $Y_k=\gamma P_k+\tau N_k$.

The process $P_k$ is a measurable function of $\xi_1,\ldots,\xi_k$, whereas $N_k$ is a linear function of $\eta_1,\ldots,\eta_k$. The two noise sequences are independent, so $P_k$ and $N_k$ are independent. Moreover, $\E N_k=0$. Hence
\begin{equation}
  \E[Y_k^2]
  =\gamma^2\E[P_k^2]+\tau^2\E[N_k^2]
  \ge \gamma^2(\E P_k)^2+\tau^2\E[N_k^2]. \label{eq:lower-split}
\end{equation}
We lower bound the two terms separately.

\paragraph{Nonlinear bias term.}
For $j\in\{\lfloor k/2\rfloor,\ldots,k-1\}$ and all large $k$, the product $\phi_{j+1,k}$ is bounded below by a positive constant. To see this, use $\log(1-u)\ge-2u$ for $u\in[0,1/2]$:
\[
\begin{aligned}
  \phi_{j+1,k}
  &=\exp\left(\sum_{\ell=j+1}^{k-1}\log(1-b\beta_\ell)\right)\\
  &\ge \exp\left(-2b\sum_{\ell=j+1}^{k-1}\beta_\ell\right)
   \ge \exp\left(-2b\beta_0\sum_{\ell=\lfloor k/2\rfloor}^{k-1}\frac1{\ell+k_0}\right)\\
  &\ge \exp(-C_b)\eqqcolon c_\phi>0.
\end{aligned}
\]
Also, for these $j$, $\beta_j\ge c/k$ and $\alpha_j\ge c\alpha_k$. Lemma \ref{lem:variance-lower} with $p=1+\rho$ gives
\[
  \E|X_j|^{1+\rho}\ge c\alpha_j^{(1+\rho)/2}
  \ge c\alpha_k^{(1+\rho)/2}.
\]
Therefore
\[
\begin{aligned}
  \E P_k
  &=\sum_{j=0}^{k-1}\beta_j\phi_{j+1,k}\E|X_j|^{1+\rho}\\
  &\ge \sum_{j=\lfloor k/2\rfloor}^{k-1}\beta_j\phi_{j+1,k}\E|X_j|^{1+\rho}\\
  &\ge c\sum_{j=\lfloor k/2\rfloor}^{k-1}\frac1{k}\alpha_k^{(1+\rho)/2}
   \ge c\alpha_k^{(1+\rho)/2}.
\end{aligned}
\]
Since $\alpha_k=\alpha_0\bar k^{-a}$, this implies
\begin{equation}
  (\E P_k)^2\ge c\bar k^{-a(1+\rho)}. \label{eq:P-lower}
\end{equation}

\paragraph{Slow-noise term.}
Because the $\eta_j$ are independent standard Gaussians,
\[
  \E[N_k^2]=\sum_{j=0}^{k-1}\beta_j^2\phi_{j+1,k}^2.
\]
Restricting again to $j\in\{\lfloor k/2\rfloor,\ldots,k-1\}$ and using $\phi_{j+1,k}\ge c_\phi$ and $\beta_j\ge c/k$,
\begin{equation}
  \E[N_k^2]
  \ge \sum_{j=\lfloor k/2\rfloor}^{k-1}\frac{c}{k^2}
  \ge c k^{-1}
  \ge c\bar k^{-1}. \label{eq:N-lower}
\end{equation}

Substituting \eqref{eq:P-lower} and \eqref{eq:N-lower} into \eqref{eq:lower-split} proves
\[
  \E[Y_k^2]\ge c\gamma^2\bar k^{-a(1+\rho)}+c\tau^2\bar k^{-1}
\]
for all sufficiently large $k$. This is \eqref{eq:lower-bound}.
\end{proof}

\section{Additional comments on the threshold}\label{app:threshold}

The upper and lower bounds together show that the exponent
\[
  \min\{1,a(1+\rho)\}
\]
is sharp for the normal-form class. This section records two simple consequences.

\begin{proposition}[Merely Lipschitz remainders cannot decouple]\label{prop:lipschitz}
Set $\rho=0$. In the scalar lower-bound instance with $\gamma>0$, there exists $c>0$ such that
\[
  \E[Y_k^2]\ge c\bar k^{-a}
\]
for all large $k$. Hence no theorem that assumes only a Lipschitz nonlinear remainder can guarantee the decoupled $k^{-1}$ rate uniformly over the normal-form class when $a<1$.
\end{proposition}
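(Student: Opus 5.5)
This is essentially a specialization of Theorem~\ref{thm:lower} to $\rho=0$, so I would derive it from that theorem rather than redo the analysis. Take the scalar instance \eqref{eq:lower-fast}--\eqref{eq:lower-slow} with $\rho=0$, $\gamma>0$, and any $\tau\ge0$ (for example $\tau=0$). The slow input is then $\gamma|X_k|$. The map $R(x)=\gamma|x|$ satisfies Assumption~\ref{ass:remainder} with $\rho=0$ and $L_R=\gamma$, since $|R(x)|=\gamma|x|^{1+0}$. The matrices $A=\lambda$ and $B=b$ satisfy \eqref{eq:matrix-contraction}, and standard Gaussian noise has all conditional moments finite. So the instance belongs to the normal-form class, and if $b\beta_0>1/2$ it also satisfies Assumption~\ref{ass:strength}.

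Next I apply \eqref{eq:lower-bound} with $\rho=0$. Dropping the nonnegative $\tau^2$ term gives $\E[Y_k^2]\ge c\gamma^2\bar k^{-a}$ for all $k\ge k_1$, and absorbing $\gamma^2$ into $c$ yields the displayed bound. If I wanted to check this directly, the key step is the bias term of Theorem~\ref{thm:lower}. Lemma~\ref{lem:variance-lower} with $p=1$ gives $\E|X_j|\ge c\alpha_j^{1/2}$. Summing $\beta_j\phi_{j+1,k}\E|X_j|$ over the window $j\in[\lfloor k/2\rfloor,k-1]$ gives $\E Y_k\ge c\alpha_k^{1/2}$, and Jensen's inequality then gives $\E Y_k^2\ge(\E Y_k)^2\ge c\bar k^{-a}$.

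For the consequence, suppose some theorem guaranteed $\E\norm{Y_k}^2\le C\bar k^{-1}$ uniformly over the class with $\rho=0$. It would apply to this instance. Then $c\bar k^{-a}\le C\bar k^{-1}$ for all large $k$, i.e.\ $\bar k^{1-a}\le C/c$. This is impossible because $a<1$ makes $\bar k^{1-a}\to\infty$.

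There is no real obstacle here. The only point needing care is membership in the class. The remainder $|x|$ is Lipschitz but not differentiable at $0$, which is allowed, since Assumption~\ref{ass:remainder} is only a growth bound. One must also choose $b\beta_0>1/2$ so the instance is in the same regime as Theorem~\ref{thm:upper}.
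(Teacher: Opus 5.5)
Your proposal is correct and matches the paper's proof, which likewise obtains the bound by specializing Theorem~\ref{thm:lower} to $\rho=0$ and discarding the nonnegative $\tau^2\bar k^{-1}$ term. Your explicit membership check (including the remark that $b\beta_0>1/2$ places the instance in the regime of Assumption~\ref{ass:strength}) and your contradiction argument, $\bar k^{1-a}\to\infty$, simply spell out the ``Hence'' clause, which the paper leaves implicit.
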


\begin{proof}
This is Theorem \ref{thm:lower} with $\rho=0$ and with the nonnegative $\tau^2\bar k^{-1}$ term discarded.
\end{proof}

\begin{proposition}[Second-order remainders decouple for $a>1/2$]\label{prop:second-order}
Set $\rho=1$. Under Assumptions \ref{ass:stability}--\ref{ass:strength}, Theorem \ref{thm:upper} gives
\[
  \E\norm{Y_k}^2\le C\bar k^{-1}.
\]
\end{proposition}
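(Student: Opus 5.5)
This proposition is a direct specialization of Theorem~\ref{thm:upper}, so the plan is to set $\rho=1$ and check that the rate collapses to the noise floor. With $\rho=1$ we have $s=a(1+\rho)=2a$. Because $a\in(1/2,1)$ by the standing stepsize convention \eqref{eq:stepsizes}, this gives $s=2a>1$.

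The steps are as follows.

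\begin{enumerate}[leftmargin=*]
\item Assumption~\ref{ass:remainder} holds with $\rho=1$, and Assumptions~\ref{ass:stability} and \ref{ass:strength} are given. Theorem~\ref{thm:upper} therefore applies and yields
\[
  \E\norm{Y_k}^2\le C\left(\bar k^{-1}+\bar k^{-2a}\right).
\]
\item Since $\bar k\ge k_0\ge 2>1$ and $2a>1$, we have $\bar k^{-2a}\le\bar k^{-1}$.
\item Absorbing the factor $2$ into the constant gives $\E\norm{Y_k}^2\le 2C\bar k^{-1}$.
\end{enumerate}

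Equivalently, this is the second case of Corollary~\ref{cor:phase}, since $a(1+\rho)=2a\ge1$. One can also read it off the threshold \eqref{eq:threshold}: $\rho=1\ge 1/a-1$ holds precisely because $a\ge1/2$.

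There is no real obstacle here. The only point to check is that strict inequality $a>1/2$ is not even needed for the comparison $\bar k^{-2a}\le\bar k^{-1}$, because $a\ge 1/2$ already suffices. The constant $C$ depends on the same parameters as in Theorem~\ref{thm:upper}, with $\rho=1$ fixed.
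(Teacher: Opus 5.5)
Your proposal is correct and is essentially the paper's own proof: with $\rho=1$ one has $s=2a>1$, so $\bar k^{-s}\le\bar k^{-1}$ and Theorem~\ref{thm:upper} reduces to $C\bar k^{-1}$. Your side remarks are also accurate: $a\ge 1/2$ already suffices for this comparison, and it is exactly the threshold condition $\rho\ge 1/a-1$ at $\rho=1$.
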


\begin{proof}
When $\rho=1$, $s=2a$. Since $a>1/2$, $s>1$. Thus $\bar k^{-s}\le \bar k^{-1}$, and Theorem \ref{thm:upper} gives the result.
\end{proof}

\end{document}